  \providecommand\BibTeX{{%
    \normalfont B\kern-0.5em{\scshape i\kern-0.25em b}\kern-0.8em\TeX}}}
\newcommand{\eat}[1]{}
\newcommand{\class}[1]{{\ensuremath{\mathsf{#1}}}}
\newcommand{\ignore}[1]{}
\newcommand{\nt}[1]{{\color{blue}\footnotesize[cw: #1]}}
\newcommand{\kartik}[1]{{\color{brown} \footnotesize[Kartik: #1] }}
\newcommand{\re}[1]{{\color{black} #1} }
\newcommand{\boldparagraph}[1]{\vspace{7pt}\noindent\textbf{#1}}
\newcommand{\system}{SOGDB\xspace}
\newcommand{\appsystem}{DP-Sync\xspace}
\newcommand{\cs}{server\xspace}
\newcommand{\user}{owner\xspace}
\newcommand{\as}{analyst\xspace}
\newcommand{\sync}{\ensuremath{\class{Sync}}\xspace}
\newcommand{\setup}{\prod_{\ensuremath{\class{Setup}}}\xspace}
\newcommand{\update}{\prod_{\ensuremath{\class{Update}}}\xspace}
\newcommand{\query}{\prod_{\ensuremath{\class{Query}}}\xspace}
\newcommand{\lsetup}{\mathcal{L}_{\ensuremath{\class{S}}}\xspace}
\newcommand{\lupdate}{\mathcal{L}_{\ensuremath{\class{U}}}\xspace}
\newcommand{\lquery}{\mathcal{L}_{\ensuremath{\class{Q}}}\xspace}
\newcommand{\dec}{\ensuremath{\class{Dec}}\xspace}
\newcommand{\len}{\ensuremath{\class{len}}\xspace}
\newcommand{\writec}{\ensuremath{\class{write}}\xspace}
\newcommand{\readc}{\ensuremath{\class{read}}\xspace}
\newcommand{\ptb}{\ensuremath{\class{Perturb}}\xspace}
\newcommand{\lap}{\ensuremath{\class{Lap}}\xspace}
\newcommand{\encs}{\ensuremath{\class{EncSize}}\xspace}
\newcommand{\upatt}{\ensuremath{\class{UpdtPatt}}\xspace}
\DeclarePairedDelimiter\floor{\lfloor}{\rfloor}
\newtheorem{theorem}{Theorem}
\newtheorem{exmp}{Example}[section]
\begin{document}
\fancyhead{}
%%
%% The "title" command has an optional parameter,
%% allowing the author to define a "short title" to be used in page headers.
\title{\appsystem: Hiding Update Patterns in Secure Outsourced Databases with Differential Privacy}

%%
%% The "author" command and its associated commands are used to define
%% the authors and their affiliations.
%% Of note is the shared affiliation of the first two authors, and the
%% "authornote" and "authornotemark" commands
%% used to denote shared contribution to the research.
\author{Chenghong Wang}
\affiliation{%
  \institution{Duke University}
}
\email{chwang@cs.duke.edu}

\author{Johes Bater}
\affiliation{%
  \institution{Duke University}
  }
\email{johes.bater@duke.edu}

\author{Kartik Nayak}
\affiliation{%
  \institution{Duke University}
}
\email{kartik@cs.duke.edu}

\author{Ashwin Machanavajjhala}
\affiliation{%
 \institution{Duke University}
 }
 \email{ashwin@cs.duke.edu}

%%
%% By default, the full list of authors will be used in the page
%% headers. Often, this list is too long, and will overlap
%% other information printed in the page headers. This command allows
%% the author to define a more concise list
%% of authors' names for this purpose.
\renewcommand{\shortauthors}{Trovato and Tobin, et al.}

%%
%% The abstract is a short summary of the work to be presented in the
%% article.
\begin{abstract}
%\nt{Just curious, does \appsystem look like a good name? Or shall we have another name for it? Also do we need to change the title?}
%\am{Alternate title: Hiding Update Patterns in Secure Outsourced Databases with Differential Privacy. Alternate System name: DP-Sync, since the key innovation is around DP synchronization algorithms. }
%Secure outsourced growing database stands for the group of secure outsourced databases for dynamically growing data.  \am{Replace previous line with: In this paper, we consider privacy preserving update strategies for secure outsourced growing database.} 
In this paper, we consider privacy-preserving update strategies for secure outsourced growing databases. Such databases allow append-only data updates on the outsourced data structure while analysis is ongoing. Despite a plethora of solutions to securely outsource database computation, existing techniques do not consider the information that can be leaked via update patterns.%\am{Replace previous 2 sentences with: Despite a plethora of solutions to securely outsource database computation, existing techniques do not consider the information that can be leaked via updates.} 
~To address this problem, we design a novel secure outsourced database framework for growing data, \appsystem, which interoperate with a large class of existing encrypted databases and supports efficient updates while providing differentially-private guarantees for any single update. %\am{Replace ".. which guarantees ..." with: which allows efficient updates with a differential privacy bound on leakage about any one update.}
We demonstrate \appsystem's practical feasibility in terms of performance and accuracy with extensive empirical evaluations on real world datasets. \vspace{-1mm}
\end{abstract}

%%
%% The code below is generated by the tool at http://dl.acm.org/ccs.cfm.
%% Please copy and paste the code instead of the example below.
%%
\begin{CCSXML}
<ccs2012>
<concept>
<concept_id>10002978.10003018.10003019</concept_id>
<concept_desc>Security and privacy~Data anonymization and sanitization</concept_desc>
<concept_significance>500</concept_significance>
</concept>
<concept>
<concept_id>10002978.10003018.10003020</concept_id>
<concept_desc>Security and privacy~Management and querying of encrypted data</concept_desc>
<concept_significance>500</concept_significance>
</concept>
</ccs2012>
\end{CCSXML}
\ccsdesc[500]{Security and privacy~Data anonymization and sanitization}
\ccsdesc[500]{Security and privacy~Management and querying of encrypted data}

%%
%% Keywords. The author(s) should pick words that accurately describe
%% the work being presented. Separate the keywords with commas.
\vspace{-1mm}
\keywords{update pattern, encrypted database, differential privacy}

\setlength{\textfloatsep}{1pt plus 0.0pt minus 0.0pt}
\setlength{\floatsep}{1pt plus 0.0pt minus 0.0pt}
\setlength{\intextsep}{1pt plus 0.0pt minus 0.0pt}

%%
%% This command processes the author and affiliation and title
%% information and builds the first part of the formatted document.
%\input{revision_letter}
\maketitle
\section{Introduction}
\label{sec:intro}
\ignore{
\draft{A draft story for intro

\begin{itemize}
    \item Point out the problem that though many works considered security for dynamic databases with secure updates. Most of them ignore the potential risk raised by update history (or insertion pattern).
    
    \item Then we give examples to show one can infer critical information (privacy breach) with only access to the update history and without even posting a single query.
    
    \item After describing the example, we might say that this privacy breach of use update history is not limited to the given example, there are many real-world use cases that could be affected. 
    
    \item Then we may list the options to hide the update pattern. Obviously, one way to hide the update pattern is to modify existed encrypted database scheme, and force the update protocol leaks nothing. (This can be difficult, there are only 2 known backward privacy SSE that can somehow limit the leakage of update history but still reveal insertion patterns.) Or from other side, define sync strategies to restrict the \user's update behavior. The second option will be more general and can direct compatible with many existed solutions. But design of a ``good'' sync algorithm is a 3-way tradeoff. So we need to consider methods that can be tuned among privacy, accuracy and perf.
    
    \item Inspired by the idea, we introduce our framework \appsystem that ``helps'' existed edb to handle update pattern leakage. We may briefly describe general design or approach of \appsystem.
    
    \item Summarize the contributions.

    Few of them considered 
\end{itemize}}

\draft{\begin{itemize}
    \item define updates with sync. show two sync algorithms that satisfies this defn and provide a trade-off
    \item Show that this can be merged with many existing secure databases, but not all
    \item evaluate for taxi dataset and show it is performant, provides privacy, and does not kill accuracy. Maybe present a couple of numbers.
\end{itemize}}
}
%\am{I wonder if we should call out and define the "Synchronization" or "Update Synchronization" problem? We talk about synchronization in quite a few places, but we do not formally or informally define it as the problem we are solving. }
%{\bf Claim}: The pattern of Data Synchronization between local data and the outsourced database can leak private information of certain individuals.
In the last couple of decades, organizations have been rapidly moving towards outsourcing their data to the cloud. While this brings in inherent advantages such as lower costs, high availability, and ease of maintenance, this also results in privacy concerns for organizations. Hence, many solutions leverage cryptography and/or techniques such as differential privacy to keep the data private while simultaneously allowing secure query processing on this data~\cite{hacigumucs2002executing, hore2012secure, popa2012cryptdb, arasu2013oblivious, bellare2007deterministic, chowdhury2019cryptc,eskandarian2017oblidb, stefanov2014practical, bater2018shrinkwrap, poddar2016arx}. However, while most practical systems require us to maintain dynamic databases that support updates, research in the space of private database systems has focused primarily on static databases~\cite{hacigumucs2002executing, hore2012secure, arasu2013oblivious, kellaris2017accessing, chowdhury2019cryptc, bater2018shrinkwrap}. There have been a few works which consider private database updates and answering queries on such dynamic databases~\cite{hahn2014searchable, kamara2018sql, popa2012cryptdb, curtmola2011searchable, stefanov2014practical, eskandarian2017oblidb, agarwal2019encrypted}. However, none of these works consider the privacy of \emph{when} a database is updated.  In this work, we consider the problem of hiding such database update patterns.

\eat{\re{For example, ~\cite{agarwal2019encrypted} and ~\cite{lecuyer2019sage} both propose systems that prevent query outputs from violating privacy guarantees, but neither account for information leaked by database updates that occur even before query execution. In this work, we  consider the problem of preventing privacy violations due to these update patterns. }}

Let us consider the following example where an adversary can breach privacy by using the timing information of updates. \re{Consider an IoT provider that deploys smart sensors (i.e., security camera, smart bulb, WiFi access point, etc.) for a building. The provider also creates a database to back up the sensors’ event data. For convenience, the database is maintained by the building administrator, but is encrypted to protect the privacy of people in the building. By default, the sensor will backup immediately when any new sensor event (i.e. a new connection to WiFi access point) occurs. Suppose that at a certain time, say 7:00 AM, only one person entered the building. Afterwards, the building admin observes three backup requests posted at times 7:00:00, 7:00:10, 7:00:20, respectively. Also suppose that the admin has access to additional non-private building information, such as that floor 3 of this building is the only floor which has three sensors with a 10 second walking delay (for an average person). Then, by looking at the specific times of updates (10 second delays) and the number of updates, the building admin can learn private information about the activity (i.e. the person went to the 3rd floor), without ever having to decrypt the stored data. This type of attack generalizes to any event-driven update where the event time is tied to the data upload time. In order to prevent such attacks, we must decouple the relationship between event and upload timings.

%collects information about a user's daily schedule, such as what hours they are typically at home, that the user can view later through an online dashboard. This information is collected locally by the sensors when a user enters or leaves the home then is immediately synchronized with the provider's cloud storage in order to populate the dashboard with up to date information. 

%To protect user privacy, the provider ensures that any data uploaded and stored on the cloud is fully encrypted end-to-end and that no one but the original user, not even the provider who hosts the cloud, can see the un-encrypted data. 

%However, this is not enough to ensure privacy. Suppose that a provider employee, Bob, wants to learn the private daily schedule of a user, Alice. Bob has access to the the encrypted cloud database, but cannot decrypt Alice's stored data to see its contents. If however, Bob can access side channel data available to the provider, such as sensor IP addresses or sensor registration histories, he can determine which updates in the database belong to Alice's sensor. Since her sensor uploads data immediately whenever Alice enters or leaves the home, Bob can look at the sensor update times to learn Alice's daily schedule, without ever having to decrypt her stored data. This type of attack generalizes to any event-driven update where the event time is a public function of the upload time. In order to prevent this class of attacks, we must decouple the relationship between event and upload timings.
}

\eat{
Consider an IoT provider that deploys smart sensors (i.e., Wifi access point, smart bulb, thermal sensor for person counting, etc) in a shopping mall with 2 buildings. The provider also creates two separate databases (DB1 and DB2) to back up the sensors’ event data for each building in encrypted form. By default, the data backup policy is set to backup as long as a new sensor event (i.e. a new connection to WIFI access point) occurred. Suppose at time $t$, only one customer, say Alice, entered the mall. Meanwhile, the database manager observes at time $t$, DB2 receives a backup request while DB1 does not, then the manager knows immediately that Alice went to Building 2, which is a breach of Alice's privacy. To protect the privacy of customers such as Alice from the database manager, IoT sensors must decouple the relationship between the sensor event time and database record insertion time.

Consider a taxi provider, such as New York Yellow Cab, that wants to guarantee privacy to its customers. Suppose the taxi provider maintains an encrypted Yellow Cab database on the cloud with Cloud Provider A. Whenever a passenger starts a taxi trip, Yellow Cab will add this trip information to the cloud database. For simplicity, assume that at any time, at most one passenger starts a trip. Suppose that Alice starts a trip at a specific time, say 11 am, from a sensitive location, say a courthouse. Then, a record `Alice started a trip from the courthouse at 11 am’ will immediately be added to the database. While the content of the record itself may be encrypted, the record insertion time is not hidden from Cloud Provider A. Since the pickup time is directly tied to insertion time, Cloud Provider A can use the insertion time to learn the sensitive pickup time, even though the pickup time field is encrypted in the record. Furthermore, Cloud Provider A can combine the pickup time data with other side information, such as courthouse trial records, to deduce the identity of Alice, breaching her privacy. To protect the privacy of customers such as Alice from Cloud Provider A, Yellow Cab must decouple the relationship between taxi pickup time and database record insertion time. 
}

\re{
%\ashwin{which in this case is exactly the same as the sensitive pickup time field in the data record being inserted. Knowledge of this pickup time and the taxi this update arose from could be combine with other side information to breach Alice's privacy}. In the presence of additional information,\kartik{what specific info?}\nt{From 10:00am to 12:00pm there is only one pickup happens around the hospital area?} this can potentially be linked back to the fact that Alice indeed visited the hospital at 11 am, which is a breach of Alice’s privacy.

%Ideally, if an update to the database at 11 am did not depend on whether Alice made a trip, Alice would not have lost her privacy. 

There are two straightforward solutions to solve this concern. The first option is to never upload any sensor data at all. While such a solution does provide necessary privacy, it does not provide us with the functionality of a database that supports updates. If an employee from the IoT provider queries the database to obtain, for example, the number of sensor events happened in a day, she will receive an inaccurate result. A second option is to back up the sensor event record at each time unit, independent of whether the sensor event actually occurred or not. Again, this does solve the privacy concern since the update does not depend on the sensor events at all. However, this introduces performance concerns: If sensor events occur relatively infrequently, then most updates are likely to be empty, or ``dummy'', updates, meaning that the provider will waste valuable resources on unnecessary computation. The above examples illustrate the 3-way trade-off between privacy, accuracy, and performance in the database synchronization problem. Each of the three approaches we discussed, immediate synchronization, no synchronization, and every time unit synchronization, achieves precisely two of the three properties, but not the third.

%The above example shows that there exists a 3-way trade-off between privacy, accuracy, and performance for the database synchronization problem. Each of the three solutions we discussed achieves precisely two of the three properties, but not the third one.

 %A second option is to add a record to the database at every time unit independent of whether a sensor event really occurred. Again, this does solve Alice’s privacy concern since the update does not depend on Alice's schedule. However, this introduces performance concerns: If sensor events occur relatively infrequently, then most updates are likely to be empty, or "dummy", updates, meaning that the provider will waste valuable resources on unnecessary computation. The above examples illustrate the 3-way trade-off between privacy, accuracy, and performance in the database synchronization problem. Each of the three approaches we discussed, immediate synchronization, no synchronization, and every time unit synchronization, achieves precisely two of the three properties, but not the third. %\kartik{can we claim all three not achievable?}

}

In this work, we build \appsystem, an append-only database outsourced by a data owner to one or more untrusted cloud service providers (server). In addition, a trusted \as, possibly the owner, is allowed to query the database at any point in time.  
To ensure consistency of the outsourced data, the \user synchronizes local records and updates the outsourced data. However, making updates on outsourced data structures may leak critical information. 
For instance, the \cs can potentially detect the size of synchronized records~\cite{bost2017forward, kamara2019computationally, amjad2019forward, poddar2020practical}. Cryptographic techniques such as ORAMs~\cite{stefanov2014practical} or structured encryption~\cite{hahn2014searchable} prevent leaking critical information on updates. However, all these methods are primarily designed to ensure that when an update occurs, attackers cannot learn sensitive information by observing changes in the outsourced data structure and not \emph{when} these changes happen. If the adversary/cloud server has access to the exact time of the updates, even if the system employs the techniques described above to protect individual updates, it can still result in privacy breaches of \user's data. The goal of \appsystem is to prevent such an update pattern leakage while still being performant and accurate. We now elaborate on our key contributions:

\vspace{-2mm}
\boldparagraph{Private update synchronization.} We introduce and formalize the problem of synchronizing updates to an encrypted database while hiding update patterns. Our goal is to provide a bounded differentially-private guarantee for any single update made to the cloud server. To navigate the 3-way trade-off between privacy, accuracy, and performance, we develop a framework where users can obtain customizable properties by modifying these parameters.

\vspace{-2mm}
\boldparagraph{Differentially-private update synchronization algorithms.} We provide two novel synchronization algorithms, DP-Timer and DP-ANT, that can obtain such trade-offs. The first algorithm, DP-Timer algorithm, parameterized by time $T$, synchronizes updates with the server every $T$ time. Thus, for a fixed parameter  $T$, to achieve a high amount of privacy, the algorithm asymptotes to never update the server (and hence, will not achieve accuracy). As we weaken our privacy, we can gracefully trade it for better accuracy. Similarly, by modifying $T$, we can obtain different trade-offs between accuracy and performance. The second algorithm DP-ANT, parameterized by a threshold $\theta$, synchronizes with the server when there are approximately $\theta$ records to update. Thus, for a fixed parameter $\theta$, when achieving high accuracy, the algorithm asymptotes to updating the server at each time unit and thus, poor performance. By reducing the accuracy requirement, we can gracefully trade it for better performance. Moreover, we can modify the parameter $\theta$ to obtain different trade-offs. Comparing the two algorithms, DP-ANT dynamically adjusts its synchronization frequency depending on the rate at which new records are received while DP-Timer adjusts the number of records to be updated each time it synchronizes.
%\kartik{fill}\nt{This should be SET}.

\vspace{-2mm}
\boldparagraph{Interoperability with existing encrypted databases.} We design our update synchronization framework such that it can interoperate with a large class of existing encrypted database solutions. To be concrete, we provide the precise constraints that should be satisfied by the encrypted database to be compatible with \appsystem, as well as classify encrypted databases based on what they leak about their inputs. 

\vspace{-2mm}
\boldparagraph{Evaluating \appsystem with encrypted databases.} We implement multiple instances of our synchronization algorithms with two encrypted database systems: Crypt$\epsilon$ and ObliDB. We evaluate the performance of the resulting system and the trade-offs provided by our algorithms on the New York City Yellow Cab and New York City Green Boro taxi trip record dataset. The evaluation results show that our DP strategies provide bounded errors with only a small performance overhead, which achieve up to 520x better in accuracy than never update method and 5.72x improvement in performance than update every time approach.

%\kartik{present a high-level result?}
\eat{
\vspace{-1mm}
\boldparagraph{Summary of contributions.} To summarize, we make the following contributions: %\kartik{possibly remove?}
\begin{itemize}[noitemsep,nolistsep]
    \item We introduce the database update synchronization problem (Sections~\ref{sec:description}~and~\ref{sec:model}).
    \item We present two differentially-private update synchronization algorithms DP-Timer and DP-ANT (Section~\ref{sec:sync}).
    \item We describe how \appsystem can interoperate with existing encrypted databases (Section~\ref{sec:cmp}).
    \item We evaluate the performance of DP-sync with encrypted databases (Section~\ref{sec:exp}).
\end{itemize}
}

% pattern the  none of these works consider information leaked due to synchronization of the database. updates have been updates are performed to a fixed set of memory locations~\cite{cash2014dynamic, naveed2014dynamic, ghareh2018new}. Moreover, most works, when considering updates, ignore the times at which these updates are performed~\cite{kamara2018sql, popa2012cryptdb, curtmola2011searchable}. In this work, we consider securing updates to a growing outsourced append-only database using differential privacy techniques.

\ignore{
%\johes{Consider changing this example to taxis to match the experiment and maybe have a figure describing the example}
Let us consider the following example where an adversary can breach privacy by using the timing information of updates. Suppose there are two WiFi providers, $A$, and $B$  in an airport. Each provider owns several WiFi access points and an encrypted database for their access points to back up connection sessions. The airport helps maintain two databases $\mathcal{D}_A$ and $\mathcal{D}_B$ for providers $A$ and $B$, respectively. By default, the access points will back up the session as soon as it receives a new connection and remains inactive if no new connection is made. For simplicity, we assume that any access point has, at any time, at most one new connection. Suppose at some time, Alice enters the airport and connects to one of the airport WiFi services. At the same time, the airport's server administrator observes that a backup request arrives at $\mathcal{D}_A$. If the administrator has some prior knowledge that only one person has entered the airport area during that time, then the administrator immediately learns that Alice is a member of provider $ A $, which is a breach of Alice's privacy. 
 
Alice's privacy breach is caused the inappropriate backup policy (data update) used by the access points. The ``instant backup'' strategy provides the attacker with additional information to deduce the user's membership. Two straightforward solutions for this problem are to change the backup policy to either backup at every time unit or to never backup. In the former strategy, access points send a backup at each time unit, regardless of whether the node truly receives a new connection. This strategy solves the privacy issue but raises performance concerns at the same time. The ``backup at every time unit'' strategy keeps both access points %\kartik{what are nodes?} 
and databases busy with backups, taking up a lot of system resources. The other strategy, which is never to back up, also safeguards data privacy with no additional performance burden. But this approach completely sacrifices data accuracy or availability, i.e., if a user loses a session or switches access points, they must re-authenticate and establish a new session with the access points. 

%An optimization of ``backup every time'', %\kartik{the quotes should be ``blah''} is to backup a fixed size batch with certain intervals. Such a method raises two more problems: First, how to decide the interval length, if the interval set too long then lots of newly connected user who loses a session or switches access points must re-establish new sessions (i.e. re-authentication). Second, how to select the batch size? If the batch size is too small, the system can't handle burst connections. If it's too large, then the performance equivalents to "backup every time" strategy. 

% Therefore, it is non-trivial to address the privacy issue introduced by synchronization patterns while maintaining data availability and system performance.
The design of an appropriate synchronization protocol requires a 3-way tradeoff between privacy, accuracy, and performance. In this work, we propose a secure outsourced growing database framework, \appsystem, with a differentially-private record synchronization strategy. More specifically, the proposed framework allows the \user to temporarily store some records locally. When a data update is needed, the synchronization strategy counts how many records have been received since last update, distorts the count with Laplace noise to ensure differential privacy and sends exactly as many as records as the noisy count specifies, hiding the true count from the server. Our framework offers considerable accuracy, efficiently handles record updates, and guarantees that the server learns nothing beyond some differentially-private update ``leakage'' profiles. The main contributions of this work are summarized as follows:
\begin{itemize}
\item We introduce and formalize a new type of leakage called update pattern leakage, which reveals the entire data update history applied on certain outsourced databases. 
\item We propose \appsystem, a novel design and security model for secure outsourced growing databases using differentially-private update patterns.
\item We design novel differentially-private synchronization algorithms that protect individual record updates.
\item We implement \appsystem using our proposed design and evaluate the tradeoff between privacy, accuracy, and performance using real world taxi trip data.%, a SOGDB system that only leaks differentially-private information to the untrusted server and evaluate \appsystem under our proposed security model. %\johes{evaluation could be another bullet here}

\end{itemize}
}
\eat{
{\bf Example 1 - General Case} : In an airport, there are two WIFI providers $A$ and $B$. Each provider owns several WIFI access points as well as an encrypted database for their access points to back up connection sessions. The airport helps to maintain these two database $\mathcal{D}_A$ and $\mathcal{D}_B$ for providers $A$ and $B$, respectively. By default, the access points will back up the session as soon as it receives 
a new connection and remains nonaction if no new connection is made. In addition, we assume that any wireless point has, at any timestamp, and at most, only one new connection. Now let's assume the following scenario, at some point of time, user $U$ enters the airport and connects to one of the airport WIFI service. At the same time, the airport's server administrator observed that a backup request arrives at $\mathcal{D}_A$. If the administrator has some priori knowledge that only one user has entered the airport area during that time. Then the administrator learns immediately that $U$ is a member of provider $A$, which is a serious privacy breach. The reason for the privacy breach is mainly due to the inappropriate backup policy 
taken by access points. When the $U$ enters the area, $A$'s access point receives a new connection and backup immediately, while $B$'s point chooses to silence as it receives nothing. Such an "instant backup" strategy thus providing the attacker with additional information to deduce the user's membership information. %\kartik{Seems like the second access point is not used any where? Is the access point a trusted client? the way we fix this is by assuming a client storage of 1. So, this example doesn't motivate what we are trying to do.}
%\nt{I modified the example 1, and state that it is because improper update strategy taken by access points that results in privacy breach }

{\bf Example 2 - A fix to Example 1 } : Consider the same setting as Example 1, to solve the privacy issue. The backup policy for all access points is now configured as "backup at any time", which can be interpreted as a policy that sends a backup request at any time, regardless of whether the node truly receives a new connection. Wireless access points will choose to back up new session if they do received new connections or backup a dummy session if they receives nothing. This backup strategy solves the privacy breach issue mentioned earlier as at any moment, the access points of both providers $A$ and $B$ submit backup requests to the backup database, which provides plausible deniability for $U$'s membership information. However, this strategy, which 
appears to solve the privacy issue well, raises performance concern at the same time. The "backup at any time" strategy keeps both nodes and databases busy with backups, taking up a lot of system resources. \kartik{Minor: it looks like access point B is not used here.}

{\bf Example 3 - A fix to Example 2}: Consider the same setting as Example 1, now the access points back up sessions every 4 hours. When backing up, the access point packs up all new connections arrive within last 4 hours together with some dummy sessions, then push the packed records as a batch backup to the database. Now the system is not overloaded with backup transactions, however, the nodes are set to back up every four hours. This means that within four hours of a backup, any newly connected user who loses a session or switches access points must re-establish new sessions. \kartik{is it bad to establish new sessions?} If a large number of users entered the airport and use the wireless service within the 4 hours of the access points not being backed up, there will be many users who will have to make frequent new sessions.

{\bf Conclusion}: Combining the above three examples, we conclude: First, inappropriate data update strategies taken by the client under the outsourced database model can lead to serious privacy leakage issues. Secondly, how to choose or design a data synchronization strategy is really an optimization problem for the triple objectives of privacy, performance and utility.
}

\vspace{-2mm}
\section{Problem Statement}\label{sec:ps}
%In this section, we formulate our main problem in Section~\ref{sec:pf} and provide the design principle in Section~\ref{sec:designp}.
%\subsection{Problem Formulation}~\label{sec:pf}
The overarching goal of this work is to build a generic framework for secure outsourced databases that limits information leakage due to database updates. We must ensure that the \cs, which receives outsourced data, cannot learn unauthorized information about that data, i.e., the true update history. We achieve this by proposing private synchronization strategies that the \user may use to hide both how many records are currently being outsourced and when those records were originally inserted. Though there are simple methods that effectively mask the aforementioned update history, significant tradeoffs are required. For example, one may simply prohibit the \user from updating the outsourced database, or force them to update at predefined time intervals, regardless of whether they actually need to. Though both approaches ensure that the true update history is masked, they either entirely sacrifice data availability on the outsourced database or incur a significant performance overhead, respectively. Navigating the design space of private synchronization protocols requires balancing a 3-way tradeoff between privacy, accuracy, and performance. To tackle this challenge, we formalize our research problems as follows:
\begin{itemize}
    \item Build a generic framework that ensures an \user's database update behavior adheres to private data synchronization policies, while supporting existing encrypted databases.
    \item Design private synchronization algorithms that (i) hide an \user's update history and (ii) balance the trade-off between privacy, accuracy and efficiency.
\end{itemize}

In addition to the research problems above, we require our design to satisfy the following principles.

\vspace{-1mm}
\boldparagraph{P1-Private updates with a differentially private guarantee.}
The proposed framework ensures that any information about a single update leaked to a semi-honest \cs is bounded by a differentially private guarantee. We formally define this in Definition~\ref{def:dpsogdb}. 
%\johes{maybe guarantee instead of sense}.

\vspace{-1mm}
\boldparagraph{P2-Configurable privacy, accuracy and performance.} Rather than providing a fixed configuration, we develop a framework where users can customize the level of privacy, accuracy, and performance. For example, users can trade privacy for better accuracy and/or improved performance. 
%\kartik{SOLVED:We develop a framework where users can customize the levels of privacy, accuracy, and performance. For example, users can trade privacy for better accuracy and/or improved performance.}

\vspace{-1mm}
\boldparagraph{P3-Consistent eventually.} The framework and synchronization algorithms should allow short periods of data inconsistency between the logical (held by the \user) and the outsourced (held by the \cs) databases. To abstract this guarantee, we follow the principles in~\cite{burckhardt2014principles} and define the concept of \emph{consistent eventually} for our framework as follows. First, the outsourced database can temporarily lag behind the logical database by a number of records. However, once the \user stops receiving new data, there will eventually be no logical gaps. Second, all data should be updated to the \cs in the same order in which they were received by the \user. In some cases, the consistent eventually definition can be relaxed by removing the second condition. In this work, we implement our framework to satisfy the definition without this relaxation. 
%\kartik{I assume this came from the link Ashwin sent? perhaps cite it here?} \nt{Note to myself, May have 2-3 works need to cite, check back when other changes have applied.}

\vspace{-1mm}
\boldparagraph{P4-Interoperable with existing encrypted database solutions} The framework should be interoperable with existing encrypted databases. However, there are some constraints. \re{First, the encrypted databases should encrypt each record independently into a separate ciphertext. Schemes that encrypt data into a fixed size indivisible ciphertext (i.e., the ciphertext batching in Microsoft-SEAL~\cite{laine2017simple}) do not qualify. Since batching may reveal additional information, such as the maximum possible records per batch. Second, the database should support or be extensible to support data updates (insertion of new records). Thus, a completely static scheme~\cite{shi2007multi} is incompatible. In addition, our security model assumes the database's update leakage can be profiled as a function solely related to the update pattern. Therefore, dynamic databases with update protocol leaks more than the update pattern~\cite{kamara2012dynamic, naveed2014dynamic} are also ineligible. Third, the corresponding query protocol should not reveal the exact access pattern~\cite{goldreich2009foundations} or query volume~\cite{kellaris2016generic} information. Despite these constraints, our framework is generic enough to support a large number of existing encrypted databases such as~\cite{chowdhury2019cryptc,cash2014dynamic, vinayagamurthy2019stealthdb, eskandarian2017oblidb, wu2019servedb, kamara2012dynamic, agrawal2004order, boldyreva2009order,bater2018shrinkwrap, amjad2019forward, ghareh2018new, bost2017forward}. Later, in Section~\ref{sec:cmp}, we provide a detailed discussion on the compatibility of existing encrypted database schemes with \appsystem.}

%The design of \system with DP update pattern is inherently generic and does not rely on specific cryptographic primitives or frameworks. \kartik{Pose it as 'the goal of our framework is to be generic. Though there are some constraints: we want each record to be encrypted separately, etc. then say, despite these constraints, it is generic enough to support these databases. Later, we compare with two of them.'} 

%It can be architected on top of many existing encrypted databases~\cite{chowdhury2019cryptc,cash2014dynamic, vinayagamurthy2019stealthdb, eskandarian2017oblidb, wu2019servedb, kamara2012dynamic, agrawal2004order, boldyreva2009order}. 

%Therefore, to make it more general, we require that the overall design of our framework is independent of the underlying primitives. This indicates some flexibility in the choice of specific encrypted database schemes on which our framework is based. Actually, any encrypted database can be used by our framework as long as it satisfies the following three conditions. Firstly, the encrypted database should have ``fixed up-link overhead'', schemes that pack and encrypt data into a fixed size indivisible ciphertext (i.e. the ciphertext batching in Microsoft-SEAL~\cite{laine2017simple} is incompatible with our design). Second,  In Section~\ref{sec:cmp}, we provide a detailed discussion of the compatibility of existing encrypted database schemes with our proposed framework.
%\kartik{do we need something wrt dummy records?}

\eat{
\begin{definition}[Consistent eventually]\label{def:const} We say a \system achieves consistent eventually if it satisfies the following 2 conditions:  {\bf(1)} For a given $\hat{t}$, where $\forall j > \hat{t}, u_j = \emptyset$, there exists $\hat{t}<t<\infty$, s.t. $~\mathcal{D}_{t} = \left(\mathcal{D}_{t}\cap \dec(\mathcal{DS}_t) \right)$.  {\bf(2)} $\forall t \in \mathbb{N}^{+}, \nexists ~0< t_1 < t_2 \leq t$, s.t.  \smash{$u_{t_1} \notin \left(\mathcal{D}_{t}\cap \dec(\mathcal{DS}_t) \right)$}, \smash{$u_{t_2} \in \left(\mathcal{D}_{t}\cap \dec(\mathcal{DS}_t) \right)$}.
\end{definition}
Definition~\ref{def:const} points out that the outsourced database can temporarily lag behind the logical one by a number of records. However, once the \user stops receiving new data, there will eventually be no logical gaps (condition 1). Moreover, all data should be updated to the \cs in the same order in which they were received by the \user (condition 2). In some use cases, the consistent eventually definition can be relaxed by removing the second condition. In this work, we implement our framework to satisfy the strong (unrelaxed) consistent eventually property. 
}

\section{\appsystem Description}
\label{sec:description}
In this section, we introduce \appsystem, a generic framework for encrypted databases that hides update pattern leakage. The framework does not require changes to the internal components of the encrypted database, but rather imposes restrictions on the \user's synchronization strategy.  We illustrate the general architecture and components of \appsystem in Section~\ref{sec:workflow} and Section~\ref{sec:component}, respectively.  

\vspace{-1mm}
\subsection{Framework Overview}\label{sec:workflow}
Our framework consists of an underlying encrypted database with three basic protocols, $\mathsf{edb}=(\mathsf{Setup}, \mathsf{Update}, \mathsf{Query})$, a synchronization strategy $\sync$, and a local cache $\sigma$. Our framework also defines a dummy data type that, once encrypted, is indistinguishable from the true outsourced data. %\kartik{solved:start wtih database, then sync and then cache and support for dummy records.} 
The local cache $\sigma$ is a lightweight storage structure that temporarily holds data received by the \user, while \sync determines when the \user needs to synchronize the cached data to the \cs (poses an update) and how many records are required for each synchronization. \appsystem makes no changes to the $\mathsf{edb}$ and will fully inherit all of its cryptographic primitives and protocols. Figure~\ref{fig:arch} illustrates the general workflow of \appsystem. %That is $\setup = \mathsf{edb.Setup}$, $\update = \mathsf{edb.Update}$, and $\query = \mathsf{edb.Query}$. 

%$\appsystem = (\sync, \sigma, \mathsf{edb})$,
\begin{figure}[h]
\centering
\includegraphics[width=0.8\linewidth]{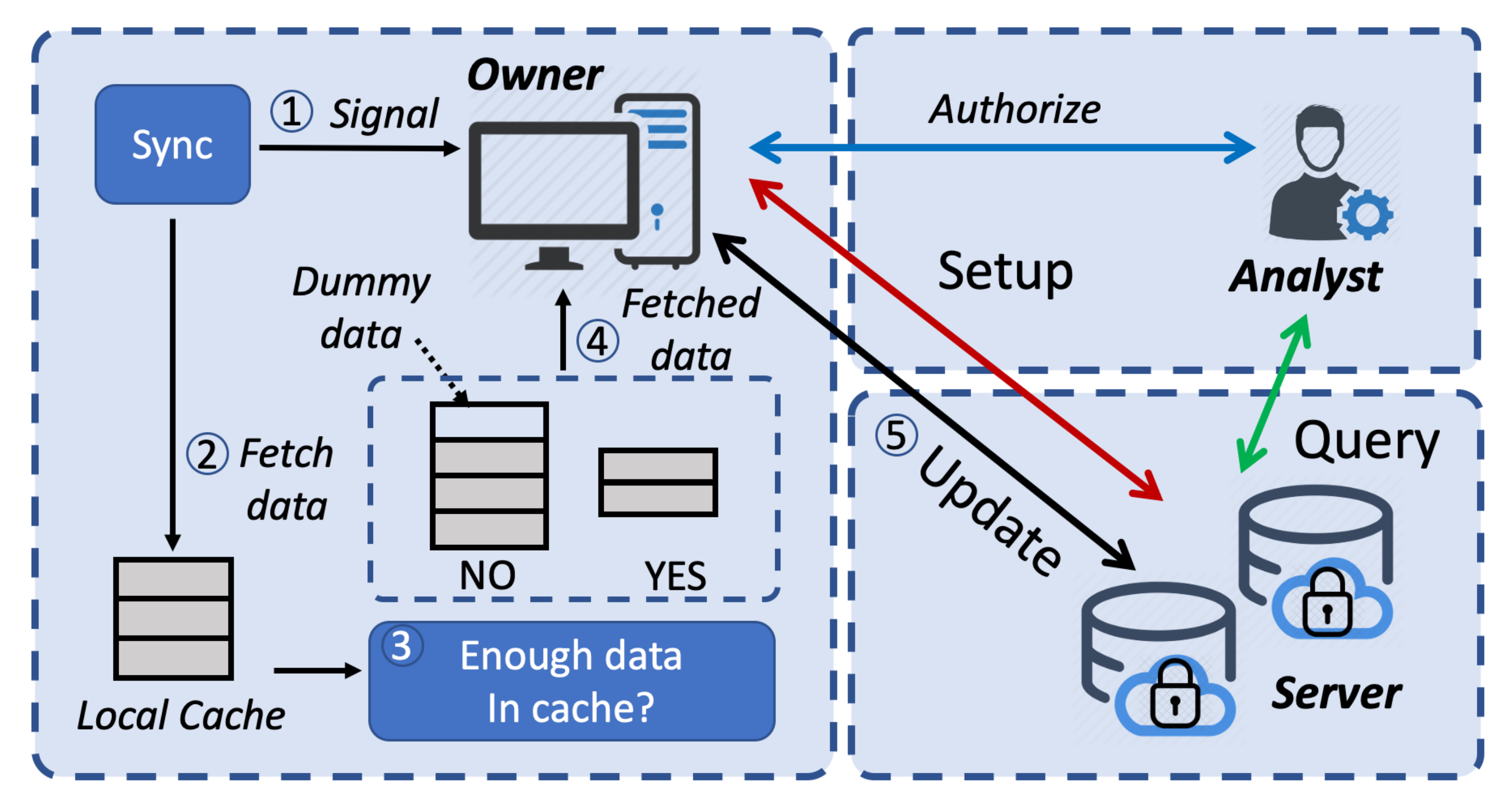}
\vspace{-4mm}
\caption{Overview of \appsystem's architecture.}
\label{fig:arch}
\end{figure}

 Our proposed framework operates as follows. Initially, the \user sets up a synchronization strategy $ \sync$ and a local cache $\sigma$, then authorizes the \as. The \user starts with an initial database with which it invokes \sync to obtain a set of records, $\gamma_0$, to be outsourced first. The \user then runs the setup protocol ($\mathsf{edb.Setup}$) with $\gamma_0$ as the input. An initial outsourced data structure is then created and stored on the \cs. For each subsequent time step, whenever the $\sync$ algorithm signals the need for synchronization, the \user reads relevant records from the cache and inputs them to the update protocol ($\mathsf{edb.Update}$) to update the outsourced structure. When there is less data than needed, the \user inputs sufficiently many dummy records in addition to the cached data.
 
 \eat{
 where there is more data in the cache than the amount of data needed for synchronization, we can follow different policies for selecting records, e.g.,  LIFO, FIFO. By default, \appsystem uses the FIFO policy. Otherwise, in addition to the data in cache,
 
 }
 %Otherwise, if the local cache does not have enough data, it will generate additional dummy records and synchronize together with the data in the local cache. 
 %\kartik{Otherwise, in addition to the data in cache, the \user inputs sufficiently many dummy records.} 
 
 Since all records are encrypted, the server does not know which records are dummy records and which are true records. The outsourced data structure will only change if the \user runs the update protocol, in other words, if \sync does not signal, then the outsourced structure remain unchanged.
 %$\mathcal{DS}_t = \mathcal{DS}_{t-1}$.
 The analyst independently creates queries and runs the query protocol ($\mathsf{edb.Query}$) to make requests. The \cs evaluates each query and returns the result to \as. For simplicity, we assume that all queries arrive instantaneously and will be executed immediately.

 %$\mathcal{DS}_t$.

  %whenever a record comes in, the \user stores it in the local cache. If synchronization is needed (signaled by \sync), the \user then reads as many records as are needed to be synchronized from the cache and uses the acquired data as input to run the $\update$ protocol. \kartik{Whenever $\sync$ algorithm signals the need for synchronization, the \user reads relevant records from the cache and inputs it to the $\update$ protocol to obtain $\mathcal{DS}_t$.}
 
 %The analyst independently creates queries and runs protocol $\prod_{\text{query}}$ to make query requests at each time $t$.  The \cs securely \kartik{what does securely mean here?} processes each query and evaluates them over the latest outsourced data structure, and returns corresponding results to the \as. For the sake of simplicity, unless otherwise stated,\kartik{we always assume this right?} we assume that all queries posted at time $t$ arrive at the \cs immediately, and it is guaranteed that they will only be applied on $\mathcal{DS}_t$, no query will be delayed for future execution. \kartik{The analyst independently creates queries and runs protocol $\prod_{\text{query}}$ to make query requests at each time $t$. The \cs evaluates each query $q_t$ and returns the result to \as. For simplicity, we assume that a query $q_t$ arrives instantaneously and will be executed over $\mathcal{DS}_t$.}

 %We introduce the details of the local cache design, and the synchronization strategies, respectively, in Section~\ref{sec:lcache}, and~\ref{sec:sync}.

\eat{
Besides the above descriptions, for the sake of simplicity, unless otherwise stated, we consider the following assumptions: {\bf A-1. Atomic protocols}. All protocols are considered atomic in the sense that once a protocol is executing, it can not be preempted by other protocols, and any other protocols can only be executed after the current running protocol exits.  All communication (transmission of any message) between participants is considered to have zero latency and all protocol execution is done immediately. For instance, when the \user issues $\update$, the transmitted records will immediately arrive at the \cs and update the outsourced storage on the \cs. }
 
%

%\nt{Note that, \sync decides when the \user needs to synchronize according to it's internal states, and will guide the \user to pick the proper size of records for synchronization. }

\eat{
If \sync signals for synchronization, then the \user follows \sync's instruction and runs $\update$ to update the outsourced database on the \cs, otherwise the \user does nothing and move on to the next timestamp. The \as, on the other hand, independently issues $\query$ protocol with a set of queries, denoted as $q_t$. The \cs evaluates those queries over the latest outsourced data structure and returns the corresponding results to the \as. All three participants repeat the steps over time until certain termination condition is triggered.

\sync decides when the \user needs to synchronize according to it's internal states, and will guide the \user to pick the proper size of records for synchronization. 
 The analyst independently creates queries and calls protocol $\query$ to make query requests at each time $t$. Within the protocol, \as's queries will be translated to implementation-specific security protocols using Crypt$\epsilon$'s query translator. Next, the \cs executes these security protocols over the outsourced data and returns corresponding encrypted results to the \as. The \as decrypts the returned encryptions to obtains the plaintext results. 
 }
 
\eat{
\subsection{Data Provisioning}
Each data obtained by user $\user$ is considered to have the form of $\langle A_1,...A_l\rangle$ where ${A}_j$ is an attribute. $\user$ needs to provision the private data before spawning any protocols. The data are encoded in its respective per attribute one-hot-encoding format. The one-hot-encoding is a way of representing categorical attributes and is illustrated by the following example. Assume a record schema is given by  $\langle Age \in (1,100), Gender \in (0,1) \rangle$, then the corresponding one-hot-encoding representation for a record of $\langle 28, Female\rangle$, is given by ${\langle[\underbrace{0,\ldots,0}_{27},1,\underbrace{0,\ldots,0}_{72}],[0,1]\rangle}$. \kartik{revisit this. why do we need attributes or one-hot encoding?} \nt{This is the part I want to discuss, we haven't yet discussed about how records are encoded or encrypted. }

For the prototype \appsystem presented in this work, we implemented data provision method using the aforementioned per attribute one-hot-encoding. However, the provision can be substituted by other encoding scheme like multi-attribute one-hot-encoding, range based encoding, hashing based encoding, etc. In this work we choose one-hot-encoding for prototype \appsystem implementation as it is a general representation that can be easily translated to other encoding schemes. Moreover, some other works such as Crypt$\epsilon$ \cite{chowdhury2019cryptc} has demonstrated the "Crypto Friendly" feature of one-hot-encoding. By taking one-hot-encoding, Crypt$\epsilon$ supports a rich set of secure queries.
}

%Another example is that in this paper, we use FIFO policy to select records from the local cache to be synchronized. However, one can apply many other strategies to read the local data. In some scenarios, the \as may only query the most recent data. For example, in the covid19 use case, the \as only interested in data from the most recent 14 days. Therefore, a LIFO policy is better for read the cached data, as the strategy ensures that newly received data is outsourced with a higher priority than old records.
\subsection{Framework Components}\label{sec:component}
%In this section, we introduce the local cache design (Section~\ref{sec:lcache}) and the dummy records used in our framework (Section~\ref{sec:dumy})
\subsubsection{Local cache}\label{sec:lcache}
The local cache is an array $\sigma[1,2,3...]$ of memory blocks, where each $\sigma[i]$ represents a memory block that stores a record. By default, the local cache in \appsystem is designed as a FIFO queue that supports three types of basic operations:
%\cw{I change notations of local cache from \lcache to $\sigma$ for spacing concern}
\begin{enumerate}
    %\item {\bf Initialization} \texttt{init()}: Create a local cache object, allocate memory space for data storage, and set up hyper-parameters related to cache configurations, such as cache flush rate.
    \item {\bf Get cache length ($\len(\sigma)$).} The operation calculates how many records are currently stored in the local cache, and returns an integer count as the result.
    
    \item {\bf Write cache ($\writec(\sigma, r)$).} The write cache operation takes as input a record $r$ and appends the record to the end of the current local cache, denoted as $\sigma \mathbin\Vert r \gets \writec(\sigma, r)$.

    \item {\bf Read cache ($\readc(\sigma, n)$).} Given a read size $n$, if $n \leq \len(\sigma)$, the operation pops out the first $n$ records, $\sigma\left[1,...,n\right]$, in the local cache. Otherwise, the operation pops all records in $\sigma$ along with a number of dummy records equal to $|n - \len(\sigma)|$. 
    %The operator then returns the union set of the dummy and cached data.
    
\end{enumerate}
The FIFO mode ensures all records are uploaded in the same order they were received by the \user. In fact, the local cache design is flexible and can be replaced with other design scenarios. For example, it can be designed with LIFO mode if the analyst is only interested in the most recently received records.
%if the query of an outsourced database only uses data from the most recent period (the most recent window query), and the order in which the data is uploaded does not affect the query result. Then the \user can actually design the local cache in LIFO mode so that the newly received data has a higher priority for synchronized to the \cs, which provides better accuracy for the most recent window queries. Note that when applying LIFO mode to design the local cache, \appsystem will only guarantee the {\it relaxed consistent eventually} definition. %\kartik{CHECK:opportunity to reduce text, if needed.} 
\subsubsection{Dummy records}\label{sec:dumy}
%\subsubsection{Types of dummy records}
Dummy records have been widely used in recent encrypted database designs~\cite{arasu2013oblivious, eskandarian2017oblidb, kellaris2017accessing, he2017composing, bater2018shrinkwrap, mazloom2018secure, asharov2019locality, allen2019algorithmic} to hide access patterns, inflate the storage size and/or distort the query response volume. In general, dummy data is a special data type that cannot be distinguished from real outsourced data when encrypted. Moreover, the inclusion of such dummy data does not affect the correctness of query results. 

\subsubsection{Synchronization strategy}
The synchronization strategy \sync takes on the role of instructing the \user how to synchronize the local data. It decides when to synchronize their local records and guides the \user to pick the proper data to be synchronized. We explain in detail the design of \sync in section~\ref{sec:sync}.

%run in the background as a ``daemon process''.\kartik{SOLVED:I don't think it is a daemon process. It is a function that outputs how many records to synchronize.} 

%In Table~\ref{tab:sumenc}, we summarize some existing secure outsourced database scheme that supports dummy data by default, and list the type of dummy data employed by each scheme. 

\eat{
\noindent{\bf (1) Encrypted search.} The dummy data for encrypted search schemes refer to the encrypted bit strings, memory blocks or files that are indistinguishable from not yet decrypted records. In such a way, the \cs cannot distinguish between dummy and non-dummy data in polynomial time, but the \as can easily exclude these dummy records from the query results after decryption. 

\noindent{\bf (2) Secure relational database.} For secure relational databases, to support dummy data, it requires to extend all data with an additional ``{\it isDummy}'' attribute. And the dummy data is the random tuples with {\it isDummy = True}. The \cs will include both dummy and real outsourced data for query processing, and the \as needs to eliminate dummy answers from the query results after decryption. For those scenarios where its corresponding query protocol is obviousness and leaks nothing about the size pattern~\cite{allen2019algorithmic,chowdhury2019cryptc}, we can apply query rewriting to let the \cs ignores those dummy data when processing queries (See Appendix~\ref{sec:rewrite} for details). 

\noindent{\bf (3) Specialized database.} Some specialized database scheme may have their respective own dummy data types. For instance, the dummy data for blockchain type encrypted database~\cite{cryptoeprint:2020:827} can be an encrypted blank block or invalid smart contract. For databases where the data is encoded, the dummy data can be a specially encoded record. For example the dummy data for Crypt$\epsilon$~\cite{chowdhury2019cryptc} is a vector with all 0 bits.
}

\eat{
The FIFO mode ensures that each record is outsourced in the order in which it was received by the \user, thus satisfying the strong consistency requirement.

In general, the \user's storage is limited in size and lightweight, thus cache flush mechanisms are needed to prevent the local cache from growing too large. The cache flush mechanism is a policy that flushes a portion of the cached data from time to time, and it is used to control the cache size. To ensure that no data is lost, the flushed data is immediately backed up to the cloud. There are many types of cache flushing strategies, a common method is to trigger a whole cache flush (read all records from the cache) whenever the local cache size exceeds a specified threshold (MySQL applies this policy to flush the buffer pool~\footnote{https://dev.mysql.com/doc/refman/8.0/en/innodb-buffer-pool-flushing.html}). This approach leaks the cache size when flushes if the attacker knows the flush threshold, which violates the security model we proposed earlier. To meet the security and privacy guarantee, we provide two cache flush strategies as follows:

\boldparagraph{Flush after k times of synchronization}: A cache flush will take place after every $k$ times of data synchronizations, with a fixed amount of data been flushed.

\boldparagraph{Flush after k time units} Cache flushes are performed after a fixed time interval, with a fixed amount of data being flushed at a time. Since cache flushing doesn't need to be done frequently, the time interval here is often set relatively large.\kartik{doesn't this depend on the rate at which updates are performed and the cache size?} \kartik{on first read, the diff between flush and sync is not confusing}
\kartik{please check the grammar in this section and the next}
}

%The depending on the synchronization policy selected, the user  periodically perform \texttt{read()} or \texttt{write()} operations. Two trigger operations \texttt{reset()} and \texttt{flush()} keep tracks the size of local cache. Whenever the cache size is less than or exceeds the predefined thresholds, the operations will perform the corresponding actions accordingly. To demonstrate the details of each operation more clearly, we have graphically illustrated the key operations above, please refer to Figure \ref{fig:dcache} for details.

\eat{
\begin{figure}[h]
  \centering
  \begin{minipage}[b]{0.5\textwidth}
  \includegraphics[width=\linewidth]{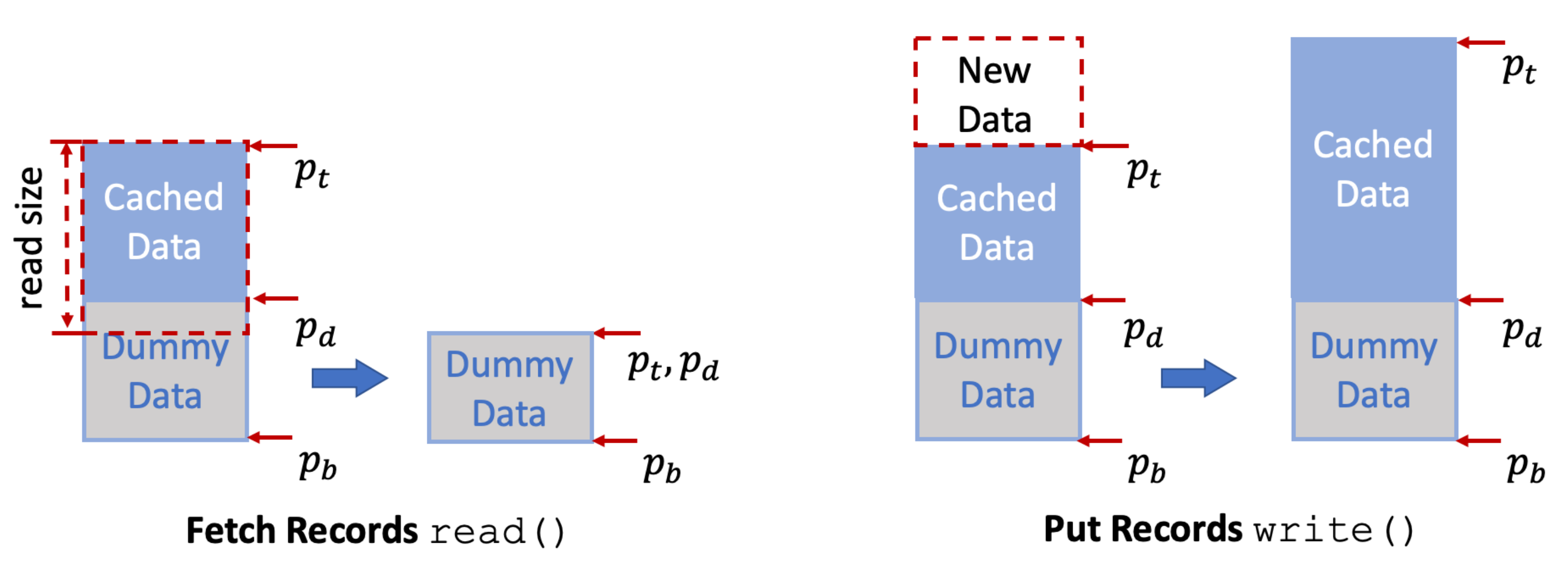}
  \subcaption{Fetch and Put records operations}
  \end{minipage}
 \begin{minipage}[b]{0.5\textwidth}
  \includegraphics[width=\linewidth]{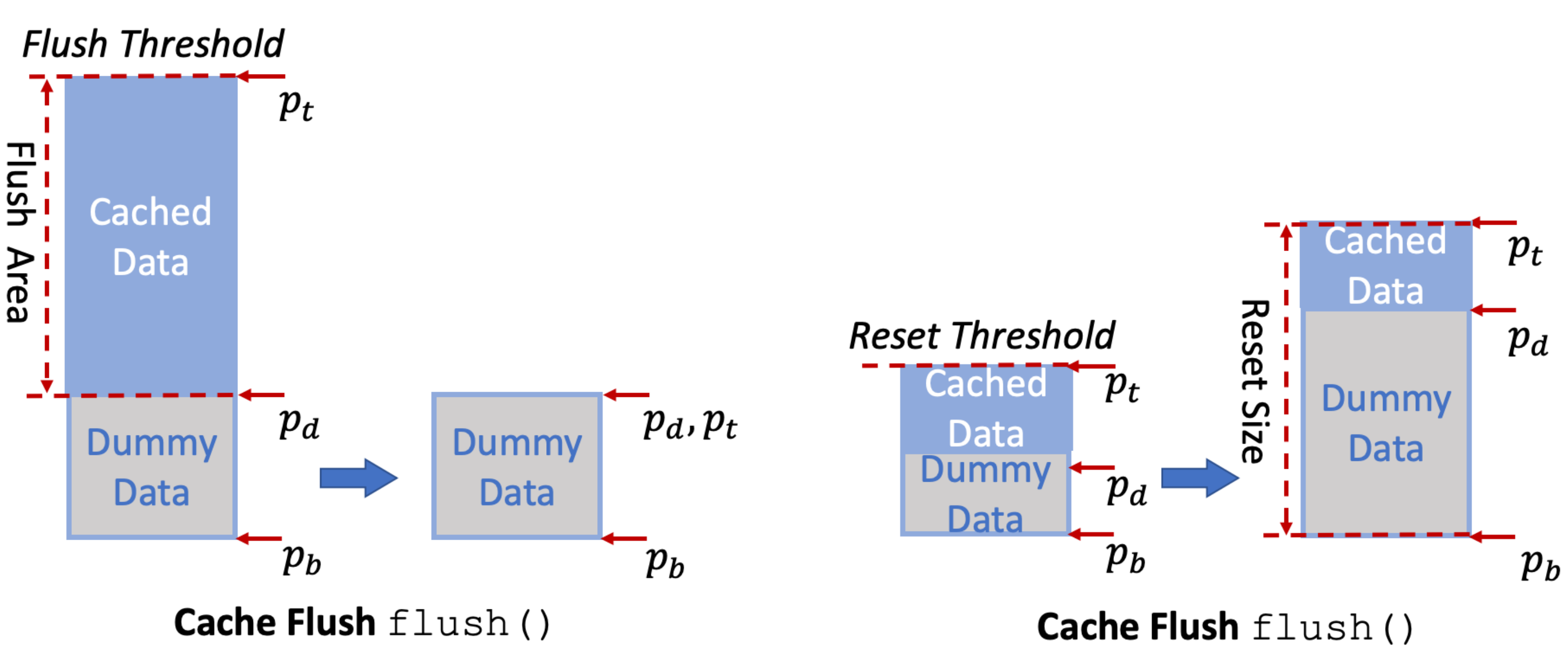}
  \subcaption{Cache flushing and reset operations}
  \end{minipage}
  \caption{The details of cache operations: (a) The \texttt{read()} and \texttt{write()} over local cache; (b) The cache flush operation \texttt{flush()} when cache size exceeds upper bound threshold and the cache reset operation \texttt{reset()} when cache size is smaller than lower bound threshold. }
  \label{fig:dcache}
\end{figure}

}
\section{\appsystem Model}
\label{sec:model}
%Following the model of the static outsourced database in~\cite{kellaris2016generic}, we extend it to growing databases.\kartik{can remove the previous sentence and mention this later} 
In this section, we describe the abstract model of \appsystem as a secure outsourced growing database, including the key definitions (Section~\ref{sec:keydef}), security model (Section~\ref{sec:spmodel}), \re{privacy semantics (Section~\ref{sec:semantics})}, and evaluation metrics (Section~\ref{sec:measure}).

\subsection{Secure Outsourced Growing Database}
\label{sec:keydef}
We begin by introducing the main concepts of outsourced growing databases and the notations used in this work. A summary of key concepts and notations is provided in Table~\ref{tab:notations}.  
\begin{table}[h]
    \centering
    \begin{tabular}{r|l}
        $\mathcal{D}_0$ & initial logical database   \\
        $u_t$ & logical update at time $t$; can be a single record or null\\
        $\gamma_t$ & real update at $t$, can be a set of records or null\\
        $\mathcal{D}_t$ & logical database at time $t$; $\mathcal{D}_t = \{ \mathcal{D}_0 \cup u_1 \cup u_2 \ldots \cup u_t\}$ \\
        $\mathcal{D}$ & the logical instance of a growing database\\
        $\mathcal{DS}_t$ & physical data on the server at time $t$ \\
        $\mathcal{DS}$ & the physical data on the server over time\\
        $q_t$ & list of queries at time $t$ \\ 
        $Q$ & set of queries over time, where $Q=\{q_{t}\}_{t\geq 0}$\\
        \system & secure outsourced growing database. %\kartik{can we call it extendable DB or so?}
    \end{tabular}
    \caption{Summary of notations.} 
    \label{tab:notations}
\vspace{-4mm}
\end{table}

A growing database consists of an initial database $\mathcal{D}_0$ and a set of logical updates $U = \{u_t\}_{t\geq0}$ to be appended to $\mathcal{D}_0$, where $u_t \in U$ is either a single record or $\emptyset$. The former corresponds to the data received at $t$, while $\emptyset$ indicates no data arrives. We consider the case where at most one record arrives at any time unit for the sake of simplicity, however this can be generalized to the case where multiple records arrive in one time unit. We define the growing database as $\mathcal{D} = \{\mathcal{D}_t\}_{t \geq 0}$, where $\mathcal{D}_t$ is the logical database at time $t$, and $\mathcal{D}_t = \{ \mathcal{D}_0 \cup u_1 \cup u_2 \ldots \cup u_t\}$. We stress that when we say a growing database has length $L$, it means that there could be up to $L$ logical updates in $U$, that is $|U |= L$. %\kartik{are you saying that there could have been up to $L$ updates until this time?} 
%\am{SOLVED:The language "query is an algorithm performed ..." seems a bit weird. Given this is a DB conference, you don't need to define queries. Rather you can just say: We consider databases that support select, project, join and  aggregations (please check is this is correct). }
We consider databases that support select (search), project, join and aggregations. We use $Q=\{q_{t}\}_{t\geq 0}$ to denote the set of queries evaluated over a growing database, where $q_{t}$ is the query over $\mathcal{D}_t$. %\kartik{SOLVED:not sure: should we say that this can be generalized to multiple records in one time unit?}
%A query is an algorithm performed on a database, denoted as $q(\mathcal{D})$. For example, counting queries are algorithms that count the total number of records in a database that satisfy certain predicates.

There are three entities in the secure outsourced data model: the \user, the \cs, and the \as. The \user holds a logical database, encrypts and outsources it to the \cs, and continually updates the outsourced structure with new data. The \cs stores the outsourced structure, on which it processes queries sent by an authorized \as. For growing databases, all potential updates posted by the \user will be insertions only. We denote the records to be updated each time as $\gamma_t$, which can be a collection of records, or empty (no update has occurred). We use $\mathcal{DS} = \{\mathcal{DS}_t\}_{t \geq 0}$ to represent the outsourced structure over time, where $\mathcal{DS}_t$ is an instance of outsourced structure at time $t$. Typically, an instance of the outsourced structure contains a set of encrypted records as well as an optional secure data structure (i.e., secure index~\cite{chase2010structured}). We now define the syntax of a secure outsourced database as follows:
%{We assume that the \user encrypts the data using a semantically secure encryption scheme, and
\vspace{-1mm}
\begin{definition}[Secure Outsourced Growing Database]\label{def:sodb} A secure outsourced database is a suite of three protocols and a polynomial-time algorithm with the following specification: %\kartik{is enc-dec important?}\nt{I made changes here, for definition I think we only need to introduce the three major protocols and the Sync, for enc, dec we can introduce later if we talk about the protocol implementation details.}
%\kartik{SOLVED:need to explain somewhere that the tuple corresponds to inputs and outputs from owner, server, and analyst}
%\am{you can drop the itemize - SOLVED}
%\begin{description}
    %\setlength{\itemsep}{5pt} 
    %\item $(\pk, \sk) \gets\gen(\lambda)$: It takes as input of a security parameter $\lambda$ and returns a key pair \pk, and \sk.
    %\item $ct \gets \enc(m, \pk)$: It takes as input of a message and the encryption key \pk, and outputs a ciphertext $ct$ w.r.t. $m$.
    %\item $m\gets \dec(ct, \sk)$: It takes as input of a ciphertext $ct$ and a secret key \sk, and outputs the corresponding message $m$ of $ct$.
    %\item $sc \gets \auth(\sk, )$
    
    \vspace{-1mm}
    \boldparagraph{}$(\perp, \mathcal{DS}_0, \perp) \leftarrow \setup((\lambda, \mathcal{D}_0), \perp, \perp)$: is a protocol that takes as input a security parameter $\lambda$, and an initial database $\mathcal{D}_0$ from the \user. The protocol sets up the internal states of the \system system and outputs an outsourced database $\mathcal{DS}_0$ to the \cs.
    %(or necessary hyper-parameters i.e. encryption mode)%\kartik{SOLVED,what are hyper-parameters?} 
    
    \vspace{-1mm}
    \boldparagraph{}{$(\perp, \mathcal{DS}'_t, \perp) \leftarrow \update(\gamma, \mathcal{DS}_t, \perp)$: is a protocol that takes an outsourced structure $\mathcal{DS}_t$ from the \cs, and a collection of records $\gamma$ from the \user, which will be inserted into the outsourced data. The protocol updates the outsourced structure and outputs the updated structure $\mathcal{DS}'_t$ to \cs. }
    
    \vspace{-1mm}
    \boldparagraph{}  $(\perp, \perp, a_t) \leftarrow \query(\perp, \mathcal{DS}_t, q_t)$: is a protocol that takes an outsourced database $\mathcal{DS}_t$ from the \cs and a set of queries $q_t$ from the \as. The protocol reveals the answers $a_t$ to the \as. 
    
    \vspace{-1mm}
    \boldparagraph{} $\sync(\mathcal{D})$: is a (possibly probabilistic) stateful algorithm that takes as input a logical growing database $\mathcal{D}$. The protocols signals the \user to update the outsourced database from time to time, depending on its internal states. %\kartik{SOLVED:perhaps sync should go last?}
%\end{description}
\end{definition}
%\am{SOLVED:It seems like Setup and Query are part of the encrypted database. Whereas Update and Sync are part of the new solution we propose. Should we just be defining the parts of the solution we are proposing? Do we need setup and query?}

The notation $(\mathsf{c_{out}}, \mathsf{s_{out}}, \mathsf{a_{out}})\gets{\mathsf{protocol}}(\mathsf{c_{in}}, \mathsf{s_{in}}, \mathsf{a_{in}})$ is used to denote a protocol among the \user, \cs and \as, where $\mathsf{c_{in}}, \mathsf{s_{in}}$, and $\mathsf{a_{in}}$ denote the inputs of the \user, \cs and \as, respectively, and $\mathsf{c_{out}}, \mathsf{s_{out}}$, and $\mathsf{a_{out}}$ are the outputs of the \user, \cs and \as. We use the symbol $\perp$ to represent nothing input or output. We generally follow the abstract model described in~\cite{kellaris2016generic}.  %However~\cite{kellaris2016generic} models the static setting whereas we consider dynamic setting with updates.
%\kartik{SOLVED:mention here that we are following the modeling of [46]. However they model the static setting whereas we consider dynamic setting with updates.} 
However, the above syntax refers to the {\it dynamic} setting, where the scheme allows the \user to make updates (appending new data) to the outsourced database. The {\it static} setting~\cite{kellaris2016generic} on the other hand, allows no updates beyond the setup phase. We assume that each record from the logical database is atomically encrypted in the secure outsourced database. The outsourced database may, in addition, store some encrypted dummy records. This model is also referred to as {\it atomic database}~\cite{kellaris2016generic}. In addition, we assume that the physical updates can be different from the logical updates. For instance, an \user may receive a new record every 5 minutes, but may choose to synchronize once they received up to 10 records. 

%We assume the secure outsourced database is {\it atomic}~\cite{kellaris2016generic},  where the \cs stores a set of encrypted data and each record (or search key) in the logical database is encrypted as one of them, but there may be additional ciphertexts for dummy data.
%\kartik{SOLVED:Maybe: We assume that each record from the logical database is atomically encrypted in the secure outsourced database. The outsourced database may, in addition, store some encrypted dummy records.} 
%\johes{the previous sentence's wording is a bit unclear. may need more explanation of the encryption process, which is currently missing from the definition}

%Under the dynamic setting, if the logical database held by the \user is a growing database, that is all potential updates will be insertion only, then the model is referred to as the secure outsourced growing database (\system) model.
\eat{
The notation $(c\_{out}, s\_{out}, a\_{out}) \gets \prod(c\_{in},s\_{in}, a\_{in})$ is used to denote a protocol among an owner, an analyst and a server, where $<c\_in$, $c\_out>$, $<o\_in$, $o\_out>$ and $<a\_in$, $\_out>$, represents the input and output of \user, \cs, and \as, respectively.
}

%\nt{I deleted the descriptions about how protocols operates, as we will introduces the details in section 3, so remove the redundant descriptions here.}
\eat{
In the aforementioned model, we mostly followed the outsourced database description in \cite{kellaris2016generic} but made necessary extensions to capture features applied to growing databases. Given the aforementioned components, the \system system operates as follows: The \user picks a security parameter $\lambda$ and runs $\setup$ to setup a \system system followed by outsourcing the initial database $\mathcal{D}_0$. The \sync is initialized during the execution of $\setup$ and continues to run in the background as a ``daemon process''. When $\setup$ completes, the \user, \cs, and \as then perform their respective tasks asynchronously. For each subsequent time $t$, whenever a new record arrives, the \user stores it locally. If \sync signals for synchronization, then the \user follows \sync's instruction and runs $\update$ to update the outsourced database on the \cs, otherwise the \user does nothing and move on to the next timestamp. The \as, on the other hand, independently issues $\query$ protocol with a set of queries, denoted as $q_t$. The \cs evaluates those queries over the latest outsourced database and returns the corresponding results to the \as. All three participants repeat the steps over time until certain termination condition is triggered.

}

\subsection{Update Pattern Leakage}\label{sec:updat}
We now introduce a new type of volumetric leakage~\cite{blackstone2019revisiting} called {\it update pattern} leakage. %\kartik{SOLVED:generally, either use quotes or italics, not both. Please update everywhere. I use italics to emphasize and quotes when I want to say something but not formally define it}
In general, an update pattern consists of the \user's entire update history transcript for outsourcing a growing database. It may include information about the number of records outsourced and their insertion times.
\re{
\begin{definition}[Update Pattern]
Given a growing database $\mathcal{D}$ and a SOGDB scheme $\Sigma$, the update pattern of $\Sigma$ when outsourcing $\mathcal{D}$ is $\upatt(\Sigma,\mathcal{D}) = \{\upatt_t\left(\Sigma, \mathcal{D}_{t}\right)\}_{t\in\mathbb{N}^{+}\land t\in t'}$, with:
$$\upatt_{t}\left(\Sigma, \mathcal{D}_{t}\right) = \left(t, |\gamma_{t}|\right)$$
where $t'=\{t'_1,t'_2,... ,t'_k\}$ denotes the set of timestamps $t'_i$ when the update occurs, and $\gamma_t$ denotes the set of records synchronized to the outsourcing database at time $t$. We refer to the total number of records $|\gamma_{t}|$ updated at time $t$ as the corresponding update volume. 
\end{definition}
\begin{exmp} Assume an outsourced database setting where the \user synchronizes 5 records to the \cs every 30 minutes and the minimum time span is 1 minute. Then the corresponding update pattern can be written as $\{(0, 5), (30, 5), (60,5), (90,5)...\}$.
\end{exmp}
\eat{Any outsourced database that independently encrypts a single record into a separate ciphertext is at risk of leaking the update pattern. More generally, for any scheme, if its corresponding update leakage can be written as a function of update volume, then it leaks the update pattern.}}
%{\it fixed up-link overhead}.
%\kartik{SOLVED:why do we need a defn for fixed up-link overhead? we don't seem to use it later?} \nt{Can we say something about, the underlying edb's update leakage can be written as $\mathcal{L}'(|\gamma|)$, and $\mathcal{L}'$ is stateless.}
%Continue with Definition~\ref{def:sodb}, Thus for \system with fixed up-link volume, the server has access to the entire history of the corresponding up-link volume for all posted updates. Thus it's obvious that for all 

%To distinguish from the logical one, we use $\gamma_t$ to denote the real update at time $t$, which can be a set of records or null (no real update posted).
%where $\gamma_t$ denotes the set of records was updated to the \cs at time $t$, if no update happens then $\gamma_t = \emptyset$.

%\am{This section is super technical. I wonder if this level of technicality is needed before we give the high level overview and design of the system. Also, while it is precise, I am not getting an intuition for what the threat model is.}
%\subsection{\system with DP Update Pattern}
\vspace{-2mm}
\subsection{Privacy Model}
\label{sec:spmodel}
%\am{SOLVED:Should this section be called "privacy model"?}
Recall that in \appsystem, there are three parties: the owner (who outsources local data), the server (who stores outsourced data), and the analyst (who queries outsourced data). Our adversary is the server, whom we want to prevent from learning unauthorized information about individuals whose records are stored in the local data. We assume a semi-honest adversary, meaning that the server will faithfully follow all \appsystem protocols, but may attempt to learn information based on update pattern leakage. 

Update pattern leakage may reveal the number of records inserted at each time step, as the server can keep track of the insertion history. To ensure privacy, we need to strictly bound the information the server can learn. In this section, we formally define the privacy guarantee for update pattern leakage in \appsystem. %Intuitively, we want to provide a differential privacy bound on the update pattern leakage.
%We consider the security and privacy guarantee for an \user against an honest-but-curious \cs~\cite{goldreich2009foundations}. We assume the \as is trusted and is authorized by the \user. Intuitively, we want to provide a differential privacy bound on ``leakage'' by any single update.  
%To keep the definitions general, we define the security according to the seminal real-ideal paradigm~\cite{goldreich2009foundations}. We now describe the following experiments:
%\johes{Marked.differential privacy is not defined yet}.
%\kartik{SOLVED:say analyst is trusted?}
\begin{definition}[$\epsilon$-Differential Privacy~\cite{dwork2010differential}]\label{def:dp} A randomized mechanism $\mathcal{M}$ satisfies $\epsilon$-differential privacy (DP) if for any pair of neighboring databases $D$ and $D'$ that differ by adding or removing one record, and for any $O\subseteq\mathcal{O}$, where $\mathcal{O}$ is the set of all possible outputs, it satisfies:
$$\textup{Pr}\left[\mathcal{M}(D)\in O\right] \leq e^{\epsilon} \textup{Pr}\left[\mathcal{M}(D')\in O\right]$$
\end{definition}
With DP, we can provide provable, mathematical bounds on information leakage. This allows us to quantify the amount of privacy leaked to the server in our scheme. 
\re{
\begin{definition}[Neighboring growing databases]\label{def:ngdb}
$\mathcal{D}$ and $\mathcal{D'}$ are neighboring growing databases if for some parameter $\tau \geq 0$, the following holds: (i) $\mathcal{D}_t = \mathcal{D}'_t$ for $t \leq \tau$ and (ii) $\mathcal{D}_t$ and $\mathcal{D}'_t$ differ by the addition or removal of a single record when $t > \tau$.
%Given two growing databases $\mathcal{D}$ and $\mathcal{D'}$, for some time $\tau \geq 0$, such that, for all $0 \leq t \leq \tau$, $\mathcal{D}_t = \mathcal{D}'_t$ and for all $t > \tau$, $\mathcal{D}_t$ and $\mathcal{D}'_t$ are differ by at most one record. Then we call $\mathcal{D}$ and $\mathcal{D'}$ are neighboring growing databases. \kartik{SOLVED:Two growing databases $\mathcal{D}$ and $\mathcal{D'}$ are neighboring, if for some parameter $\tau \geq 0$, the following holds: (i) $\mathcal{D}_t = \mathcal{D}'_t$ for $t \leq \tau$ and (ii) $\mathcal{D}_t$ and $\mathcal{D}'_t$ differ by at most one record when $t > \tau$.}
\end{definition}
In practice, Definition~\ref{def:ngdb} defines a pair of growing databases that are identical at any time before $t=\tau$, and differ by at most one record at any time after $t=\tau$. After defining neighboring growing databases, we now follow the definition of event level DP~\cite{dwork2010differential} under continual observation, and generalize it to \system setting. This allows us to describe and bound the privacy loss due to update pattern leakage in \appsystem.
%Static DP setting requires that an algorithm produces similar results on any neighboring databases that are differ by at most one record. For growing database, to meet DP guarantees, algorithms are required to produce similar outputs on any two instance of growing databases $\mathcal{D}$ and $\mathcal{D'}$ satisfy that for some time $\tau \geq 0$, such that, for all $0 \leq t \leq \tau$, $\mathcal{D}_t = \mathcal{D}'_t$ and  for all $t > \tau$, $\mathcal{D}_t$ and $\mathcal{D}'_t$ are neighboring. Such pair of growing databases are called the neighboring growing databases.
\begin{definition}[\system with DP update pattern]\label{def:dpsogdb} Let $\lupdate$ be the update leakage profile for a \system system $\Sigma$. The \system $\Sigma$ has a differentially-private (DP) update pattern if $\lupdate$ can be written as:
$$\lupdate(\mathcal{D}) = \mathcal{L}'\left(\upatt(\Sigma, \mathcal{D})\right)$$
where $\mathcal{L}'$ is a function, and for any two neighboring growing databases $\mathcal{D}$ and $\mathcal{D}'$, and any $O\subseteq\mathcal{O}$, where $\mathcal{O}$ is the set of all possible update patterns, $\lupdate(\mathcal{D})$ satisfies:
$$\textup{Pr}\left[\lupdate(\mathcal{D})\in O \right] \leq e^{\epsilon}\cdot\textup{Pr}\left[\lupdate(\mathcal{D}') \in O\right]$$
\end{definition}
%\kartik{is defn 4 trying to talk about the entire system or only updates? If we never talk about leakage for setup and querying, should this talk about the security of $\Sigma$?}
%\nt{def 4 only says that the update pattern is in DP, but the rest leakage may not bounded in DP, such as query response volume, so it's actually only talks about the update pattern.}
%\kartik{what about setup? it seems you perturb and then invoke setup protocol?} \nt{Yes, the initial outsource also provide DP guarantee. Can discuss how to formulate it.}
Definition~\ref{def:dpsogdb} ensures that, for any \system,  if the update leakage is a function defined as $\upatt (\Sigma, \mathcal{D})$, then the information revealed by any single update is differentially private. Moreover, if each update corresponds to a different entity's (\user's) record then privacy is guaranteed for each entity. The semantics of this privacy guarantee are discussed further in Section~\ref{sec:semantics}.}
%\johes{SOLVED:what does entity here mean?} 
Note that although Definition~\ref{def:dpsogdb} provides information theoretic guarantees on update pattern leakage, the overall security guarantee for \appsystem depends on the security of the underlying encrypted database scheme. If the encrypted database provides information theoretic guarantees, then \appsystem also provides information theoretic DP guarantees. If the encrypted database is semantically secure, then \appsystem provides computational differential privacy, i.e., Definition~\ref{def:dpsogdb} only holds for a computationally bounded adversary.  
%We continue to provide a more generic security model in the Appendix.
\re{
\subsection{Privacy Semantics}
\label{sec:semantics}
In this section, we explore the privacy semantics of Definition~\ref{def:dpsogdb} from the perspective of disclosing secrets to adversaries. To achieve this, we utilize the Pufferfish~\cite{kifer2014pufferfish} framework to interpret the privacy semantics. One can show that if a \system satisfies Definition~\ref{def:dpsogdb}, then for any single user $u$, and any pair of mutually exclusive secrets of $u$'s record that span a single time step, say $\phi_u(t)$, and $\phi'_u(t)$ (an example of such pair of secrets is whether $u$'s data was inserted or not to an growing database), the adversary’s posterior odds of $\phi_u(t)$ being true rather than $\phi'_u(t)$ after seeing the \system's update pattern leakage is no larger than the adversary’s prior odds times $e^{\epsilon}$. Note that this strong privacy guarantee holds only under the assumption that the adversary is unaware of the possible correlation between the user's states across different time steps. Recent works~\cite{kifer2011no,liu2016dependence, xiao2015protecting} have pointed out that with knowledge of such correlations, adversaries can learn sensitive properties even from the outputs of differentially private algorithms. Nevertheless, it is still guaranteed that the ratio of the adversary's posterior odds to the prior odds is bounded by $e^{l\times\epsilon}$~\cite{cao2017quantifying, song2017pufferfish}, where $l$ is the maximum possible number of records in a growing database that corresponds to a single user. The actual privacy loss may be much smaller depending on the strength of the correlation known to the adversary ~\cite{cao2017quantifying, song2017pufferfish}. We emphasize that our algorithms are designed to satisfy Definition~\ref{def:dpsogdb} with parameter $\epsilon$, while simultaneously satisfying all the above privacy guarantees, though the privacy parameters may differ. Thus, for the remainder of the paper, we focus exclusively on developing algorithms that satisfy Definition~\ref{def:dpsogdb}. We continue to provide a more generic security model in the Appendix~\ref{sec:security-continued}.

}
\subsection {Evaluation Metrics} 
\label{sec:measure}
%\am{SOLVED:The reader does not yet have intuition for why these evauation metrics are needed. We have not told them yet that there are tradeoffs when you want to hide the update pattern -- either you will get some staleness in results, or there will a high performance overhead}\nt{We should tell the reader there is a 3-way trade-off of privacy, utility and performance, in the intro and mention back here}
\subsubsection{Efficiency metrics}
To evaluate \system's efficiency, we use two metrics: (1) query execution time (QET) or the time to run $\query$ and (2) the number of encrypted records  outsourced to the \cs. Note that in some cases the QET and the number of outsourced data may be positively correlated, as QET is essentially a linear combination of the amount of outsourced data.
%relative to \kartik{SOLVED:does ``relative to'' mean ``divided by''? If yes, may be clearer to say `ratio of x and y'} the number of records in the logical database
%\kartik{what does ``positively corrected'' mean?}

%These more specific metrics are essentially different linear combinations of the total amount of outsourced data. %Thus, we use it directly as the performance metrics for evaluating the \system system.

\eat{
\begin{definition}
Given $\Sigma$ = ($\sync$, $\setup$, $\update$, $\query$), a growing database $\mathcal{D} = \{\mathcal{D}_t\}_{0 \leq t \leq K}$ with length of $K$.  For a $\beta \in (0,1)$, we say the $\Sigma$ is $(\alpha, \beta)$ - outsourcing efficient relative to $\mathcal{D}$, if $~\textup{Pr}\left[|\mathcal{DS}_{T}| \geq \alpha \right] \leq \beta $. %\kartik{check if we can simplify this if $\beta$ is always negligible.}
\nt{I don't know if it's necessary to include this definition}
\end{definition}
}

\eat{
Communication efficiency consists of two parts: the uplink and downlik efficiency. Uplik efficiency stands for the number of encrypted records uploaded at each synchronisation, while downlink refers to the number of encrypted records sent back as the result of a query. \kartik{Is this different from storage efficiency because data stored on the server can be replaced? } \nt{The storage efficiency might be different from communication efficiency if we consider to support deletion, modification, etc. and yes, when we consider modification operations, the server-side storage may be replaced. But for now, if only consider "add" operation, communication efficiency = storage efficiency}

\begin{definition}
Given a \system $\Sigma=(Gen, Sync,$ $\prod_{setup},$ $\prod_{update},$ $\prod_{query})$, a dynamic database stream $\mathcal{D} = \{\mathcal{D}_t\}_{0 \leq t \leq T}$ with length $T$. Using $S_u$ and $S_d$ denotes the size of records batch in a single synchronization and the size of returned result of a single query. For a $\beta_1, \beta_2 \in (0,1)$, we say the $\Sigma$ is $(\alpha_1, \beta_1, \alpha_2, \beta_2)$ - communication efficient relative to $\mathcal{D}$, if $~\textup{Pr}\left[S_u \geq \alpha_1 \right] \leq \beta_1 $, and $~\textup{Pr}\left[S_d \geq \alpha_1 \right] \leq \beta_2 $. \kartik{do we care about the size of query result? I think we get true results, right?} \nt{Yes we get true encrypted results, i.e. counting query we will have a single encrypted count as the result}
\end{definition}
}
\subsubsection{Accuracy metrics}
%Ideally, the logical database should always be consistent with the outsourced database. That is, at any time, the outsourced database should contain at least \kartik{why is it at least and not exactly one?} one encrypted copy of each record in the logical database. \kartik{SOLVED:Possibly replace the previous two sentences with: Ideally, the outsourced database should contain all records from the logical database at every point in time.} 
Ideally, the outsourced database should contain all records from the logical database at every point in time. In practice, for efficiency and privacy reasons, an \user can only sync records intermittently. This temporary data inconsistency may result in some utility loss. To measure this utility loss, we propose two accuracy metrics as follows:
%In practice, however, an \user is more likely to store some records locally and sync them together to the \cs at a later point.\kartik{In practice, for efficiency and privacy reasons, an \user can sync records only intermittently.}
%\kartik{The temporary data inconsistency may result in some utility loss.}
%\begin{itemize}

{\bf Logical gap.} For each time $t$, the logical gap between the outsourced and logical database is defined as the total number of records that have been received by the \user but have not been outsourced to the \cs. We denote it as \smash{$LG(t) = \left|\mathcal{D}_t - \mathcal{D}_t \cap \hat{\mathcal{D}_t}\right|$}, where $\hat{\mathcal{D}_t} = \{\gamma_0 \cup \gamma_1\cup ...\gamma_t\}$ denotes the set of records that have been outsourced to the \cs until time $t$. Intuitively, a big logical gap may cause large errors on queries over the outsourced database. 
    
{\bf Query error.} For any query $q_t$, query error $QE(q_t)$ is the L1 norm between the true answer over the logical database and the result obtained from $\query$. Thus, $QE(q_t) = |\query(\mathcal{DS}_t, q_t) - q_t(\mathcal{D}_t)|$. While query error is usually caused by the logical gap, different types of query results may be affected differently by the same logical gap. Hence, we use query error as an independent accuracy metric. 
    %\johes{DP noise (as in crypte) may also affect query error right? }
%\end{itemize}
    
    %\kartik{SOLVED: can we somehow get by without using Dec here? I don't have a nice way to do it. Also, Dec is a little misleading since we said we encrypt each record separately.}

    \eat{
    Given a logical update value $u_t$ ($u_t \neq \emptyset$) arrives at time $t$. The record is then synchronized with the \cs at time $t'$. The difference between $t$ and $t'$ is called the unavailable window. Unavailable window denotes the time span that a given record is available in the logical database but not in the outsourced database. This measure can also be interpreted as if the \user stops receiving data after $t$, how long it takes for the outsourced and logical database to be converged to the same final state.\kartik{seems to be on a per record basis? do we want some aggregate too? also, does this not depend on subsequent records?}\nt{I changed this definition, I define it as the record gap between the logical and the outsourced database.}}

%\kartik{should we have some privacy metric too?}\nt{I think privacy metric is covered in the security model.}

\eat{Typically, for a outsourced database, the additional encryption of records or dummy instances won't the query results. Thus, the main factor affecting the accuracy of outsourced databases' queries is that users do not synchronize local data in time. To capture this, we define the error of a \system as the maximum possible size of unsynchronized records.  

\begin{definition}
Given a \system $\Sigma=(Gen, Sync,$ $\prod_{setup},$ $\prod_{update},$ $\prod_{query})$, a dynamic database stream $\mathcal{D} = \{\mathcal{D_t}\}_{0 \leq t \leq T}$ with length $T$. $S_c$ is the maximum possible size of unsynchronized records (under the synchronization strategy provided by $\Sigma$) hosted by a user $\user$. For $\beta \in (0,1)$, we say that $\Sigma$ is $(\alpha, \beta)-$accurate if, $~\textup{Pr}\left[S_c \geq \alpha \right] \leq \beta $
\end{definition}
}

\section{Record Synchronizing Algorithms}\label{sec:sync}
In this section, we discuss our secure synchronization strategies, including na\"ive methods (section ~\ref{sec:na}) and DP based strategies (section ~\ref{sec:dps}).  A comparison concerning their accuracy, performance, and privacy guarantees is provided in Table~\ref{tab:sync}.

\subsection{Na\"ive Synchronization Strategies}\label{sec:na}
We start with three na\"ive methods illustrated as follows:
\begin{enumerate}
    \item \textit{Synchronize upon receipt (SUR).} The SUR policy is the most adopted strategy in real-world applications, where the \user synchronizes new data to the \cs as soon as it is received, and remains inactive if no data is received.
    \item \textit{One time outsourcing (OTO).} The OTO strategy only allows the \user to synchronize once at initial stage $t = 0$. From then on, the \user is offline and no data is synchronized. % all data received after $t = 0$ will not be synchronized at all.
    \item \textit{Synchronize every time (SET).} The SET method requires the \user to synchronize at each time unit, independent of whether a new record is to be updated. More specifically, for any time $t$, if $u_{t} \neq \emptyset$, the \user updates the received record. If $u_{t} = \emptyset$, \user updates a dummy record to \cs. %SET guarantees that data is uploaded to the server as soon as it is received by \user. 
    %\johes{So at each timestep there is a max of one updated record?} 
    %\nt{yes, max of one }
    \end{enumerate}
%To better understand the na\"ive methods, we compare the 3 na\"ive strategies in terms of their respective privacy, accuracy, and performance guarantees.
Given a growing database $\mathcal{D} = \{D_0, U\}$. 
SUR ensures any newly received data is immediately updated into the outsourcing database, thus there is no logical gap at any time. Besides, SUR does not introduce dummy records. However, SUR provides zero privacy guarantee as it leaks the exact update pattern. OTO provides complete privacy guarantees for the update pattern but achieves zero utility for all records received by the \user after $t = 0$. Thus the logical gap for any time equals to $|\mathcal{D}_t| - |\mathcal{D}_0|$. Since OTO only outsources the initial records, the total amount of data outsourced by OTO is bounded by $O(|\mathcal{D}_0|)$. SET provides full utility and complete privacy for any record, and ensures 0 logical gap at any time. However, as a cost, SET outsources a large amount of dummy records, resulting in significant performance overhead. %\kartik{CHECK:opportunity to reduce text if needed.}
%Furthermore, the OTO does not satisfy the P2 principle, where the outsourced database cannot converge to the same state as the logical database after the \user stops receiving new data. 
\eat{
SET strategy, unlike OTO, provides full utility and complete privacy for any records. SET policy ensures the logical database and the outsourced database are consistent at every moment. However, as a cost, the \user has to synchronize even without receiving data. If the record arrival rate is relatively low (i.e., the time between two records arriving the \user is pretty long), then dummy records will take up the majority portion of the outsourced storage, resulting in significant performance overhead.\kartik{maybe add a small table/diagram showing what each strategy provides.}
}
In addition, all of the methods provide fixed privacy, performance, and/or utility. As such, none of them comply with the P3 design principle. OTO also violates P2 as no data is outsourced after initialization.

%\kartik{provides full privacy, worst performance, full utility}

%\kartik{to complement the first two, add another baseline that updates only when a relevant record arrives. Obtains no privacy, ideal performance, ideal accuracy.}

%\kartik{Until now, no cache was used. What is the best one can achieve with 0 or O(1) cache?}

\begin{table}[]
\scalebox{0.97}{
\begin{tabular}{r|c|c|c}
\toprule
\textbf{Group}    & \textbf{Privacy} & \textbf{Logical gap} & \textbf{\begin{tabular}[c]{@{}l@{}}Total number of\\ outsourced records\end{tabular}} \\ \midrule
\textbf{SUR}      & $\infty$-DP             & 0                                                              & $|\mathcal{D}_t|$                \\ 
\textbf{OTO}      & $0$-DP             & $|\mathcal{D}_t| - |\mathcal{D}_0|$ & $|\mathcal{D}_0|$         \\ 
\textbf{SET}      & $0$-DP             & 0                                                                  & $|\mathcal{D}_0| + t$            \\ 
\textbf{DP-Timer} & $\epsilon$-DP             &        $c_{t*}^{t} + O(\frac{2\sqrt{k}}{\epsilon})$                                                                              &  $|\mathcal{D}_t| + O(\frac{2\sqrt{k}}{\epsilon}) + \eta$                            \\ 
\textbf{ANT}      & $\epsilon$-DP             &    $c_{t*}^{t} + O(\frac{16\log{t}}{\epsilon})$                                                                                 &  $|\mathcal{D}_t| + O(\frac{16\log{t}}{\epsilon}) + \eta$                    \\ \bottomrule
\end{tabular}
}
\caption{Comparison of synchronization strategies. $c_{t*}^{t}$ counts the number of record received since last update, $k$ denotes the number of synchronization posted so far, $f$ is cache flush span,  $s$ is the cache flush size, and $\eta =s\floor*{{t}/{f}}$.}\vspace{-3mm}
%\kartik{$< |\mathcal{D}_0| + t$}
\label{tab:sync}
\end{table}

\subsection{Differentially Private Strategies}\label{sec:dps}
\eat{ Under the growing outsourcing database model, one of the indispensable conditions for ensuring data privacy is that $\cs$ should never learn exactly how many new records $\user$ received between two consecutive synchronizations. Thus we design a perturbation mechanism for the $\user$, i.e. every time when synchronize data, a perturbation is performed to hide the true data size. An intuitive method is to add dummy records every-time when synchronize. \cite{he2017composing} and \cite{bater2018shrinkwrap} provide a dummy record generation method based on truncated Laplace Mechanism which could eventually achieve $(\epsilon, \delta)-$DP. In this work, we invent an innovative data perturbation mechanism that would provide $\epsilon$-DP privacy guarantee. Algorithm \ref{algo:perturb} illustrates the perturb operation.

\begin{algorithm}[]
\caption{data perturbation function}
\begin{algorithmic}[1]
\Function{\ptb}{$c, \Delta, \epsilon, \texttt{cache} $}
    \State $\tilde{c} \gets c + \lap(\frac{\Delta}{\epsilon})$
    \If {$\tilde{c} \geq 0$}
    \State {\bf return} $\texttt{cache.read}(\tilde{c})$
    \Else
    \State {\bf return} $\emptyset$
\end{algorithmic}
\label{algo:perturb}
\end{algorithm}

}

\subsubsection{Timer-based synchronization (DP-Timer)} The timer-based synchronization method, parameterized by $T$ and $\epsilon$, performs an update every $T$ time units with a varying number of records. The detailed algorithm is described in Algorithm~\ref{algo:timer}. %\kartik{whenever you refer something use the following Algorithm~\ref{algo:timer}. The words Algorithm/Figure are always in full and start with a capital letter. Place a tilde between the word and the reference. This ensures that there is a space between them but they are not separated on different lines.}
%For ease of explanation, we introduces a new term called update event stream. For a given update stream $U = {u_1, u_2,...,u_t,...}$, the update event stream of $U$ is denoted as $E_U = {x_1, x_2, ..., x_t,...}$, where $\forall x_i \in E_U, x_i = 1$ if $u_i \neq \emptyset$, else $x_i = 0$. 

%As we mentioned before, the update scheduler $U$ is not necessarily to be exactly match $\user$'s update strategy $U'$. In this section, we discuss 5 update strategies for $\user$ to dynamically  synchronize it's local database with outsourced data on $\cs$:\\
%\am{What is the `noend` at the start of the each of the algorithm names?}
\begin{algorithm}[]
\caption{Timer Method (DP-Timer)}
\begin{algorithmic}[1]
\Statex \textbf{Input}: growing database $\mathcal{D} = \{\mathcal{D}_0, U\}$,  privacy budget $\epsilon$, timer $T$, and local cache $\sigma$.
\State $c\gets0$, $t^{*}\gets0$ %\kartik{Minor: consistency in notation. if $\gets$ used for assignment, what is $=$ used for?}
\State $\gamma_0 \gets \ptb(|\mathcal{D}_0|, \epsilon, \sigma) $
\State $\mathsf{Signal}$ the \user to run $\setup(\gamma_0)$.  %\kartik{should this be Pi-setup? Should it take gamma0 as input? what is $d$?}
%\kartik{does nto type check. $\prod_{update}$ takes many parameters.}
\For{$t \gets 1, 2, 3, ...$}
\If {$u_t \neq \emptyset$}
%\State $c \gets c + 1$
\State $\writec(\sigma, u_t)$ (store $u_t$ in the local cache) %\kartik{can use alg environment that does not print 'end if'. will save a line.}
\EndIf
\If{$t \mod T = 0$}
\State $c \gets \sum_{i\gets t - T + 1}^{t} x_i$ $ ~|~(x_i \gets 0$, if $u_i = \emptyset$, else $x_i \gets 1)$
\State $\gamma_t \gets \ptb(c, \epsilon, \sigma) $
\State $\mathsf{Signal}$ the \user to run $\prod_{\text{update}}(\gamma_t, \mathcal{DS}_t)$.
%\State $t^{*} \gets t$ \kartik{perhaps $t^*$ is not needed. Can check from t-T+1 to t}
\EndIf
%\If {$sc \geq K$}
%\State $d\gets\lcache.\texttt{flush(sz)}$, {\bf run} $\prod_{update}\left( d \right)$ \kartik{I don't see the diff between flush, write, reset. Why don't we just perform a write here?}
%\State $sc \gets 0$
%\EndIf
\EndFor
\end{algorithmic}
\label{algo:timer}
\end{algorithm}

\re{Initially, we assume the \user stores $\mathcal{D}_0$ in the local cache $\sigma$. DP-Timer first outsources a set of data $\gamma_0$ to the \cs (Alg~\ref{algo:timer}:1-3), where $\gamma_0$ is fetched from $\sigma$ using \ptb (defined in Algorithm~\ref{algo:perturb}) operator. \ptb takes as input a count $c$, a privacy parameter $\epsilon$ and a local cache $\sigma$ to be fetched from. It first perturbs the count $c$ with Laplace noise $\lap(\frac{1}{\epsilon})$, and then fetches as many records as defined by the noisy count from $\sigma$. When there is insufficient data in the local cache, dummy data is added to reach the noisy count. After the initial outsourcing, the \user stores all the received data in the local cache $\sigma$ (Alg~\ref{algo:timer}:5-7), and DP-Timer will signals for synchronization every $T$ time steps. Whenever a synchronization is posted, the \user counts how many new records have been received since the last update, inputs it to the \ptb operator, and fetches $\gamma_t$. The fetched data is then synchronized to the \cs via the $\update$ protocol (Alg~\ref{algo:timer}:8-11).
The logic behind this algorithm is to provide a synchronization strategy with a fixed time schedule but with noisy record counts at each sync. The DP-Timer method strictly follow the policy of updating once every $T$ moments, but it does not synchronize exactly as much data as it receives between every two syncs. Instead, it may synchronize with additional dummy data, or defer some data for future synchronization.}
\eat{According to Algorithm~\ref{algo:timer}, whenever a synchronization is posted, the \user counts how many new records have been received since the last synchronization, perturbs this count with Laplace noise, and fetches from the local cache as many records as specified by the noisy count. When there is insufficient data in the local cache, dummy data is added to reach the noisy count. We refer the fetching data operation with the noisy size as the \ptb operation. Algorithm~\ref{algo:perturb} illustrates the details of \ptb operation. After \ptb returns the fetched data, the \user runs $\update$ to synchronize with the \cs. } 
\eat{Then the \user fetches records from the local cache by calling \ptb$(c, \epsilon, \sigma)$. What \ptb does is to add a Laplace noise \lap($\frac{1}{\epsilon}$) to the number of records that the \user needs to read , and then fetch the cached data with the perturbed size.} 
\vspace{2mm}
\begin{algorithm}[h]
\caption{Perturbed Record Fetch}
\label{algo:perturb}
\begin{algorithmic}[1]
\Function{\ptb}{$c, \epsilon, \sigma $}
    \State $\tilde{c} \gets c + \lap(\frac{1}{\epsilon})$
    \If {$\tilde{c} > 0$}
    \State {\bf return} $\readc(\sigma, \tilde{c})$ (read records with noisy size)
    \Else
    \State {\bf return} $\emptyset$ (return nothing if $\tilde{c} \leq 0$)%\kartik{there seems to be a compilation concern here.}
    \EndIf
\EndFunction
\end{algorithmic}
\end{algorithm}
\begin{theorem}\label{lg:timer}
Given privacy budget $\epsilon$, and \smash{$k\geq 4\log{\frac{1}{\beta}}$} where $k$ denotes the number of times the \user has synchronized so far, $\beta \in (0,1)$, and \smash{$\alpha = \frac{2}{\epsilon}\sqrt{k\log{\frac{1}{\beta}}}$. This satisfies $\textup{Pr}\left[LG(t) \geq \alpha + c_{t*}^{t}\right] \leq \beta$}, where $LG(t)$ is the logical gap at time $t$ under DP-Timer method, and $c_{t*}^{t}$ counts how many records received since last update. %\kartik{not entirely sure, but writing this as $\Pr[LG \geq X] < \beta$ may be helpful. On first read, it felt like the logical gap is always large.}
\end{theorem}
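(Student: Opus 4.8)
The plan is to view the logical gap of DP-Timer as a one-dimensional reflected random walk whose increments are the negated Laplace noises injected by $\ptb$, and then to concentrate the running maximum of that walk.

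\emph{Reduction.} Fix a time $t$, and let the synchronizations performed by time $t$ be the initial outsourcing of $\gamma_0$ together with the updates at times $T,2T,\dots,kT$; write $Z_0,Z_1,\dots,Z_k$ for the i.i.d.\ $\lap(\tfrac1\epsilon)$ samples drawn inside the corresponding calls to $\ptb$ (Alg.~\ref{algo:perturb}, line~2). Let $c_i$ be the true number of real records in the $i$-th batch (so $c_0=|\mathcal{D}_0|$, and $c_{t*}^{t}$ is the number received strictly after time $kT$), and let $G_i\ge 0$ be the number of real records remaining in the cache $\sigma$ immediately after the $i$-th synchronization, with $G_{-1}=0$. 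Since every real record the \user has received by time $t$ is either already outsourced, still in $\sigma$, or received strictly after time $kT$, counting these three groups gives the exact identity $LG(t)=G_k+c_{t*}^{t}$; hence it suffices to prove $\Pr[G_k\ge\alpha]\le\beta$.

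\emph{Reflected recursion and unrolling.} Just before the $i$-th synchronization the cache holds $G_{i-1}+c_i$ real records, and $\ptb$ pops $\min\bigl(\max(c_i+Z_i,0),\,G_{i-1}+c_i\bigr)$ of them (padding with dummies when short), so
$$G_i=\bigl(G_{i-1}+c_i-\max(c_i+Z_i,0)\bigr)^{+}\;\le\;(G_{i-1}-Z_i)^{+},$$
where the inequality is obtained by casing on the sign of $c_i+Z_i$ and is in fact an equality whenever the injected noise does not exceed the batch size $c_i$. Since $x\mapsto(x-b)^{+}$ is monotone, this Lindley-type recursion unrolls to
$$G_k\;\le\;\max_{0\le m\le k+1}\Bigl(-\sum_{i=m}^{k}Z_i\Bigr)\;=\;\max\Bigl\{\,0,\ -Z_k,\ -Z_k-Z_{k-1},\ \dots,\ -{\textstyle\sum_{i=0}^{k}Z_i}\,\Bigr\},$$
so $G_k$ is at most the running maximum of the suffix sums of $k+1$ i.i.d.\ symmetric Laplace variables — which is dominated by, and in the typical regime equal to, the magnitude of the cumulative noise $\sum_i Z_i$.

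\emph{Concentration.} Because the $Z_i$ are i.i.d.\ and symmetric, reversing their order shows the running maximum above is equidistributed with $\max_{0\le j\le k+1}A_j$ for the prefix sums $A_j=\sum_{i\le j}Z_i$; the process $\{e^{\lambda A_j}\}_j$ is a non-negative submartingale for every $\lambda>0$, so Doob's maximal inequality and independence give, for $0<\lambda<\epsilon$,
$$\Pr\bigl[\,G_k\ge\alpha\,\bigr]\;\le\;e^{-\lambda\alpha}\bigl(\mathbb{E}\,e^{\lambda Z_1}\bigr)^{k+1}\;=\;e^{-\lambda\alpha}\,(1-\lambda^2/\epsilon^2)^{-(k+1)}.$$
Optimizing over $\lambda$, the hypothesis $k\ge 4\log\tfrac1\beta$ is exactly what forces the optimal $\lambda$ into the range $\lambda^2/\epsilon^2\le\tfrac14$, i.e.\ keeps the Laplace sum inside its sub-Gaussian regime where $\mathbb{E}\,e^{\lambda Z_1}\le e^{(1+o(1))\lambda^2/\epsilon^2}$; a short computation then shows that $\alpha=\tfrac{2}{\epsilon}\sqrt{k\log\tfrac1\beta}$ makes the $e^{-\lambda\alpha}$ factor dominate and drives the bound down to $\beta$. (Equivalently, one may simply invoke the standard tail bound for a sum of $k$ independent $\lap(\tfrac1\epsilon)$ variables, whose value at $\tfrac{2}{\epsilon}\sqrt{k\log\tfrac1\beta}$ is $\le\beta$ precisely in the regime $k\ge4\log\tfrac1\beta$, since that is the boundary of the sub-Gaussian range.) Combining with $LG(t)=G_k+c_{t*}^{t}$ finishes the proof.

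\emph{Main obstacle.} The delicate step is the concentration estimate: the logical gap is not literally $\lvert\sum_i Z_i\rvert$ — the FIFO-plus-padding behaviour of the cache turns it into the running maximum of a Laplace walk — so one must route through a maximal inequality rather than a single Chernoff bound, and then verify that the optimized Chernoff parameter lands in the range where $\mathbb{E}\,e^{\lambda Z_1}$ is well approximated by $e^{\lambda^2/\epsilon^2}$, which is precisely the role played by the assumption $k\ge 4\log\tfrac1\beta$ and is what pins down the stated constant $2$.
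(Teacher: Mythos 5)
Your proof is correct and follows the same overall route as the paper: reduce the logical gap to the residual cache size plus $c_{t*}^{t}$, observe that the cache size obeys a Lindley-type reflected recursion whose increments are the negated $\lap(\frac{1}{\epsilon})$ noises injected by \ptb, unroll it into the running maximum of Laplace partial sums, and then apply a Chernoff/MGF tail bound with the constant pinned down by $k\geq 4\log\frac{1}{\beta}$. The one genuine difference is how the maximum is handled: the paper proves a tail bound for a single sum of $k$ i.i.d.\ Laplace variables (its Lemma~\ref{lemma:sum} and Corollary~\ref{col:sum}) and then asserts in Corollary~\ref{col:sum-2} that the same bound holds for $\max_{0<j\leq k}S_j$ because each $S_j$ individually satisfies it, whereas you route the maximum through Doob's maximal inequality applied to the submartingale $e^{\lambda A_j}$, which controls the running maximum at no extra cost and is in fact the cleaner justification of the step the paper's corollary glosses over. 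You are also more explicit about two points the paper suppresses: the truncation $\max(c_i+Z_i,0)$ and the capping at the available cache content (you correctly turn the recursion into an inequality dominated by $(G_{i-1}-Z_i)^{+}$, while the paper writes it as an equality), and the inclusion of the setup noise, which gives you $k+1$ rather than $k$ Laplace terms; the latter costs you a harmless constant slack (the optimized bound lands at $\beta^{k/(k+1)}$ rather than exactly $\beta$ unless you absorb the extra term into $\alpha$), which does not affect the stated $O(\frac{2\sqrt{k}}{\epsilon})$ guarantee.
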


Theorem~\ref{lg:timer} provides an upper bound on the logical gap incurred by DP-Timer, due to space concerns we defer the proof in the Appendix~\ref{sec:proofs}. As a direct corollary of Theorem~\ref{lg:timer}, the logical gap is always bounded by $O({2\sqrt{k}}/{\epsilon})$. Knowing that, the logical gap can also be used to denote the total records that are on-hold by the \user, thus we can conclude that the local cache size of DP-Timer is also bounded by $O({2\sqrt{k}}/{\epsilon})$.
%\kartik{we may want to show this theoretical curve on the experiment figure.} 
However, if we consider an indefinitely growing database, then the 
local cache size (logical gap) grows indefinitely. Thus to prevent the local cache (logical gap) from getting too large, we employ a cache flush mechanism which refreshes the local cache periodically. The cache flush mechanism flushes a fixed size data with a fixed interval (usually far greater than $T$). The flushed data will be synchronized to the \cs immediately. If there is less data than the flush size, the mechanism empties the cache, and synchronizes with additional dummy records. This further guarantees every time when flush is triggered, it always incurs a fixed update volume. Moreover, Theorem~\ref{lg:timer} also reveals that it is possible to get a bounded local cache size. For example, if we set the flush size $s=C$, and the flush interval $f < {T(\epsilon C)^2}/{4\log({1}/{\beta})}$, where $C > 0, C \in \mathbb{Z}^{+}$. Then at any time $t$, with probability at least $1-\beta$, the cache size is bounded by $O(C)$. Next, we discuss the performance overhead with respect to the DP-Timer.
%\kartik{Is this cache flush evaluated?}

\begin{theorem}\label{sz:timer}
Given privacy budget $\epsilon$, flush interval $f$, flush size $s$, and $\beta \in (0,1)$. Let \smash{$\alpha = \frac{2}{\epsilon}\sqrt{k\log{\frac{1}{\beta}}}$}, and $\eta = s\floor*{{t}/{f}}$. Then for any \smash{$t > 4T\log{(\frac{1}{\beta})}$}, the total number of records outsourced under DP-Timer, $|\mathcal{DS}_t|$, satisfies $\textup{Pr}\left[|\mathcal{DS}_t| \geq |\mathcal{D}_t| + \alpha + \eta\right]\leq \beta$.
\end{theorem}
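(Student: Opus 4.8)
The plan is to account for the provenance of every record residing on the \cs at time $t$ and to show the excess over $|\mathcal{D}_t|$ splits into a deterministic flush part (at most $\eta$) and a Laplace-noise part that concentrates at $\alpha$. First I would write $|\mathcal{DS}_t| = R_t + N_t + \Phi_t$, where $R_t$ counts the genuine records already outsourced, $N_t$ the dummies injected by the \ptb calls of Algorithm~\ref{algo:timer}, and $\Phi_t$ the dummies injected by cache flushes. By conservation of genuine records (every record the \user has received is either on the \cs or still in $\sigma$), $R_t = |\mathcal{D}_t| - LG(t)$, where $LG(t)$ is exactly the logical gap controlled by Theorem~\ref{lg:timer}. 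Since each of the $\lfloor t/f\rfloor$ flushes emits exactly $s$ records, of which $\rho_i \ge 0$ are genuine and the rest dummy, we get $\Phi_t = \eta - \rho$ with $\rho = \sum_i \rho_i \le \eta$.

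The crux is to bound $N_t$ against $LG(t)$ so that the gap term cancels, via a reflected-random-walk view of the cache. Let $B_i$ be the number of genuine records in $\sigma$ just before the $i$-th synchronization (set $B_0 = 0$ and, for uniformity, fold the setup step in as $i=0$ with window count $|\mathcal{D}_0|$). At step $i$, \ptb reads $(c_i + Z_i)^+$ records from a cache holding $B_i + c_i$ genuine records, with $Z_i \sim \lap(1/\epsilon)$ i.i.d.; hence it adds $D_i = (c_i+Z_i)^+ - \min\!\big((c_i+Z_i)^+,\, B_i+c_i\big)$ dummies, and the post-step, post-flush backlog obeys $B_{i+1} = \big(B_i + c_i - (c_i+Z_i)^+\big)^+ - \phi_i$, with $\phi_i \ge 0$ the genuine records removed by the flush (if any) between syncs $i$ and $i+1$, so $\sum_i \phi_i = \rho$. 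Rearranging yields the per-step identity $D_i = (B_{i+1} - B_i) + (c_i+Z_i)^+ - c_i + \phi_i$. Telescoping over $i = 0, \dots, k$ with $B_0 = 0$, and using $(c_i+Z_i)^+ - c_i = Z_i + (c_i+Z_i)^-$, I obtain
\[
N_t = \sum_{i=0}^{k} D_i = B_{k+1} + \sum_{i=0}^{k} Z_i + \rho + \sum_{i=0}^{k}(c_i+Z_i)^-,
\]
where the last ``clipping'' sum vanishes whenever $\tilde c_i \ge 0$ and is otherwise supported on the rare events $\{Z_i < -c_i\} \subseteq \{Z_i < -1\}$ (under the mild assumption $c_i \ge 1$), which I would absorb into the failure probability by a union bound.

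Combining the pieces and using $B_{k+1} \le LG(t)$ (the backlog never exceeds the logical gap),
\[
|\mathcal{DS}_t| = R_t + N_t + \Phi_t \le \big(|\mathcal{D}_t| - LG(t)\big) + \big(LG(t) + \textstyle\sum_i Z_i + \rho\big) + (\eta - \rho) = |\mathcal{D}_t| + \textstyle\sum_{i=0}^{k} Z_i + \eta
\]
up to the negligible clipping term: the genuine-record evacuation $\rho$ cancels, and so does $LG(t)$, so I never need to re-derive the concentration of the logical gap. The final step is a one-sided tail bound on $\sum_{i=1}^{k} Z_i$, a sum of $k$ i.i.d. $\lap(1/\epsilon)$ variables (mean $0$, sub-exponential with scale $1/\epsilon$): by a Bernstein/Chernoff inequality $\Pr\big[\sum_i Z_i \ge \alpha\big] \le \exp\!\big(-c\min(\alpha^2\epsilon^2/k,\, \alpha\epsilon)\big)$, and the hypothesis $t > 4T\log(1/\beta)$, i.e. $k > 4\log(1/\beta)$, places us in the sub-Gaussian branch where $\alpha^2\epsilon^2/k \asymp \log(1/\beta)$, yielding the claimed $\beta$ after fixing constants — the same Laplace-sum concentration and the same regime constraint as in the proof of Theorem~\ref{lg:timer}.

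I expect the main obstacle to be the careful bookkeeping of the cache/flush interaction: making the telescoping identity exact while the flush stream, the \ptb reads, and the occasional clipping ($\tilde c_i < 0$) all act on the same backlog, and verifying that the genuine-record quantities $\rho$ and $LG(t)$ truly cancel rather than merely being of the same order. Once $N_t + \Phi_t \le LG(t) + \sum_i Z_i + \eta$ is established, the probabilistic part is a routine sub-exponential tail bound.
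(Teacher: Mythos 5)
Your proposal is correct in substance and lands on exactly the inequality the paper uses --- $|\mathcal{DS}_t| \le |\mathcal{D}_t| + \eta + \sum_i Z_i$, followed by the same concentration bound for a sum of $k$ i.i.d.\ $\lap(1/\epsilon)$ variables in the regime $k > 4\log(1/\beta)$ --- but your route to that inequality is genuinely different and considerably more elaborate. The paper's proof is a two-line accounting: by construction of \ptb, each synchronization uploads a noisy count $c_i + Y_i$ of records regardless of what the cache actually contains, and each flush uploads exactly $s$, so $|\mathcal{DS}_t| = \sum_i c_i + \sum_{j \le k} Y_j + s\lfloor t/f\rfloor \le |\mathcal{D}_t| + \sum_j Y_j + \eta$; it never needs to distinguish genuine from dummy records. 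You instead split the outsourced volume into genuine records, \ptb-dummies and flush-dummies, track the cache backlog as a reflected (Lindley-type) walk --- the device the paper reserves for the proof of Theorem~\ref{lg:timer} --- and telescope so that $LG(t)$ and the flushed-genuine count $\rho$ cancel. Your version buys an explicit justification that flushed genuine records are not double counted and that the excess over $|\mathcal{D}_t|$ is exactly dummy volume; the paper's version buys brevity by exploiting that the per-sync upload volume is, by design, independent of the cache's genuine/dummy composition.

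The one step that does not hold up as stated is your treatment of the clipping term $\sum_i (c_i+Z_i)^-$: the events $\{Z_i < -c_i\}$ are not rare (for $c_i = 0$ or $1$ they occur with constant probability about $\tfrac{1}{2}e^{-\epsilon}$), so a union bound over $k$ synchronizations cannot absorb them into $\beta$, and the assumption $c_i \ge 1$ need not hold for windows with no arrivals. Handled honestly, this nonnegative term would degrade the constant in $\alpha$. You are in good company, though: the paper's own proof silently writes the per-sync upload as $c_i + Y_i$ rather than $(c_i+Y_i)^+$, i.e.\ it drops the same term without comment, so on this point your argument is no less rigorous than the one in the paper.
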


Theorem~\ref{sz:timer} provides an upper bound for the outsourced data size at each time $t$. Moreover, it shows that the total amount of dummy data incorporated is bounded by $\eta + O(2\sqrt{k}/\epsilon)$.
%We show that DP-Timer satisfies definition~\ref{def:dpsogdb} in section~\ref{sec:sp} (P1 principle).
Due to the existence of the cache flush mechanism, DP-Timer guarantees that for a logical database with length $L$, all records will be synchronized before time $t = f\times L/s$. Recall that a FIFO based local cache preserves the order of incoming data, thus DP-Timer satisfies the strong eventually consistency property (P3). In addition, as shown by Theorem~\ref{lg:timer} and~\ref{sz:timer}, both accuracy and performance metrics are related to $\frac{1}{\epsilon}$, which shows that DP-Timer satisfies the P2 principle. %as by changing $\epsilon$ the strategy could provide different accuracy and performance guarantees. 

\eat{
For a \system implemented with DP-Timer synchronization, the corresponding leakage profiles are summarized as follows: The setup protocol leaks the number of records transmitted at the initial phase. The query protocol leaks the size of encrypted results for each queries. And the update protocol leak reveals the entire update pattern (history), which consists of a series of synchronization timestamps and the number of records transmitted during each synchronization. We theoretically analyze the security guarantees in Section~\ref{sec:sp}, that proves DP-Timer satisfy Definition~\ref{def:dpsogdb}.
}

\eat{Similar to SUR and SET method, the DP-Timer strategy satisfy the consistent eventually principle (P2). Besides, the DP-Timer method provides a privacy parameter $\epsilon$, that can be used the the users to adjust the privacy level it guarantees, which compliant with P3 principle.}

%the aforementioned leakage profiles for DP-Timer method are bounded in a differentially private sense. %\cw{added the leakage profiles for DP-Timer method}

\eat{
\texttt{Sync} flush the local cache with a certain frequency, currently, it is defined as flushing the cache after every $K$ times of synchronizations.

\kartik{you should explain the intuitive advantage and potential disadvantage of this method. for privacy, this seems worse than the other two. for performance and accuracy, you should intuitively claim why this is better.}

 we introduce the notion of a perturbed record fetch \kartik{this seems like the size of an update. Why do we call it a fetch?} 
 }
 
%Under the growing outsourcing database model, one of the indispensable conditions for ensuring data privacy is that $\cs$ should never learn exactly how many new records $\user$ received between two consecutive synchronizations. Thus we design a perturbation mechanism for the $\user$, i.e. every time when synchronize data, a perturbation is performed to hide the true data size. 
%An intuitive method is to add dummy records every-time when synchronize. \cite{he2017composing} and \cite{bater2018shrinkwrap} provide a dummy record generation method based on truncated Laplace Mechanism which could eventually achieve $(\epsilon, \delta)-$DP. In this work, we invent an innovative data perturbation mechanism that would provide $\epsilon$-DP privacy guarantee. 

\subsubsection{Above noisy threshold (DP-ANT)} \re{The Above noisy threshold method, parameterized by $\theta$ and $\epsilon$, performs an update when the \user receives approximately $\theta$ records. The detailed algorithm is described in Algorithm~\ref{algo:cbuffer}.

Similar to DP-Timer, DP-ANT starts with an initial outsourcing (Alg~\ref{algo:cbuffer}:1-2) and the \user then stores all newly arrived records in the local cache $\sigma$ (Alg~\ref{algo:cbuffer}:6-9). After the initial outsourcing, DP-ANT splits the privacy budget to two parts $\epsilon_1$, and $\epsilon_2$, where $\epsilon_1$ is used to distort the threshold as well as the counts of records received between two updates, and $\epsilon_2$ is used to fetch data. The \user keeps track of how many new records received since the last update at every time step, distorts it with DP noise, and compares the noisy count to a noisy threshold (Alg~\ref{algo:cbuffer}:10,11). The \user synchronizes if the noisy count exceeds the noisy threshold. After each synchronization, the user resets the noise threshold with fresh DP noise (Alg~\ref{algo:cbuffer}:14) and repeats the aforementioned steps. }

\eat{Under the DP-ANT strategy, the \user keeps track of the count of new records received since the last update, distorts it with a DP noise, and compares the noisy count to a noisy threshold. The \user synchronizes if the noisy count exceeds the noisy threshold. After each synchronization, the user resets the noise threshold with fresh DP noise and repeats the aforementioned steps. Algorithm~\ref{algo:cbuffer} describes this process in detail. 

After the initial outsourcing, the strategy splits the privacy budget to $\epsilon_1$, and $\epsilon_2$, where $\epsilon_1$ is used to distort the threshold as well as the counts of records received between two updates, and $\epsilon_2$ is used to make the perturbed fetch.}

\begin{algorithm}[]
\caption{Above Noisy Threshold (ANT)}
\begin{algorithmic}[1]
\Statex
\textbf{Input}: growing database $\mathcal{D} = \{\mathcal{D}_0, U\}$, privacy budget $\epsilon$, threshold $\theta$, and the local cache $\sigma$.
%\Statex
\State $\gamma_0 \gets \ptb(|\mathcal{D}_0|, \epsilon, \sigma) $
\State $\mathsf{Signal}$ the \user to run $\setup(\gamma_0)$.%\kartik{Pi-setup?}
\State $\epsilon_1 \gets \frac{1}{2}\epsilon, \epsilon_2 \gets \frac{1}{2}\epsilon$
\State $ \tilde{\theta} \leftarrow \theta + \lap(2/\epsilon_1)$, $c \leftarrow 0$, $t^{*} \gets 0$
\For{$t \leftarrow 1,2,...$} %\kartik{consistency with $\gets$, $=$}
\State $v_t \leftarrow \lap(4/\epsilon_1)$
\If{$u_t \neq \emptyset$}
\State store $u_t$ in the local cache, $\writec(\sigma, u_t)$ 
\EndIf
\State $c \gets \sum_{i\gets t^{*} + 1}^{t} x_i$ $ ~|~(x_i \gets 0$, if $u_i \gets \emptyset$, else $x_i \gets 1)$
\If{$c + v_t \geq \tilde{\theta}$}
\State $\gamma_t \gets \ptb(c, \epsilon_2, \sigma) $%\kartik{perturb takes three parameters. remove 1?}
\State $\mathsf{Signal}$ the \user to run $\update \left( \gamma_t, \mathcal{DS}_{t} \right)$
\State $\tilde{\theta} \leftarrow \theta + \lap(2/\epsilon_1), c \gets 0$, $t^{*} \gets t$
\EndIf
\EndFor
\end{algorithmic}
\label{algo:cbuffer}
\end{algorithm}

\eat{Unlike the DP-Timer algorithm, which has a fixed synchronization frequency, DP-ANT dynamically adjusts its synchronization frequency according to the rate at which new records are received.} %\kartik{the statement is phrased as if this is an advantage compared to dp-timer? is that so? independently, we should perhaps bring this up in the evaluation?} \nt{I think it's not an advantage, it's just a different sync pattern, as Timer has fixed sync scheduler but ANT does not.} 
\re{DP-ANT synchronizes based on how much data the \user receives. However, it does not simply set a fixed threshold for the \user to synchronize whenever the amount of data received exceeds that threshold. Instead, it utilizes a strategy that allows the \user to synchronize when the amount of received data is approximately equal to the threshold. Below, we analyze DP-ANT's accuracy and performance guarantees.}

%Thus DP-ANT can dynamically adjust its synchronization frequency according to the rate at which new records are received.

%Same as DP-Timer, the DP-ANT also satisfies P1 (see security proofs in section~\ref{sec:sp}) principle.
\begin{theorem}\label{lg:ant}
Given privacy budget $\epsilon$ and let \smash{$\alpha = \frac{16(\log{t} + \log{{2}/{\beta}})}{\epsilon}$}. Then for $\beta \in (0,1)$, it satisfies $\textup{Pr}\left[LG(t) \geq \alpha + c_{t*}^{t}\right] \leq \beta$, where $LG(t)$ is the logical gap at time $t$ under DP-ANT method, and $c_{t*}^{t}$ counts how many records received since last update.
%Given the privacy budget $\epsilon$ and $k$. For any time $t$, $c_{t*}^{t}$ denotes the number of records received since last update, then with probability at most $\beta$, the logical gap of DP-Timer satisfies
%$$~~\textup{Pr}\left[ LG(t) \geq c_{t*}^{t} +  \frac{16(\log{t} + \log{{4}/{\beta}})}{\epsilon} ~\right] \leq {\beta}$$
\end{theorem}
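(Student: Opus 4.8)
The plan is to recognize that DP-ANT's synchronization test is an iterated instance of the Sparse Vector (\textsc{AboveThreshold}) mechanism: it maintains the running count $c$ of records received since the last update, perturbs it by $\lap(4/\epsilon_1)$, compares against the noisy threshold $\tilde{\theta}=\theta+\lap(2/\epsilon_1)$ with $\epsilon_1=\epsilon/2$, and resets after each ``fire''. So the first step is an accuracy lemma for this mechanism, adapted to the continual setting: with probability at least $1-\beta$, simultaneously for every time step $t'\le t$, if DP-ANT does \emph{not} synchronize at $t'$ then $c_{t^{*}}^{t'}\le\theta+\alpha$, and if it \emph{does} synchronize at $t'$ then $c_{t^{*}}^{t'}\ge\theta-\alpha$, where $\alpha=\frac{16(\log t+\log(2/\beta))}{\epsilon}$. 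I would prove this with the usual two-part union bound: $\textup{Pr}[\,|\lap(2/\epsilon_1)|\ge\alpha/2\,]\le\beta/2$ controls the single threshold draw, and $\textup{Pr}[\exists\,t'\le t:\ |\lap(4/\epsilon_1)|\ge\alpha/2\,]\le\beta/2$ controls the per-step draws; on the complement of these events both comparisons in Algorithm~\ref{algo:cbuffer} go through exactly as in the standard analysis, and $\alpha=\frac{8}{\epsilon_1}(\log t+\log(2/\beta))=\frac{16}{\epsilon}(\log t+\log(2/\beta))$ is precisely the value for which both tail bounds hold --- which is the $\alpha$ appearing in the theorem.

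Second, I would connect $LG(t)$ to this accuracy statement. Let $t^{*}$ be the time of the most recent synchronization at or before $t$ (or $0$). Every record received in $(t^{*},t]$ is still cached, contributing exactly $c_{t^{*}}^{t}$ to $LG(t)$; the only other contribution is the \emph{residual} $R$, i.e., records that arrived on or before $t^{*}$ but were not shipped by the perturbed fetch at $t^{*}$ --- which happens precisely when the Laplace noise inside \ptb was negative and \readc popped fewer records than the cache held. Hence $LG(t)=c_{t^{*}}^{t}+R$, and the theorem reduces to proving $\textup{Pr}[R\ge\alpha]\le\beta$.

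The main obstacle is bounding $R$. Writing $R_k$ for the residual right after the $k$-th synchronization and $Z_k\sim\lap(2/\epsilon)$ for the noise drawn inside \ptb at that sync, the FIFO cache yields the recursion $R_k=\max(0,\,R_{k-1}-Z_k)$ whenever $\tilde c>0$ (the event $\tilde c\le0$, on which nothing is shipped, has negligible probability given that $c\approx\theta$ at each sync by the accuracy lemma, and is folded into the same union bound). I would control this reflected walk through a maximal tail bound on the partial sums of the $Z_k$'s --- equivalently $R_k=\max_{0\le j<k}\bigl(-\sum_{i=j+1}^{k}Z_i\bigr)^{+}$ --- combined with the fact that ANT, like DP-Timer, periodically flushes the cache, which resets the residual and caps the number of syncs over which the walk accumulates; a union bound over the at most $t$ perturbations, $\textup{Pr}[\exists j\le t:\ |Z_j|\ge\lambda]\le t\,e^{-\lambda\epsilon/2}$, then delivers the $O(\log t/\epsilon)$ bound on $R$ after fixing constants. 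I expect this residual analysis --- obtaining a bound logarithmic in $t$ rather than one that grows with the number of synchronizations --- to be the delicate step; the remaining checks are routine (the ``pop everything plus dummies'' branch of \readc only decreases $R$, and the bookkeeping that combines the failure probability of the accuracy lemma with that of the residual bound so the total stays $\le\beta$). Putting $LG(t)=c_{t^{*}}^{t}+R$ together with $R<\alpha$ then yields $\textup{Pr}[LG(t)\ge c_{t^{*}}^{t}+\alpha]\le\beta$.
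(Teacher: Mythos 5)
Your first two steps track the paper's own argument. The paper likewise proves an AboveThreshold-style accuracy statement by union-bounding the threshold noises $\lap(4/\epsilon)$ over the $k$ synchronizations and the per-step count noises $\lap(8/\epsilon)$ over the $t$ time steps (one slip in your sketch: the threshold is \emph{re-drawn after every synchronization}, so you need a union bound over all $k$ threshold draws, not a ``single threshold draw''), and it likewise separates out $c_{t^*}^{t}$ and treats the remaining gap $\max\bigl(\sum_{j\in t'}c_j-\sum_{j\in t'}a_j,\,0\bigr)$, i.e.\ the shortfall created by the perturbed fetches at the sync times.

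The genuine gap is in your third step, the residual bound. The paper does not use a Lindley/reflected-walk analysis for DP-ANT (that device appears only in the proof of Theorem~\ref{lg:timer}) and never invokes the cache flush in this proof; instead it bounds the shortfall by $\sum_{i\in t'}|a_i-c_i|$ and folds the fetch noises $\lap(2/\epsilon)$ into the same union bound that controls the threshold and count noises, splitting the failure probability as $\beta/4+\beta/2+\beta/4$. Your proposed replacement does not deliver the stated bound: a union bound giving $|Z_j|\le O(\log t/\epsilon)$ for every individual fetch noise controls $\max_j|Z_j|$, not the reflected walk $R_k=\max_{0\le j<k}\bigl(-\sum_{i=j+1}^{k}Z_i\bigr)^{+}$, which for i.i.d.\ mean-zero Laplace increments is of order $\sqrt{k}/\epsilon$ --- exactly the $O(2\sqrt{k}/\epsilon)$ behavior the paper establishes for DP-Timer via its Chernoff-type lemma on sums of Laplace variables; a per-increment bound only yields $k\cdot O(\log t/\epsilon)$. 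The appeal to flushing cannot rescue the argument as stated, since the flush parameters $f,s$ do not appear in the theorem at all, and the flush ships a fixed $s$ records every $f$ steps rather than resetting the residual or capping the number of syncs over which the walk accumulates. So the step you yourself flagged as delicate is precisely where your route breaks down, and it is the step the paper resolves by a different decomposition: charging the shortfall to the per-sync deviations $|a_i-c_i|$ (together with the SVT accuracy events) inside a single union bound, rather than analyzing the cache-size process as a reflected random walk.
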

The above theorem provides an upper bound for DP-ANT's logical gap as well as its local cache size, which is $c_{t*}^{t} + O({16\log{t}}/{\epsilon})$. Similar to DP-Timer, we employ a cache flush mechanism to avoid the cache size grows too large. We use the following theorem to describe DP-ANT's performance:
\begin{theorem}\label{sz:ant}
Given privacy budget $\epsilon$, flush interval $f$, flush size $s$, and $\beta \in (0,1)$. Let \smash{$\alpha = \frac{16(\log{t} + \log{{2}/{\beta}})}{\epsilon}$}, and $\eta = s\floor*{{t}/{f}}$. Then for any time $t$, it satisfies $\textup{Pr}\left[|\mathcal{DS}_t| \geq |\mathcal{D}_t| + \alpha + \eta\right]\leq \beta$, where $|\mathcal{DS}_t|$ denotes the total number of records outsourced until time $t$.
%Given the privacy budget $\epsilon$, flush interval $f$, flush size $s$, and $\beta \in (0,1)$. The for any time $t$, the outsourced data size under DP-ANT satisfies:
%$$~~\textup{Pr}\left[ |\mathcal{DS}_t| \geq |\mathcal{D}_t| + s\floor*{\frac{t}{f}} +  \frac{16(\log{t} + \log{{4}/{\beta}})}{\epsilon} ~\right] \leq {\beta}$$
\end{theorem}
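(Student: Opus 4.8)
The plan is to count the outsourced records by their source and then fold the single genuinely random contribution into the slack $\alpha$ by appealing to Theorem~\ref{lg:ant}. Decompose $|\mathcal{DS}_t| = R_t + D^{\mathrm{flush}}_t + D^{\mathrm{sync}}_t$, where $R_t$ is the number of genuine records outsourced by time $t$, $D^{\mathrm{flush}}_t$ the number of dummy records emitted by the cache-flush mechanism, and $D^{\mathrm{sync}}_t$ the number of dummy records emitted by the perturbed fetches performed at the initial outsourcing and at each $\update$ signal of Algorithm~\ref{algo:cbuffer}. Two of these terms are deterministic and immediate: $R_t \le |\mathcal{D}_t|$, since one cannot outsource more genuine records than have been received; and $D^{\mathrm{flush}}_t \le s\lfloor t/f\rfloor = \eta$, since each of the at most $\lfloor t/f\rfloor$ flushes emits at most $s$ records in total. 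So it suffices to prove $\Pr[D^{\mathrm{sync}}_t \ge \alpha] \le \beta$, after which a union of the three bounds gives the stated inequality.

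To bound $D^{\mathrm{sync}}_t$ I would track the FIFO cache across the sequence of perturbed fetches. At the $i$-th fetch, $\ptb$ reads $\tilde c_i = \max(0,\, c_i + Z_i)$ records, where $c_i$ counts the records received since the previous fetch (with $c_0 = |\mathcal{D}_0|$) and the $Z_i$ are independent Laplace draws, $Z_0 \sim \lap(1/\epsilon)$ and $Z_i \sim \lap(2/\epsilon)$ for $i \ge 1$ (using $\epsilon_2 = \epsilon/2$). Writing $W_i$ for the number of genuine records still buffered after the $i$-th fetch, with $W_{-1} = 0$, the fact that the cache at the $i$-th fetch holds exactly the leftover $W_{i-1}$ together with the $c_i$ new records yields $W_i = \max(0,\, W_{i-1} - Z_i)$ and a per-fetch dummy count of $\max(0,\, Z_i - W_{i-1}) = W_i - W_{i-1} + Z_i$. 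Summing this identity over $i = 0,\dots,k$ telescopes to $D^{\mathrm{sync}}_t = W_k + \sum_{i=0}^{k} Z_i = \max\bigl(0,\ \max_{0\le j\le k}\sum_{i=0}^{j} Z_i\bigr)$, so $D^{\mathrm{sync}}_t$ is exactly the running maximum, reflected at $0$, of the cumulative fetch noise. This is the same kind of object already controlled inside the proof of Theorem~\ref{lg:ant}: the logical gap right after a synchronization equals the buffer $W$, and it is the periodic flush --- which empties (or at least sharply trims) the buffer every $f$ steps --- that prevents this reflected walk from drifting, so the same concentration-plus-union-bound estimate that yields $LG(t) \le c_{t*}^{t} + \alpha$ with probability at least $1 - \beta$ also caps this running maximum above by $\alpha$. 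Invoking Theorem~\ref{lg:ant} in this way gives $\Pr[D^{\mathrm{sync}}_t \ge \alpha] \le \beta$, and combining with $R_t \le |\mathcal{D}_t|$ and $D^{\mathrm{flush}}_t \le \eta$ proves $\Pr[|\mathcal{DS}_t| \ge |\mathcal{D}_t| + \alpha + \eta] \le \beta$.

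The crux --- and the only step that is not bookkeeping --- is this control of $D^{\mathrm{sync}}_t$. The naive estimate ``at the $i$-th fetch at most $\max(0, Z_i)$ dummies appear, hence sum over all fetches'' is far too weak: there are $\Theta(t/\theta)$ fetches, so that sum would grow linearly in $t$, whereas $\alpha$ is only logarithmic. What rescues the argument is that a negative noise realization discards nothing; it merely leaves extra genuine records buffered in the FIFO cache, and those buffered records --- not dummies --- are consumed at the next over-fetch. The telescoping identity above is precisely the rigorous form of this cancellation, and the cleanest way to finish is to recognize the resulting reflected running maximum as the very quantity that the proof of Theorem~\ref{lg:ant} (together with the flush schedule) already bounds, rather than to re-derive a tail bound from scratch. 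The remaining ingredients --- the inequalities $R_t \le |\mathcal{D}_t|$ and $D^{\mathrm{flush}}_t \le \eta$, and the concluding union bound over the three events --- are routine.
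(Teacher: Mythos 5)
Your route is genuinely different from the paper's, and it contains a gap in the one step you identify as the crux. The paper never classifies outsourced records into real versus dummy: it counts shipped volumes directly --- each synchronization ships exactly the noisy count $\tilde c_j$ produced by \ptb and each flush ships exactly $s$ records --- so $|\mathcal{DS}_t| \le |\mathcal{D}_t| + \sum_{j\in t'}(\tilde c_j - c_j) + \eta$, and the only probabilistic step is to bound the Laplace aggregate $\sum_{j\in t'}(\tilde c_j - c_j)$ by $\alpha$ using the same tail-plus-union-bound machinery as in the proof of Theorem~\ref{lg:ant}. Your Lindley-recursion bookkeeping is instead the device the paper uses for DP-Timer (Theorem~\ref{lg:timer}), transplanted to DP-ANT.

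The gap: your identity $D^{\mathrm{sync}}_t = W_k + \sum_{i=0}^{k} Z_i = \max\bigl(0,\max_{0\le j\le k}\sum_{i\le j}Z_i\bigr)$ assumes the cache just before the $i$-th fetch holds exactly $W_{i-1}+c_i$ genuine records. Once the flush mechanism runs this is false: a flush drains up to $s$ genuine records from the cache, while the count $c$ fed to \ptb at the next synchronization is unchanged, so later perturbed fetches find fewer genuine records and must emit correspondingly more dummies. Consequently $D^{\mathrm{sync}}_t$ can exceed the reflected running maximum by up to the total number of genuine records diverted into flushes --- a quantity of order $\eta$, not $\alpha$ --- and the component bound $\Pr[D^{\mathrm{sync}}_t\ge\alpha]\le\beta$ does not follow. (Your parenthetical that the flush ``prevents the reflected walk from drifting'' is also inconsistent with using the flush-free recursion, and in any case a flush never removes dummies already outsourced; it only removes genuine records, which inflates rather than controls the dummy count.) The excess is exactly offset by the unused part of the flush budget, since genuine-flushed plus dummy-flushed equals $\eta$, so the correct statement is a joint bound $D^{\mathrm{flush}}_t + D^{\mathrm{sync}}_t \le \eta + (\text{running max})$ --- which, after this repair, collapses to the paper's shipped-volume accounting. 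A secondary, smaller point: Theorem~\ref{lg:ant} controls the under-fetch side (the leftover buffer inside the logical gap), whereas your running maximum is the over-fetch side; carrying the bound over requires a symmetry-in-distribution argument for the Laplace fetch noise (or rerunning the union bound on the positive side), which you should state explicitly, although this is at the same level of informality as the paper's own ``follow the proof of Theorem~\ref{lg:ant}'' step.
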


 This theorem shows that the total overhead of DP-ANT at each time $t$ is bounded by $s\floor*{{t}/{f}} + O(16\log{t}/\epsilon)$. Note that both the upper bound for the logical gap and the performance overhead is related to $1/\epsilon$, which indicates a trade-off between privacy and the accuracy or performance. With different values of $\epsilon$, DP-ANT achieves different level of accuracy and performance (P2 principle). And the FIFO cache as well as the flush mechanism ensures the consistent eventually principle (P3). We provide the related proofs of Theorem~\ref{lg:ant} and~\ref{sz:ant} in the Appendix~\ref{sec:proofs}. \re{ Later in Section~\ref{sec:exp} we further evaluate how different parameters would affect the accuracy and performance of DP strategies, where readers can better understand how to set these parameters according to the desired~goals.}

\eat{ {\bf (5) Cascading Buffer with Sliding Window:} It is same as cascading buffer method but added a sliding window. Specifically, given a window size say $\tau$, then at each time $t$ the user keep tracks of the counts of new records arrived within time period $\left[ \max(t', t - \tau), ~t \right]$, where $t'$ is the time stamp for the most recent sychronization. 

{\bf (6) Cascading Buffer with Decayed-Sum Counter:} At each time t, the decayed sum is given by $F(x_{j}, ..., x_{t}) = \sum_{i = j }^{t} x_{i}g(t - i)$. I.e. F is the convolution
of the input and a non-increasing function g.
}

\eat{
\subsection{Privacy with Expiration}

\begin{definition}[Neighboring Database Stream]
 We define that two time based data stream $X(T) = (x_1, x_2,...,x_T)$ and $X'(T) = (x'_1, x'_2,...,x'_T)$ is considered to be neighboring(denoted as $X \sim X'$) if there exists some time $j \geq 0$, that $x_j \in X \neq x'_j \in X'$, and for all other time stamps $x_j = x'_j$.
\end{definition}

\begin{definition}[$\epsilon$-deferentially private \system] Given a randomized algorithm $\mathcal{A}$, and a non-decreasing expiration function $g$. We say $\mathcal{A}$ is $\epsilon$-differential privacy under expiration $g$ if for all possible pair of neighboring data stream $X(T)$ and $S'(T)$ differing by one element ($j^{th}$ element) and for all possible $S$ it satisfy:

$$\textup{Pr}[\mathcal{A}(X(T))] \leq e^{g(T - j)\epsilon} \textup{Pr}[\mathcal{A}(X'(T))]$$
\end{definition}

An example of $g(i)$ is $\forall i \leq W, g(i)=1$, and $\forall i > W, g(i) = \infty$, this notation says that, starting from time $T$, any records that old enough (older than a certain window $W$), then the privacy of these data are expired. 

}

\eat{
\subsubsection{Support for decayed queries} 
Both DP-Timer and ANT ensure that all data is uploaded in the order in which it was received by the \user. In some cases, however, the \as may only be interested in the data that been received in the ``last $W$ time units''. Or more generally, they discount the interest of records based on how far they are in the past, then query over the decayed data. Under those circumstances, synchronization strictly in the order in which the data is received is not necessary. In stead, the more recent the data is received, the higher priority should be given to synchronize it. To handle this special case, we propose modifications on DP-Timer and ANT in two phases:
\begin{enumerate}
    \item The read operation of the local cache will be set to LIFO mode. That is, when executing read operation, and there is more data in the local cache than that is need, the cache read will fetch out the least stored record first. 
    \item We changed the method of counting received records in DP-Timer (Algorithm~\ref{algo:timer}:9) and ANT (Algorithm~\ref{algo:cbuffer}:10) from general summation to decayed sum. Where the decayed sum is defined as $c(x_{i}, ..., x_{t}) = \sum_{i = t^{*} + 1 }^{t} x_{i}w(t - i)$, where $t^{*}$ denotes the time for last update, $x_i = 1$, if $u_i \neq \emptyset$, otherwise $x_i = 0$. And $w:\mathbb{N}\rightarrow\mathbb{R}^{+}$ is a non-increasing function. 
\end{enumerate}

With these changes, the modified DP-Timer and ANT can be used to support the decay queries. As an example, let's consider covid19 use cases, researchers will most likely only use data from the last 30 days and are more interested in data for the most recent 14 days. Then we follow the instruction for modifying the synchronization policy and set $w$ as follows:
\begin{equation}
    w(x) =
    \begin{cases}
      1, & \text{if}\ x \leq 14 \\
      0.25, & \text{if}\ 14 < x \leq 30\\
      0, & \text{otherwise}
    \end{cases}
\end{equation}
}
\section{Connecting with Existing EDBs}\label{sec:cmp}
Interoperability of \appsystem with an existing encrypted database is an important requirement (P4 design principle). In this section, we discuss how to connect existing encrypted databases with \appsystem. Since our privacy model constrains the update leakage of the encrypted database to be a function only related to the update pattern, in this section we mainly focus on query leakage associated with the encrypted database to discuss the compatibility of our framework. Inspired by the leakage levels defined in~\cite{cash2015leakage}, we categorize different encrypted database schemes based on our own leakage classification. Then we discuss which schemes under those categories can be directly connected with \appsystem and which databases need additional improvements to be compatible with our framework. In Table~\ref{tab:sumenc}, we summarize some notable examples of encrypted databases with their respective leakage groups. We focus on two types of leakage patterns: {\it access pattern}~\cite{goldreich2009foundations} and {\it query response volume}~\cite{kellaris2016generic}. The access pattern is the transcript of entire memory access sequence for processing a given query, and query response volume refers to the total number encrypted records that matches with a given query. The four leakage categories are~as~follows: %\kartik{somewhere we should mention this is related to P4.} \nt{Note to myself, check after other changes applied}

%As our framework will ask the clients to upload dummy records to the outsourced database, thus we need to ensure that the underlying outsourced database scheme is compatible with the presence of dummy data and does not disclose the amount of inserted dummy records. Although it’s not hard to design a dummy data type that is indistinguishable from real data and does not affect any query results, the fact that many outsourced secure databases that leaks certain type of information that can be exploit by an attacker to infer the amount of dummy records. 

\begin{table}[]
\begin{tabular}{ r | l}
\toprule
\textbf{Leakage groups} & \textbf{Encrypted database scheme}  \\ \midrule
                                     & VLH/AVLH~\cite{kamara2019computationally}, ObliDB~\cite{eskandarian2017oblidb}, SEAL~\cite{demertzis2020seal}                                                                             \\
\textbf{L-0}                                  & Opaque~\cite{zheng2017opaque}, CSAGR19~\cite{cui2019privacy}
                                        \\ \midrule
                                     & dp-MM~\cite{patel2019mitigating}, Hermetic~\cite{xuhermetic}, KKNO17~\cite{kellaris2017accessing}                                                                                        \\
\textbf{L-DP}                                 & Crypt$\epsilon$~\cite{chowdhury2019cryptc},     AHKM19~\cite{agarwal2019encrypted}, Shrinkwrap~\cite{bater2018shrinkwrap}                                    
                                     \\\midrule

\textbf{L-1}                                      & PPQED$_{a}$~\cite{samanthula2014privacy}, StealthDB~\cite{vinayagamurthy2019stealthdb}, SisoSPIR~\cite{ishai2016private}                                                                                             \\ \midrule
                                     & CryptDB~\cite{popa2012cryptdb}, Cipherbase~\cite{arasu2013orthogonal}, Arx~\cite{poddar2016arx} \\                                                 
\textbf{L-2}                                  & HardIDX~\cite{fuhry2017hardidx}, EnclaveDB~\cite{priebe2018enclavedb}                                                                                             \\ \bottomrule
\end{tabular}
\caption{Summary of leakage groups and corresponding encrypted database schemes}%\kartik{SOLVED:vertically align left column to be at the center, also horizontally at the center or aligned right} %\kartik{SOLVED:should we call it L-DP instead of L-DP?}
\vspace{-3mm}
\label{tab:sumenc}
\end{table}
\vspace{-2mm}
\boldparagraph{L-2: Reveal access pattern}. Encrypted databases that reveal the exact sequence of memory accesses and response volumes when processing queries fall into this category. These include many practical systems based only on searchable symmetric encryption, trusted execution environments (TEE), or on deterministic and order-preserving encryption. Recent leakage-abuse
attacks~\cite{cash2015leakage, blackstone2019revisiting, markatou2019full} have pointed out that attackers can exploit the access pattern to reconstruct the entire encrypted database. Databases in this category are not compatible with \appsystem. If we add our techniques to these systems, then due to the leakage from these databases, our update patterns will be leaked as well.
%\kartik{pose the next few sentences as: databases in this category are not compatible. If we combine our techniques with them, then due to the leakage from these databases, our update pattern will be leaked as well.} For databases with L-2 leakage, there is no direct compatibility with our framework. 
%To compatible with our framework, schemes in L-2 category need to make the necessary adjustments to mitigate the risks posed by the access pattern.

\vspace{-3mm}
\boldparagraph{L-1: Reveal response volume}. To hide access patterns, some schemes perform computations obliviously, e.g., using an oblivious RAM. However, many databases in this category still leak the query response volume (since obliviousness does not protect the size of the access pattern). Example databases in this category include HE-based PPQED$_a$~\cite{samanthula2014privacy} and ORAM-based SisoSPIR~\cite{ishai2016private}. Moreover, recent research~\cite{markatou2019full, poddar2020practical, kellaris2016generic, grubbs2018pump, lacharite2018improved} has shown that database reconstruction attacks are possible even if the system only leaks response volume. Therefore, there is still a risk that such systems will leak information about the amount of dummy data. Thus, to be compatible with \appsystem, necessary measures must be taken to hide the query volume information, such as na\"ive padding~\cite{cui2019privacy}, pseudorandom transformation~\cite{kamara2019computationally}, etc.

%To hide the access pattern, some schemes incorporates Oblivious RAM (ORAM) or combines Homomorphic Encryption with oblivious data structure. \kartik{To hide access patterns, some schemes perform computations obliviously, e.g., using an oblivious RAM.} \kartik{the use of homomorphic encryption is orthogonal to access pattern. One could use trusted h/w or secure computation too.} However, outsourced databases employ heavyweight cryptographic techniques that hide the access patterns can still leak the query response volume.\kartik{However, many databases in this category still leak the query response volume (since obliviousness does not protect the size of the access pattern).} \kartik{heavyweight cryptography is orthogonal.} 
%Thus any T1 type of dummy records added to such database system can be identified by an adversary. The risk may be mitigated if T2 type dummy data is used, as Type-2 dummy can distort the true response volume to some extent. \\
\vspace{-2mm}
\boldparagraph{L-DP: Reveal differentially-private response volume}. Some secure outsourced database schemes guarantee the leakage of only differentially-private volume information. These schemes either ensure that both access patterns and query volumes are protected using differential privacy, or they completely hide the access patterns and distort the query response volume with differential privacy. Databases with L-DP leakage are directly compatible with \appsystem, as such schemes prevents attackers from inferring information about dummy data through the query protocol.

\vspace{-2mm}
\boldparagraph{L-0: Response volume hiding}. Some encrypted databases support oblivious query processing and only leak computationally-secure response volume information. These schemes are usually referred to as access pattern and volume hiding schemes. Encrypted databases in this category can be directly used with our framework as well, as there is no efficient way for attackers to identify dummy data information via their query protocols. 

In addition, most methods that fall in L-DP and L-0 category support dummy data by default~\cite{patel2019mitigating, xuhermetic, kellaris2017accessing, eskandarian2017oblidb}, as they use dummy data to populate the query response volume or hide intermediate sizes. In this case, our framework can directly inherit the dummy data types defined in the corresponding database scheme with no additional changes. For those schemes that do not support dummy data by default (e.g.~\cite{chowdhury2019cryptc}), we can either let the scheme return both dummy and real data, and let the analyst to filter true records after decryption, or we can extend all records with a {\it isDummy} attribute and then apply query re-writing to eliminate the effect of dummy data.%\kartik{does there exist any such database? anyone with L-Dp and L-0 need to have some support for dummies right?}  
 We continue to provide query re-writing examples in our full version. To concretely demonstrate the compatibility of \appsystem with existing encrypted databases, we choose database schemes ObliDB\cite{eskandarian2017oblidb} and Crypt$\epsilon$\cite{chowdhury2019cryptc} in L-0 and L-DP groups respectively and evaluate the resulting implementation in Section~\ref{sec:exp}.

%\kartik{should mention somewhere in this section that we will evaluate one of L-0 and L-DP.}

%\kartik{note to myself: revisit this section after all changes are performed.}
\section{Security Proofs}\label{sec:sp}
In this section, we provide a sketch of the security proof for our proposed \appsystem implemented with DP strategies. \eat{Recall that we only use databases where: (1) The insertion operation is supported;  (2) The corresponding update leakage leaks nothing more than the update volume; (3) Query leakages are bounded in L-0 and L-DP groups.}
\eat{
As mentioned before, protocols leaks particular information to the \cs. In detail, we define the leakage profiles as follows:

\begin{itemize}
    \item $\lsetup(\mathcal{DS}_0) = \encs\left(\mathcal{DS}_0\right)$
    \item $\lupdate(t) = \encs\left(\mathcal{DS}_t\right) - \encs(\mathcal{DS}_{t-1})$
    \item $\lquery(q_t, t) = \{ \encs({\bf v}_t^j) \mid j \gets 1,2,...\}$
\end{itemize}

where $\encs$ is the operation that computes the number of encrypted records. Next we interpret these leakage profiles in more understandable semantics. The \cs learns how many encrypted records submitted at initial stage through $\lsetup$. By continuously observing $lupdate$ over time, the \cs learns the entire synchronization history, that consists the exact time when each synchronization occurs and the number of encrypted records submitted at each update. Throughout the $\lquery$, the \cs learns how many encrypted results have been sent back to the \as at each time, where $q_t = \{q_t^1, q_t^2,...\}$, and ${\bf v}_t^j$ is the encrypted answer of $q_t^j \in q_t$. Now we are ready to proof the following theorem:
}

%\kartik{Is it fair to say $\mathcal{L}_{u}^{\mathsf{edb}} = |\gamma|$? The leakage due to update is leakage due to update in edb + leakage due to our updates?} \nt{I think we need to make some restriction on the update leakage of the underlying edb. Otherwise if the update itself leaks access pattern, then the server may identify dummy records from update protocol. We may say that $\mathcal{L}_{u}^{\mathsf{edb}} = \mathsf{L}'(|\gamma|)$, where $\mathsf{L}'$ is stateless, the stateless means given a set of constant parameters we can construct $\mathcal{L}_{u}^{\mathsf{edb}}$ with $|\gamma|$. And this means that the underlying edb may have other leakages with their update protocol, but can be reduced to $|\gamma|$. Actually in many SSE scheme (such as ~\cite{ghareh2018new}) they require that $\lupdate(w, \mathsf{op})$ should be equal to $\mathcal{L}'(|w|, \mathsf{op})$, where $|w|$ measures the number of documents uploaded. If the scheme reveals which keys each updated record is associated with (access pattern), then our technique fails. In addition, in our case the operation is always "insert" so we won't worry about $\mathsf{op}$}
%\kartik{Okay. In that case, before the theorem you should mention that we are constrained to using databases with such constraints and may be link it back to the previous section (L-0 and L-DP).}

%In what follows, we demonstrate that the \appsystem implemented with either DP-Timer or ANT strategy satisfies definition~\ref{def:dpsogdb}.
\begin{table}[]
\scalebox{0.98}{\small
\begin{tabular}{|rl|}
\hline
                                 & \multicolumn{1}{c|}{{\bf $\mathcal{M}_{\mathsf{timer}}(\mathcal{D},  \epsilon, f, s, T)$}}\\
$\mathcal{M}_{\mathsf{setup}}$:  & {\bf output} $\left(0, |\mathcal{D}_0| + \lap(\frac{1}{\epsilon}) \right)$ \\
$\mathcal{M}_{\mathsf{update}}$: & $\forall i \in \mathbb{N}^{+}$, run $\mathcal{M}_{\mathsf{unit}}(U[i\cdot T, (i+1)T], \epsilon, T)$\\
                                 & $\mathcal{M}_{\mathsf{unit}}$: {\bf output} $\left(i\cdot T, ~\lap(\frac{1}{\epsilon}) + \sum_{k = i\cdot T + 1}^{(i+1)T} 1 \mid u_{k}\neq\emptyset \right)$\\
$\mathcal{M}_{\mathsf{flush}}$:  & $\forall j \in \mathbb{N}^{+}$, {\bf output} $\left(j\cdot f, ~s\right)$ \\
                                 & \multicolumn{1}{c|}{{\bf $\mathcal{M}_{\mathsf{ANT}}(\mathcal{D},  \epsilon, f, s, \theta)$}} \\
$\mathcal{M}_{\mathsf{setup}}$:  & {\bf output} $\left(0, |\mathcal{D}_0| + \lap(\frac{1}{\epsilon}) \right)$ \\
$\mathcal{M}_{\mathsf{update}}$: & $\epsilon_1 = \epsilon_2 = \frac{\epsilon}{2}$, repeatedly run $\mathcal{M}_{\mathsf{sparse}(\epsilon_1, \epsilon_2, \theta)}$.                                                                                                                       \\
                                 & $\mathcal{M}_{\mathsf{sparse}}$:                                                                                                                                                                                                                      \\
                                 & $\tilde{\theta} = \theta + \lap(\frac{2}{\epsilon_1})$, $t^*\gets$ last time $\mathcal{M}_{\mathsf{sparse}}$'s output $\neq \perp$.                                                                                                                   \\
                                 & $ \forall i\in\mathbb{N}^{+}$, {\bf output} $\begin{cases}    \left(t^{*}+i, c_i + \lap(\frac{1}{\epsilon_2})\right)  &  \text{if}~ v_i + c_i \geq  \tilde{\theta},\\    \perp               &  \text{otherwise}.\end{cases}$ \\
                                 & where $c_i = \sum_{k=t^*}^{t^*+i} 1 \mid u_k\neq \emptyset$, and $v_i = \lap(\frac{4}{\epsilon_1})$.                                                                                                                                                  \\
                                 & {\bf abort} the first time when output $\neq \perp$.                                                                                                                                                                                 \\
$\mathcal{M}_{\mathsf{flush}}$:  & $\forall j \in \mathbb{N}^{+}$, {\bf output} $\left(j\cdot f, ~s\right)$  \\ \hline
\end{tabular}
}
\caption{Mechanisms to simulate the update pattern}
\label{tab:mech}
\vspace{-4mm}
\end{table}
\vspace{-1mm}
\begin{theorem}\label{tm:dptimer}
The update pattern of an \appsystem system implemented with the DP-Timer strategy satisfies Definition~\ref{def:dpsogdb}. 
\end{theorem}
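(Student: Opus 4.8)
The plan is to show that the update pattern produced by DP-Timer is exactly the output of the mechanism $\mathcal{M}_{\mathsf{timer}}$ from Table~\ref{tab:mech}, and then to argue that $\mathcal{M}_{\mathsf{timer}}$ satisfies $\epsilon$-differential privacy with respect to the neighboring-growing-database relation of Definition~\ref{def:ngdb}. First I would decompose $\upatt(\Sigma,\mathcal{D})$ into the three disjoint sub-patterns that DP-Timer generates: the setup event at $t=0$, the periodic timer-driven updates at multiples of $T$, and the cache-flush events at multiples of $f$. Because these three kinds of events are produced by independent randomness (the Laplace draws in \ptb at setup, the per-window Laplace draws in the timer loop, and the deterministic flush volumes), I would invoke parallel/sequential composition appropriately: the flush sub-pattern $\mathcal{M}_{\mathsf{flush}}$ is data-independent (it always outputs $(j\cdot f, s)$ regardless of $\mathcal{D}$), hence contributes $0$ to the privacy cost; so it suffices to bound the cost of $\mathcal{M}_{\mathsf{setup}}$ together with $\mathcal{M}_{\mathsf{update}}$.

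Next I would handle the quantity $\mathcal{L}'$ in Definition~\ref{def:dpsogdb}: I need to exhibit a function $\mathcal{L}'$ such that $\lupdate(\mathcal{D}) = \mathcal{L}'(\upatt(\Sigma,\mathcal{D}))$. This follows from the P4 constraint assumed throughout the paper — the underlying edb's update leakage is a function of the update volumes only — so $\mathcal{L}'$ is just the edb's leakage function applied coordinate-wise to the volumes in the update pattern; post-processing then means it is enough to prove the DP bound for $\upatt(\Sigma,\mathcal{D})$ itself. For the core bound, take neighboring $\mathcal{D},\mathcal{D}'$ that agree up to time $\tau$ and differ by one record afterward. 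The single differing record changes the indicator $x_i$ in exactly one window — say the window ending at $t=i\cdot T$ — by at most $1$, and it can also shift whether that record sits in $\mathcal{D}_0$ (affecting $\mathcal{M}_{\mathsf{setup}}$) in the boundary case $\tau=0$. In every case, the vector of true counts fed into the independent $\lap(1/\epsilon)$ mechanisms changes in $\ell_1$ norm by at most $1$ across the whole (infinite) sequence, because the record contributes to at most one window-count (or to the setup count). Hence by the standard Laplace-mechanism guarantee with sensitivity $1$ and noise scale $1/\epsilon$, each coordinate-release is $\epsilon$-DP and, since only one coordinate's input actually moves, the joint release of all timer outputs plus the setup output is $\epsilon$-DP; composing with the zero-cost flush pattern preserves $\epsilon$-DP. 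This gives $\Pr[\lupdate(\mathcal{D})\in O]\le e^{\epsilon}\Pr[\lupdate(\mathcal{D}')\in O]$, which is exactly Definition~\ref{def:dpsogdb}.

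The main obstacle I anticipate is the bookkeeping around the timestamp coordinates and the infinite horizon: the update pattern is an infinite sequence $\{(t,|\gamma_t|)\}$, the timestamps themselves are deterministic (multiples of $T$ and of $f$) and hence leak nothing, so the entire privacy argument must be reduced to the volume coordinates — I must be careful to state that the timestamp coordinates are data-independent and therefore do not enter the likelihood ratio. A secondary subtlety is the interaction between $\gamma_t$ being a \emph{fetched} set of records (via \readc and possibly padded with dummies) versus the \emph{noisy count} $\tilde c$: what the server observes is $|\gamma_t| = \max(\tilde c, 0)$ truncated at $0$ when $\tilde c \le 0$, so I would note that truncation/clamping at $0$ is post-processing of the Laplace output and does not degrade the guarantee. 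Finally, I should remark that because the edb is only semantically secure (not information-theoretically secure), the inequality holds against a computationally bounded adversary, matching the caveat already stated after Definition~\ref{def:dpsogdb}; I would state this explicitly so the theorem's conclusion is not overclaimed.
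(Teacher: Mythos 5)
Your proposal is correct and follows essentially the same route as the paper's proof: model the observed update pattern by the mechanism $\mathcal{M}_{\mathsf{timer}}$, treat the flush outputs as data-independent (0-DP) and the timestamps as deterministic, and argue that the setup count and the per-window counts are Laplace releases over disjoint data so the whole pattern is $\epsilon$-DP by (parallel plus simple) composition. Your explicit sensitivity bookkeeping — that a neighboring database moves exactly one window's (or the setup's) true count by at most one — is just the unpacked form of the paper's parallel-composition step, and your added remarks on $\mathcal{L}'$ as post-processing, truncation at zero, and the computational caveat are consistent refinements rather than a different argument.
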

\vspace{-3mm}
\begin{proof}\label{pf:dp-timer}
(Sketch) To capture the information leakage of the update pattern, we rewrite the DP-Timer algorithm to output the total number of synchronized records at each update, instead of signaling the update protocol. The rewritten mechanism $\mathcal{M}_{\mathsf{timer}}$ (see Table~\ref{tab:mech}) simulates the update pattern when applying the DP-Timer strategy. We prove this theorem by illustrating that the composed privacy guarantee of $\mathcal{M}_{\mathsf{timer}}$ satisfies $\epsilon$-DP.

The mechanism $\mathcal{M}_{\mathsf{timer}}$ is a composition of several separated mechanisms. We now discuss the privacy guarantees of each. $\mathcal{M}_{\mathsf{setup}}$ is a Laplace mechanism, thus its privacy guarantee satisfies $\epsilon$-DP. $\mathcal{M}_{\mathsf{flush}}$ reveals a fixed value with fixed time span in a non data-dependent manner, thus it's output distribution is fully computational indistinguishable (satisfies 0-DP). $\mathcal{M}_{\mathsf{update}}$ is a mechanism that repeatedly calls $\mathcal{M}_{\mathsf{unit}}$. $\mathcal{M}_{\mathsf{unit}}$ acts on a fixed time span (T). It counts the total number of received records within the current time period, and outputs a noisy count with $\lap(\frac{1}{\epsilon})$ at the end of the current time period. Thus $\mathcal{M}_{\mathsf{unit}}$ satisfies $\epsilon$-DP guarantee. Since $\mathcal{M}_{\mathsf{update}}$ repeatedly calls  $\mathcal{M}_{\mathsf{unit}}$ and applies it over disjoint data, the privacy guarantee of $\mathcal{M}_{\mathsf{unit}}$ follows parallel composition~\cite{kairouz2015composition}, thus satisfying $\epsilon$-DP. The composition of $\mathcal{M}_{\mathsf{setup}}$ and $\mathcal{M}_{\mathsf{update}}$ also follows parallel composition and the composition of $\mathcal{M}_{\mathsf{flush}}$ follows simple composition~\cite{kairouz2015composition}. Thus the entire algorithm $\mathcal{M}_{\mathsf{timer}}$ satisfies $\left(\max(\epsilon, \epsilon) + 0 \right)$ -DP, which is $\epsilon$-DP.

%We prove this theorem by illustrating the view of any adversary over the neighboring ideal experiments (the ideal experiments that run over adjacent databases), is bounded by DP.
\eat{
For query protocol, the leakage $\lquery$ depends solely on the input query $q_t$ generated by $\mathcal{A}$ itself, thus $\mathcal{A}$'s view over the query protocol does not affect the system's privacy guarantee, in general. According to Algorithm~\ref{algo:timer}, the setup leakage $\lsetup$ is distorted by Laplace mechanism, $\lap(\frac{1}{\epsilon})$. Thus any $\mathcal{A}$'s view over $\lsetup$ is bounded by $\epsilon$-DP. For update protocol, we assume that $\mathcal{A}$ is able to distinguish which updates are caused by synchronization and which are issued by cache flush. For those issued by cache flush, the corresponding leakage are all the same, thus the indistinguishability of $\mathcal{A}$ view follows semantic security definition. Moreover, the DP-Timer algorithm can be abstracted as a Laplacian mechanism satisfying $\epsilon$-DP that runs repeatedly on disjoint data. By applying parallel composition~\cite{kairouz2015composition} theorem, the whole process satisfy $\epsilon$-DP. As a result $\mathcal{A}$'s view of update leakage $\lupdate$ over any neighboring ideal experiments, is bounded by $\epsilon$-DP.

The combined $\mathcal{A}$'s view over $\lsetup$ and $\lupdate$ follows simple composition theorem~\cite{kairouz2015composition}, that is any adversary's view over any neighboring ideal experiments satisfies $\epsilon$-DP. The claim thus holds.}
\end{proof}
\vspace{-4mm}
\begin{theorem}
The update pattern of an \appsystem system implemented with the ANT strategy satisfies Definition~\ref{def:dpsogdb}. 
\end{theorem}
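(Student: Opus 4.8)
The plan is to follow the template of the proof of Theorem~\ref{tm:dptimer}: replace DP-ANT by the mechanism $\mathcal{M}_{\mathsf{ANT}}$ of Table~\ref{tab:mech}, which emits the update pattern (the synchronization timestamps together with the noisy update volumes) instead of signalling $\update$, and show that $\mathcal{M}_{\mathsf{ANT}}$ is $\epsilon$-differentially private with respect to neighboring growing databases (Definition~\ref{def:ngdb}). Since \appsystem's update leakage is assumed to be a function of $\upatt$ only, establishing this yields Definition~\ref{def:dpsogdb} directly.

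Next I would decompose $\mathcal{M}_{\mathsf{ANT}}$ into $\mathcal{M}_{\mathsf{setup}}$, $\mathcal{M}_{\mathsf{update}}$, and $\mathcal{M}_{\mathsf{flush}}$ and treat each separately. As in the DP-Timer case, $\mathcal{M}_{\mathsf{setup}}$ outputs $|\mathcal{D}_0| + \lap(1/\epsilon)$, a Laplace mechanism of sensitivity $1$, hence $\epsilon$-DP (and in fact free on neighboring growing databases, since they share $\mathcal{D}_0$). $\mathcal{M}_{\mathsf{flush}}$ emits the fixed volume $s$ at the fixed times $j\cdot f$ with no dependence on the data, so its output distribution is identical on any two inputs, i.e. $0$-DP.

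The heart of the argument is $\mathcal{M}_{\mathsf{update}}$, which repeatedly runs $\mathcal{M}_{\mathsf{sparse}}$ over successive segments of the stream, each segment ending the first time the noisy running count $c_i + v_i$ crosses the noisy threshold $\tilde\theta = \theta + \lap(2/\epsilon_1)$, with $v_i = \lap(4/\epsilon_1)$. A single run of $\mathcal{M}_{\mathsf{sparse}}$ is exactly the Sparse Vector / AboveThreshold mechanism that halts after the first above-threshold report: with the threshold perturbed by $\lap(2/\epsilon_1)$ and each query by $\lap(4/\epsilon_1)$, reporting the index of the first crossing is $\epsilon_1$-DP by the standard SVT analysis, and releasing the noisy volume $c_i + \lap(1/\epsilon_2)$ of that crossing is an additional $\epsilon_2$-DP Laplace step, so by basic composition one run is $(\epsilon_1+\epsilon_2) = \epsilon$-DP. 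The remaining --- and, I expect, the hardest --- step is to show that chaining these runs over the whole stream does not accumulate privacy loss beyond $\epsilon$. The key observations are that (i) for neighboring $\mathcal{D},\mathcal{D}'$ the differing record sits at a single time $t_0=\tau+1$, (ii) the running counter is reset to $0$ at every synchronization, and (iii) a single insertion changes each running count by at most $1$ and monotonically. Coupling the Laplace draws across $\mathcal{D}$ and $\mathcal{D}'$, every $\mathcal{M}_{\mathsf{sparse}}$ invocation whose segment completes before $t_0$ is fed identical input and produces identical output; the one invocation whose segment contains $t_0$ is $\epsilon$-DP by the argument above; and using (ii)--(iii) together with $u_t = u'_t$ for $t \neq t_0$ one can couple the subsequent invocations so that they, too, see identical input. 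Because the segments partition disjoint portions of the stream, parallel composition then makes $\mathcal{M}_{\mathsf{update}}$ $\epsilon$-DP.

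Finally I would glue the pieces together: $\mathcal{M}_{\mathsf{setup}}$ and $\mathcal{M}_{\mathsf{update}}$ operate on disjoint data ($\mathcal{D}_0$ versus the subsequent update stream $U$), so parallel composition gives $\max(\epsilon,\epsilon)=\epsilon$-DP, and composing with the $0$-DP $\mathcal{M}_{\mathsf{flush}}$ via simple composition keeps the total at $\epsilon$-DP; hence $\upatt$ of an \appsystem instance running DP-ANT satisfies Definition~\ref{def:dpsogdb}. The main obstacle is the second half of the $\mathcal{M}_{\mathsf{update}}$ argument --- ruling out a blow-up across the data-dependent segment boundaries, which does not follow from vanilla SVT and really needs the reset-plus-monotonicity structure; a clean alternative would be to run the SVT likelihood-ratio argument once directly on the whole reset-counter stream and bound the effect of the single perturbed time step $t_0$.
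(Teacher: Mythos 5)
Your proposal follows essentially the same route as the paper's proof: the paper likewise replaces DP-ANT by the mechanism $\mathcal{M}_{\mathsf{ANT}}$ of Table~\ref{tab:mech}, shows $\mathcal{M}_{\mathsf{setup}}$ is $\epsilon$-DP and $\mathcal{M}_{\mathsf{flush}}$ is $0$-DP, proves the above-threshold part of $\mathcal{M}_{\mathsf{sparse}}$ is $\tfrac{1}{2}\epsilon$-DP via the standard sparse-vector likelihood-ratio integral, adds the $\tfrac{1}{2}\epsilon$ Laplace release by simple composition, and then treats the repeated invocations of $\mathcal{M}_{\mathsf{sparse}}$ as acting on disjoint data so that parallel composition (plus simple composition with setup and flush) gives $\epsilon$-DP overall. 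The only difference is one of rigor, not of approach: where you sketch a coupling argument to handle the data-dependent segment boundaries, the paper simply invokes parallel composition on disjoint segments, so your extra step is a refinement of the same argument.
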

\vspace{-3mm}
\begin{proof} (Sketch)
Similar to the proof of Theorem~\ref{pf:dp-timer}, we first provide $\mathcal{M}_{ANT}$ (Table~\ref{tab:mech}) that simulates the update pattern of ANT strategy. We prove this theorem by illustrating the composed privacy guarantee of $\mathcal{M}_{\mathsf{ANT}}$ satisfies $\epsilon$-DP. 

The mechanism $\mathcal{M}_{\mathsf{ANT}}$ is a composition of several separated mechanisms. $\mathcal{M}_{\mathsf{setup}}$ and $\mathcal{M}_{\mathsf{flush}}$ satisfy $\epsilon$-DP and 0-DP, respectively. We abstract the $\mathcal{M}_{\mathsf{update}}$ as a composite mechanism that repeatedly spawns $\mathcal{M}_{\mathsf{sparse}}$ on disjoint data. Hence, in what follows we show that $\mathcal{M}_{\mathsf{sparse}}$, and thus also $\mathcal{M}_{\mathsf{update}}$ (repeatedly call $\mathcal{M}_{\mathsf{sparse}}$), satisfies $\epsilon$-DP guarantee.

Assume a modified version of $\mathcal{M}_{\mathsf{sparse}}$, say $\mathcal{M'}_{\mathsf{sparse}}$, where it outputs $\top$ once the condition $v_i + c_i >\tilde{\theta}$ is satisfied,  and outputs $\bot$ for all other cases. Then the output of $\mathcal{M'}_{\mathsf{sparse}}$ can be written as $O = \{o_1, o_2,...,o_m \}$, where $\forall~ 1 \leq i < m$, $o_i = \bot$, and $o_m = \top$. Suppose that $U$ and $U'$ are the logical updates of two neighboring growing databases and we know that for all $i$, $\textup{Pr}\left[\tilde{c}_i < x \right] \leq \textup{Pr}\left[\tilde{c}'_i < x+1 \right]$ is satisfied, where $\tilde{c}_i$ and $\tilde{c}'_i$ denotes the $i^{th}$ noisy count when applying  $\mathcal{M'}_{\mathsf{sparse}}$ over $U$ and $U'$ respectively, such that:
\begin{equation}
\begin{split}
  & \textup{Pr}\left[~ \mathcal{M'}_{\mathsf{sparse}}(U) = O \right] \\
  =  \int_{-\infty}^{\infty} & \textup{Pr}\left[\tilde{\theta} = x \right]\left( \prod_{1 \leq i < m}\textup{Pr}\left[\tilde{c}_i < x\right] \right) \textup{Pr}\left[\tilde{c}_m \geq x \right]dx\\
\leq \int_{-\infty}^{\infty} & e^{\epsilon/2}\textup{Pr}\left[\tilde{\theta} = x + 1 \right]\left( \prod_{1 \leq i < m}\textup{Pr}\left[\tilde{c}'_i < x + 1\right] \right) \textup{Pr}\left[ v_m \geq x - {c}_m \right]dx\\
\leq \int_{-\infty}^{\infty} & e^{\epsilon/2}\textup{Pr}\left[\tilde{\theta} = x + 1 \right]\left( \prod_{1 \leq i < m}\textup{Pr}\left[\tilde{c}'_i < x + 1\right] \right) \\
  \times & e^{\epsilon/2} \textup{Pr}\left[ v_m + c'_m \geq x + 1 \right]dx\\
  = \int_{-\infty}^{\infty} & e^{\epsilon}\textup{Pr}\left[\tilde{\theta} = x + 1 \right]\left( \prod_{1 \leq i < m}\textup{Pr}\left[\tilde{c}'_i < x + 1\right] \right) \textup{Pr}\left[ \tilde{c}'_m \geq x + 1 \right]dx\\
  = e^{\epsilon}\textup{Pr} & [\mathcal{M'}_{\mathsf{sparse}}(U') = O]
\end{split}
\end{equation}
Thus $\mathcal{M'}_{\mathsf{sparse}}$ satisfies $\epsilon$-DP, and  $\mathcal{M}_{\mathsf{sparse}}$ is essentially a composition of a $\mathcal{M'}_{\mathsf{sparse}}$ satisfying $\frac{1}{2}\epsilon$-DP together with a Laplace mechanism with privacy parameter equal to $\frac{1}{2}\epsilon$. Hence by applying simple composition~\cite{kairouz2015composition}, we see that $\mathcal{M}_{\mathsf{sparse}}$ satisfies $(\frac{1}{2}\epsilon + \frac{1}{2}\epsilon)-DP$. Knowing that $\mathcal{M}_{\mathsf{update}}$ runs $\mathcal{M}_{\mathsf{sparse}}$ repeatedly on disjoint data, with parallel composition~\cite{kairouz2015composition}, the $\mathcal{M}_{\mathsf{update}}$ then satisfies $\epsilon$-DP. Finally, combined with $\mathcal{M}_{\mathsf{setup}}$ and $\mathcal{M}_{\mathsf{flush}}$, we conclude that $\mathcal{M}_{\mathsf{ANT}}$ satisfies $\epsilon$-DP, thus the theorem holds.
\end{proof}
\eat{
\begin{algorithm}[]
\caption{$\mathcal{M}_{\mathsf{timer}}(\Sigma, \mathcal{D},  \epsilon, k, s, T)$}
\begin{algorithmic}[1]
\Statex \textbf{Input}: growing database $\mathcal{D} = \{\mathcal{D}_0, U\}$,  $\epsilon$, $k$, $s$, and $T$.
\Statex $\mathcal{M}_{\mathsf{setup}} $:
\State ~~{\bf output} $|\mathcal{D}_0| + \lap(\frac{1}{\epsilon}) $
\Statex $\mathcal{M}_{\mathsf{update}}$:
%\kartik{does nto type check. $\prod_{update}$ takes many parameters.}
\For{$t = 1, 2, 3, ...$}
%\If {$u_t \neq \emptyset$}
%\State $c \gets c + 1$
%\State store $u_t$ in the local cache, $\writec(\sigma, u_t)$   %\kartik{are we updating this at every time instant or when a record is received?}
%\State \lcache.\texttt{write}($u_t$)
%\EndIf
\State Repeatedly running $\mathcal{M}_{\mathsf{unit}}(\epsilon, T):$ 
\State {\bf output} $\lap(\frac{1}{\epsilon}) +  (\sum_{t-T}^t 1 \mid u_t\neq \emptyset$) $\mid t \mod T = 0$

\Statex $\mathcal{M}_{\mathsf{flush}}$:
\State {\bf output} $s$ every $k*T$ time steps. 
%\If {$sc \geq K$}
%\State $d\gets\lcache.\texttt{flush(sz)}$, {\bf run} $\prod_{update}\left( d \right)$ \kartik{I don't see the diff between flush, write, reset. Why don't we just perform a write here?}
%\State $sc \gets 0$
%\EndIf
\end{algorithmic}
\label{algo:re-timer}
\end{algorithm}
}
\eat{
\begin{algorithm}[]
\caption{$\mathcal{M}_{\mathsf{ANT}}(\Sigma, \mathcal{D},  \epsilon, l, s, \theta)$}
\begin{algorithmic}[1]
\Statex \textbf{Input}: growing database $\mathcal{D} = \{\mathcal{D}_0, U\}$,  $\epsilon$, $k$, $s$, and $T$.
\Function{$\mathcal{M}_{\mathsf{sparse}}$}{$\epsilon_1, \epsilon_2, \theta$}
    \State $ \tilde{\theta} \leftarrow \theta + \lap(\frac{2}{\epsilon_1})$
    \For{$i = 1,2,3,...$}
    \State compute total records received since last update as $c_i$.
    \State $\tilde{c}_i\gets c_i + v_i \mid v_i \gets \lap(\frac{4}{\epsilon_1})$
    \If {$\tilde{c}_i \geq \tilde{\theta}$}
    \State {\bf output} $c_i + \lap(\frac{1}{\epsilon_2})$, {\bf break}
    \EndIf
    \EndFor
\Statex
\Statex $\mathcal{M}_{\mathsf{setup}} $:
\State ~~{\bf output} $|\mathcal{D}_0| + \lap(\frac{1}{\epsilon}) $
\Statex $\mathcal{M}_{\mathsf{update}}$:
\While{True}
\State $\epsilon_1 = \frac{8}{9}\epsilon, \epsilon_2 = \frac{1}{9}\epsilon$, $\mathcal{M}_{\mathsf{sparse}}(\epsilon_1,\epsilon_2, \theta)$ 
%\State {\bf output} $\lap(\frac{1}{\epsilon}) +  (\sum_{t-T}^t 1 \mid u_t\neq \emptyset$) $\mid t \mod T = 0$
\Statex $\mathcal{M}_{\mathsf{flush}}$:
\State {\bf output} $s$ every $l$ time steps. 
\end{algorithmic}
\label{algo:re-ant}
\end{algorithm}
}

\section{Experimental Analysis}\label{sec:exp}
In this section, we describe our evaluation of \appsystem along
two dimensions: accuracy and performance. Specifically, we address the following questions in our experimental studies: %\kartik{indent space}

\begin{itemize}
    \item \textbf{Question-1}: How do DP strategies compare to na\"ive methods in terms of performance and accuracy under a fixed level of privacy? Do DP strategies guarantee bounded accuracy?
    \item \textbf{Question-2}: What is the impact on accuracy and performance when changing the privacy level of the DP strategies? Can we adjust privacy to obtain different levels of accuracy or performance guarantees?
    %\kartik{Q1 is for fixed privacy vs Q2 is for varying privacy?} \nt{Yes Question1 is end-to-end analysis with fixed privacy} \kartik{I changed Q1 a bit to reflect that.}
    \item \textbf{Question-3}: With a fixed level of privacy, how does accuracy and performance change if we change the non-privacy parameters $T$ or $\theta$ for DP-Timer and DP-ANT, respectively? 
\end{itemize}
\begin{figure*}[h]
\captionsetup[sub]{font=small,labelfont={bf,sf}}
    \begin{subfigure}[b]{0.198\linewidth}
    \centering    \includegraphics[width=1\linewidth]{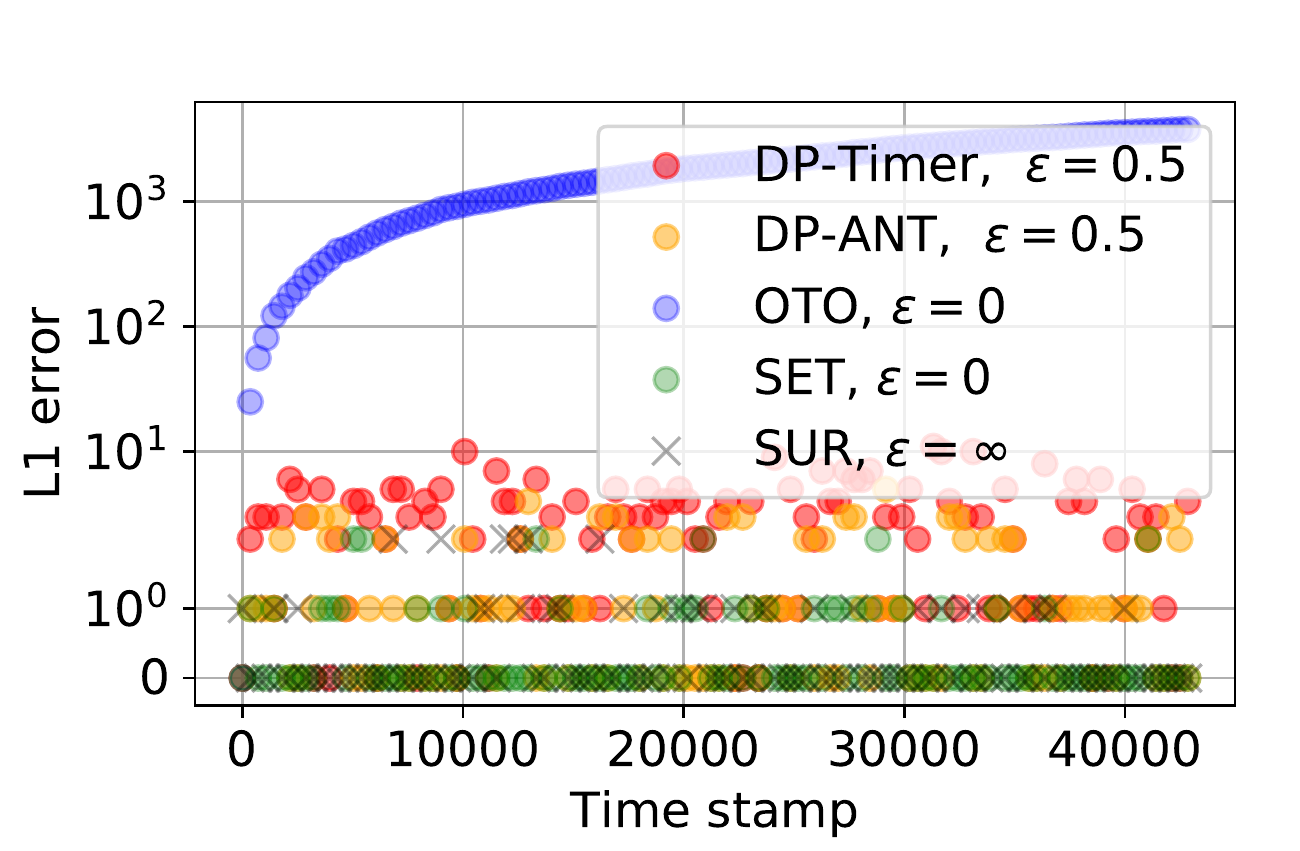}
        \caption{Crypt$\epsilon$: Q1 error}
        \label{fig:acc-q1-c}\end{subfigure}%%
      \begin{subfigure}[b]{0.198\linewidth}
    \centering    \includegraphics[width=1\linewidth]{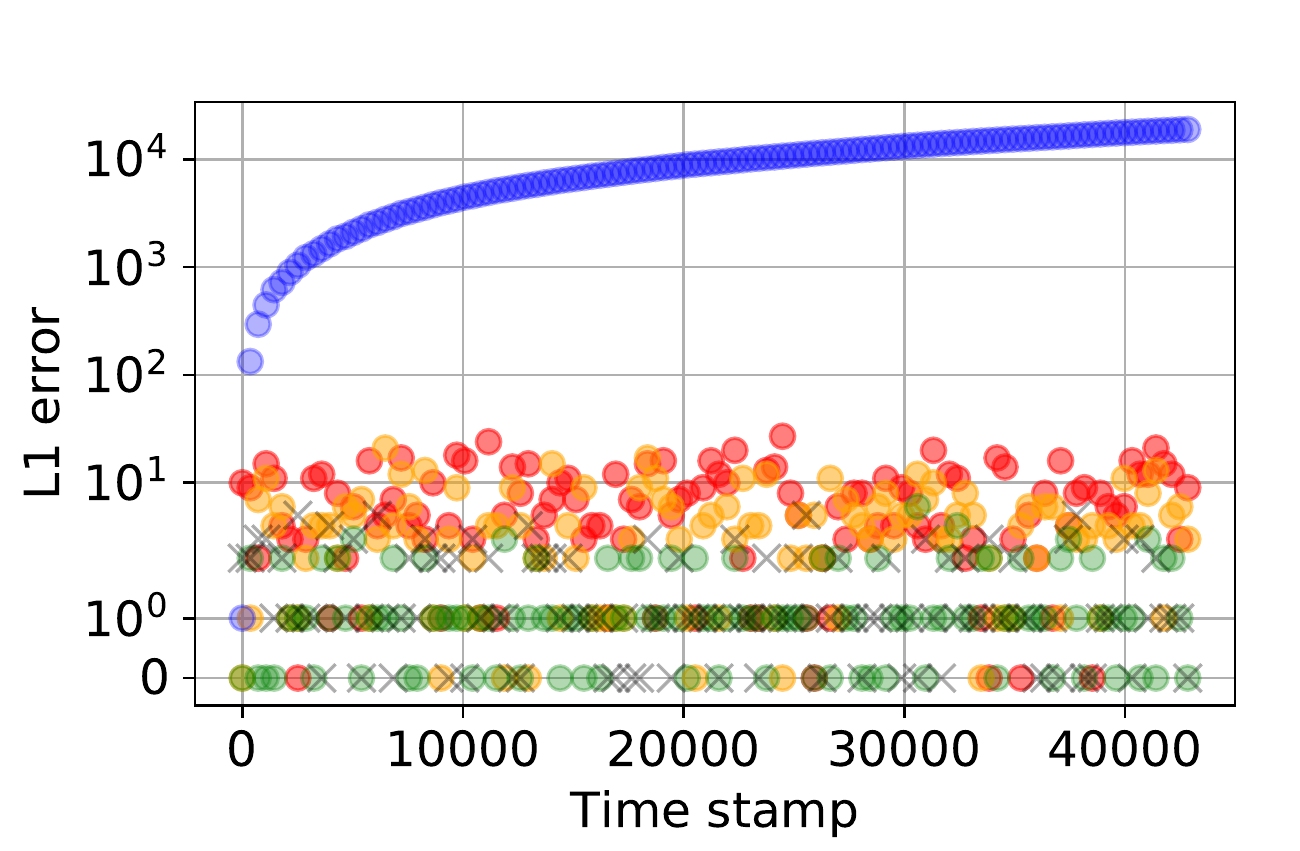}  
        \caption{Crypt$\epsilon$: Q2 error}
        \label{fig:acc-q2-c}
    \end{subfigure}
    \begin{subfigure}[b]{0.198\linewidth}
    \centering    \includegraphics[width=1\linewidth]{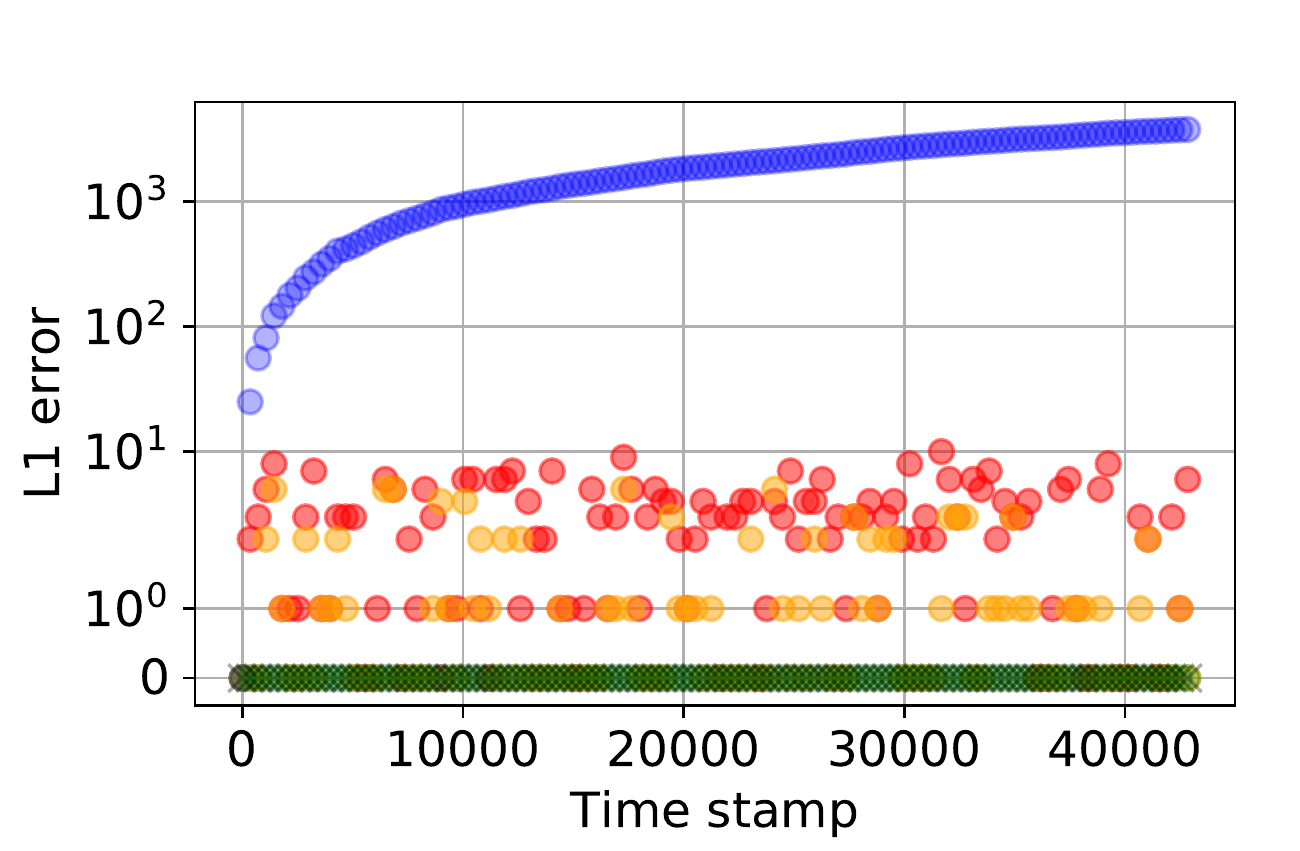}
        \caption{ObliDB: Q1 error}
        \label{fig:acc-q1-ob}\end{subfigure}%%
      \begin{subfigure}[b]{0.198\linewidth}
    \centering    \includegraphics[width=1\linewidth]{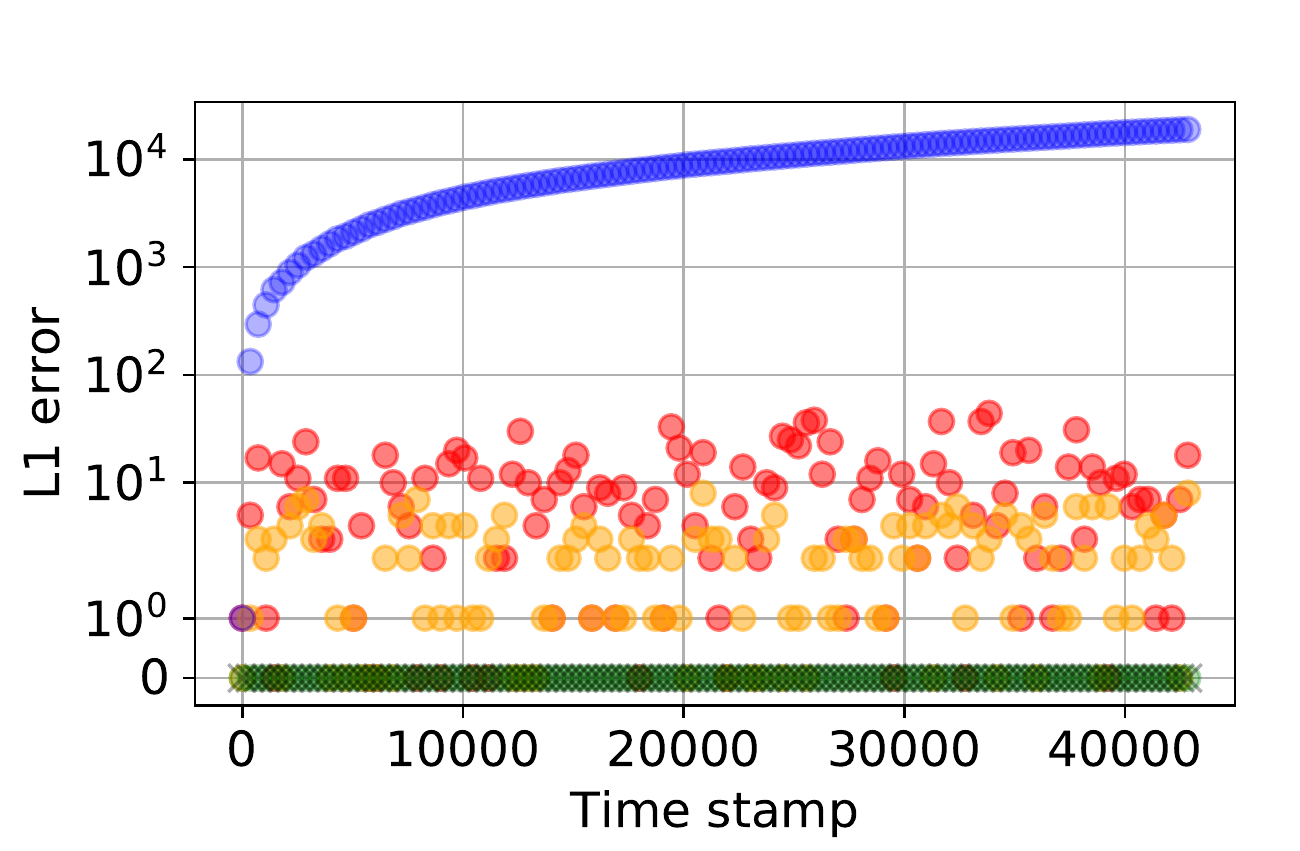}  
        \caption{ObliDB: Q2 error}
        \label{fig:acc-q2-ob}
    \end{subfigure}
    \begin{subfigure}[b]{0.198\linewidth}
    \centering    \includegraphics[width=1\linewidth]{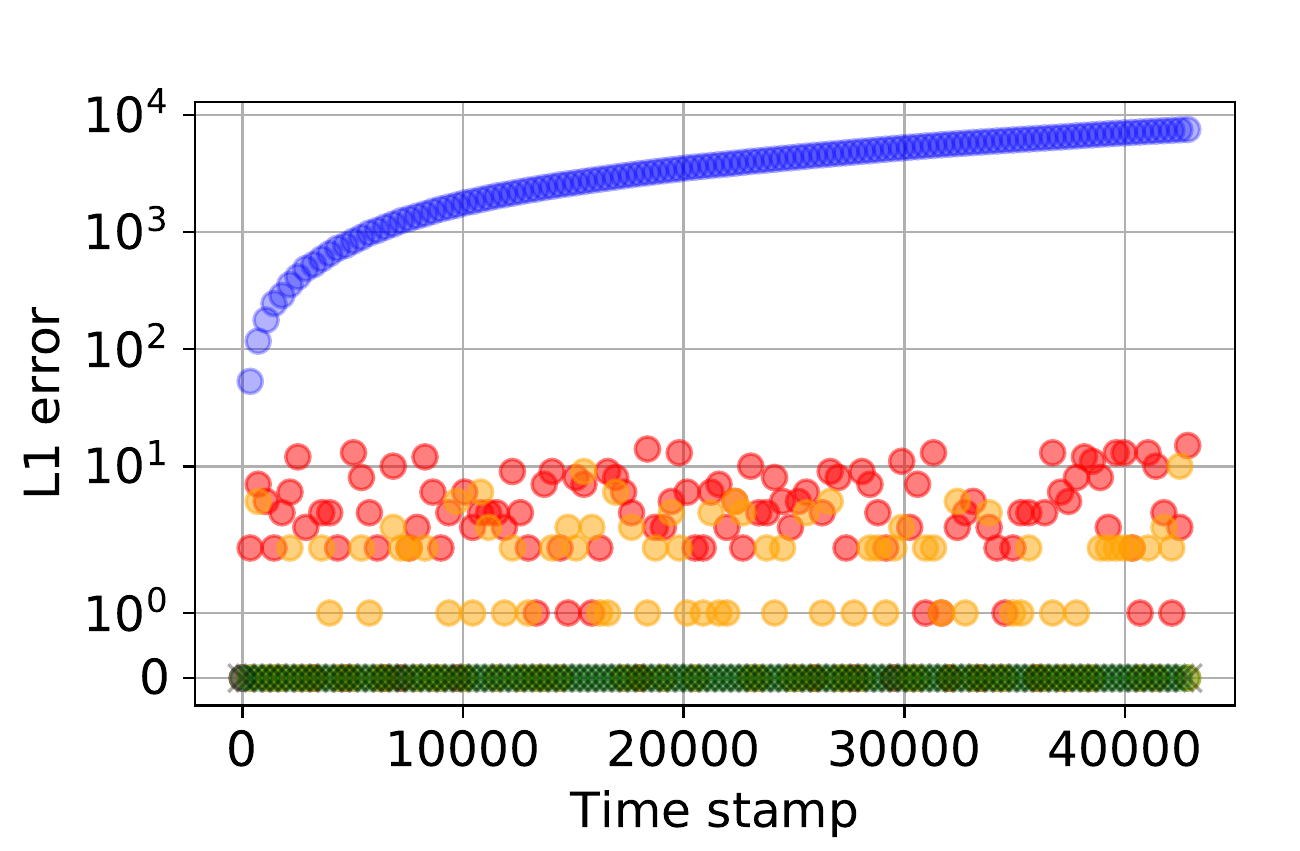}  
        \caption{ObliDB: Q3 error}
        \label{fig:acc-q3-ob}
    \end{subfigure}
    \newline
    \begin{subfigure}[b]{0.198\linewidth}
    \centering    \includegraphics[width=1\linewidth]{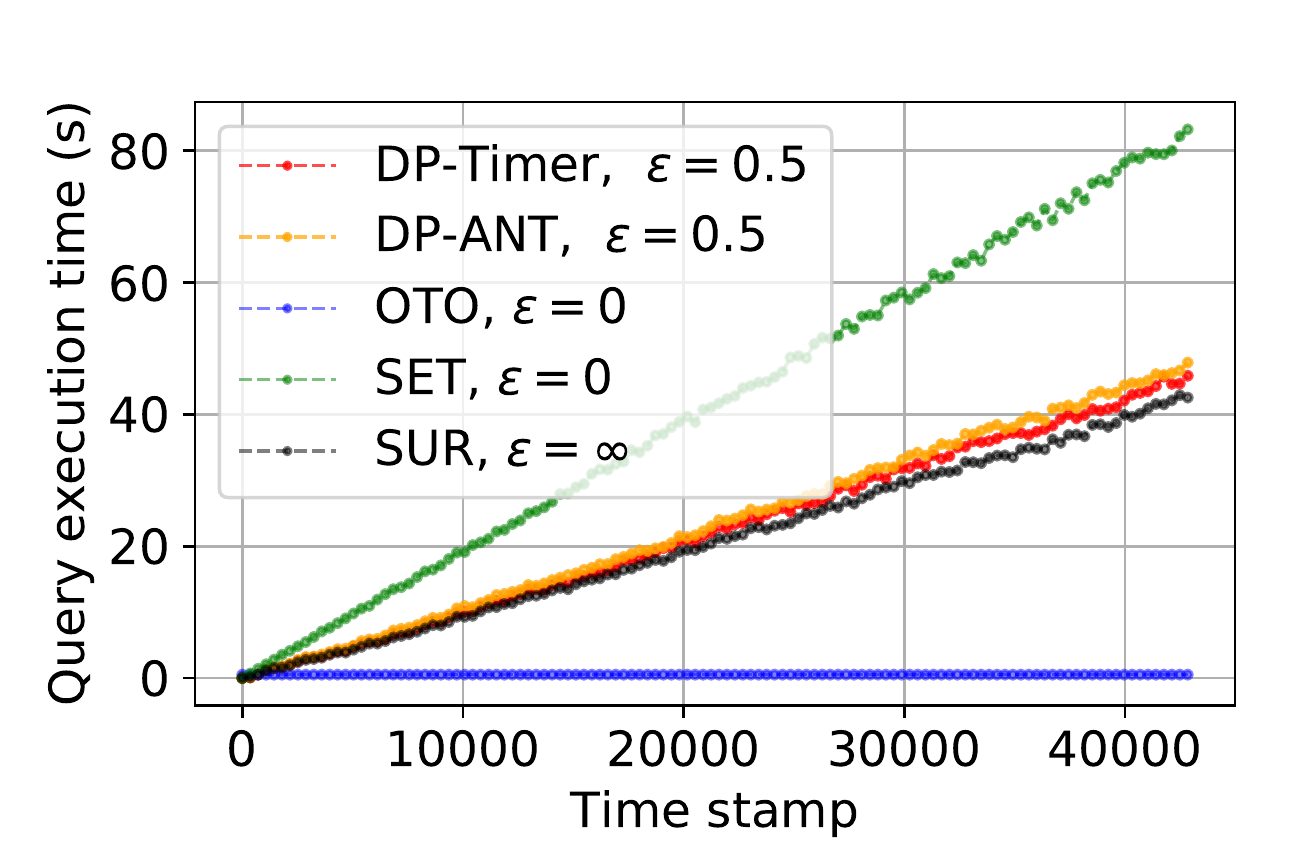}
        \caption{Crypt$\epsilon$: Q1 time}
        \label{fig:q1et-c}
        \end{subfigure}%%
      \begin{subfigure}[b]{0.198\linewidth}
    \centering    \includegraphics[width=1\linewidth]{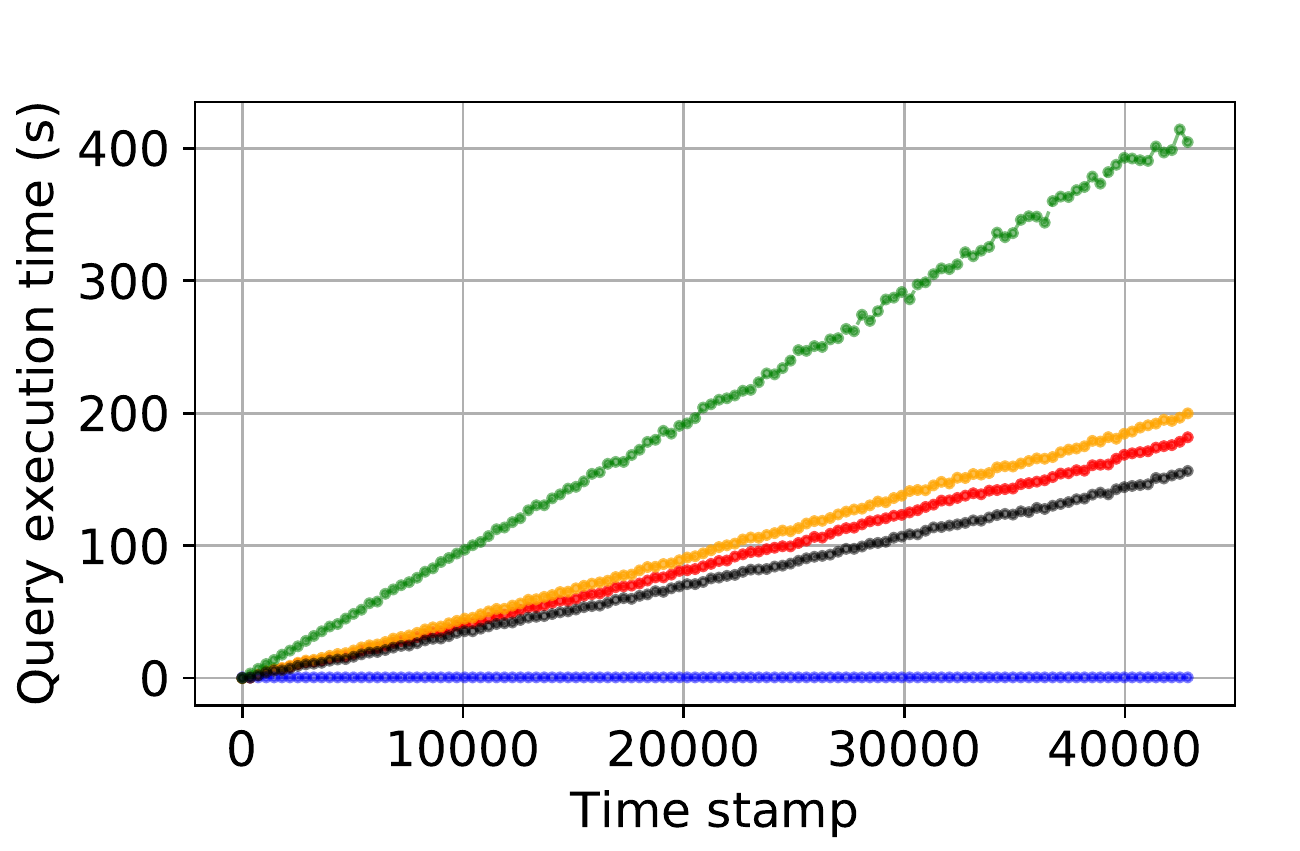}  
        \caption{Crypt$\epsilon$: Q2 time}
        \label{fig:q2et-c}
    \end{subfigure}
    \begin{subfigure}[b]{0.198\linewidth}
    \centering    \includegraphics[width=1\linewidth]{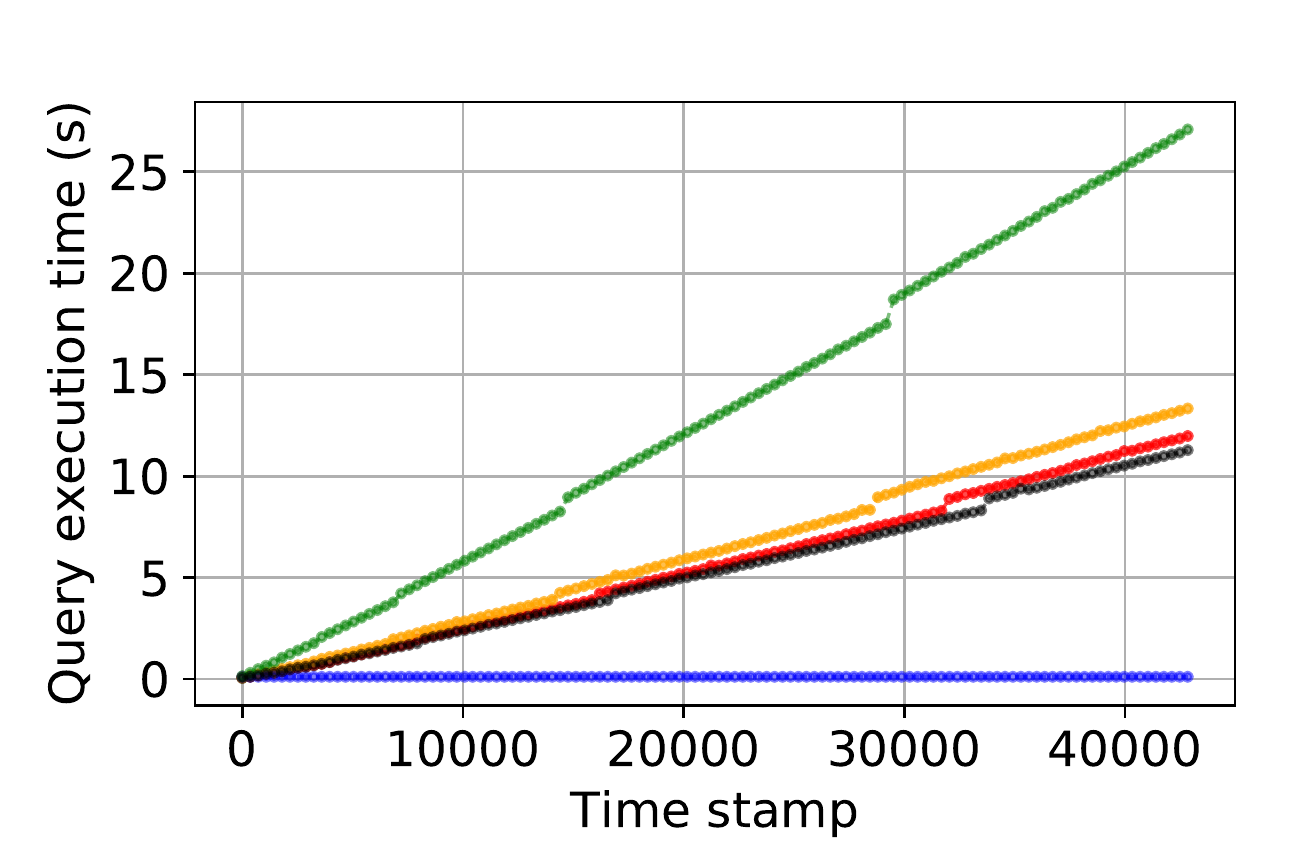}
        \caption{ObliDB: Q1 time}
        \label{fig:q1et-ob}
        \end{subfigure}%%
      \begin{subfigure}[b]{0.198\linewidth}
    \centering    \includegraphics[width=1\linewidth]{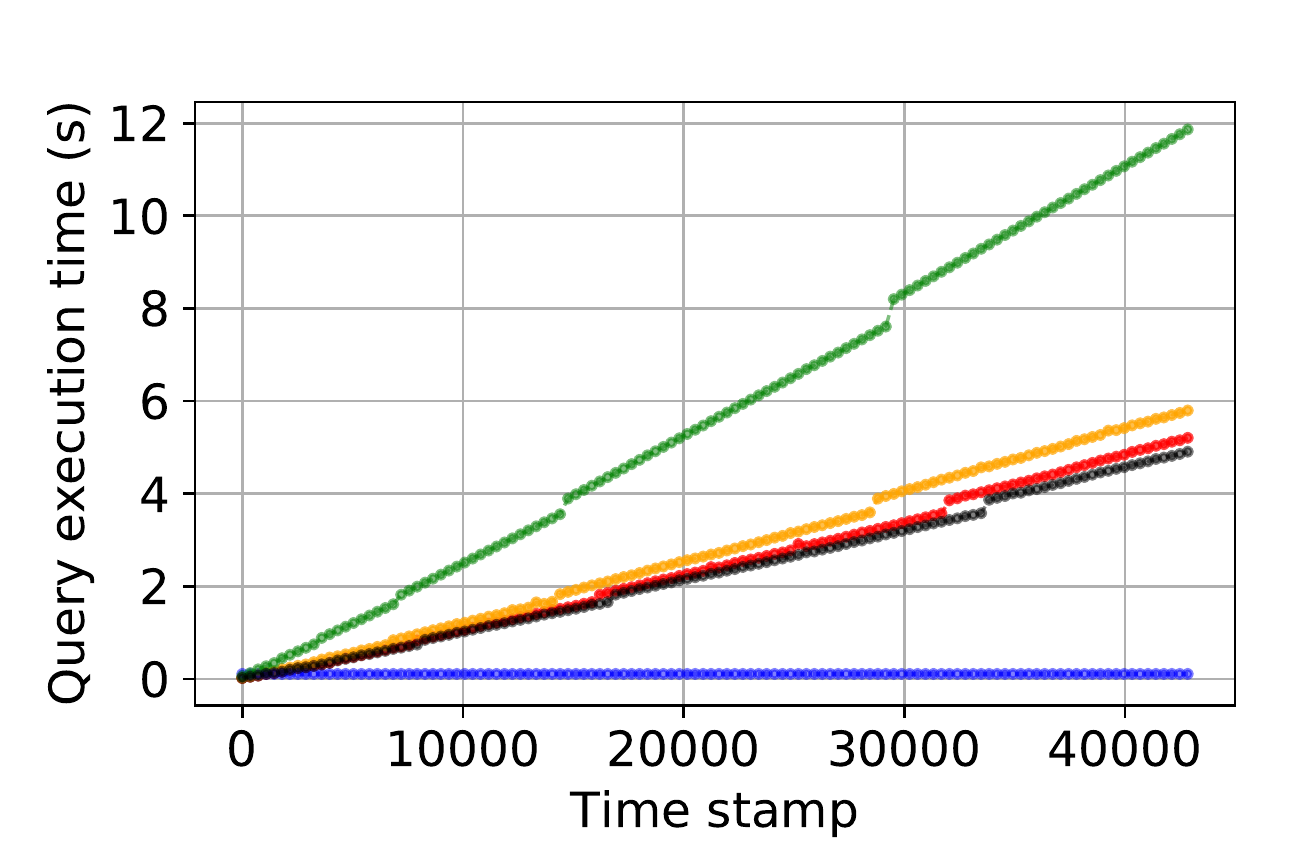}  
        \caption{ObliDB: Q2 time}
        \label{fig:q2et-ob}
    \end{subfigure}
     \begin{subfigure}[b]{0.198\linewidth}
    \centering    \includegraphics[width=1\linewidth]{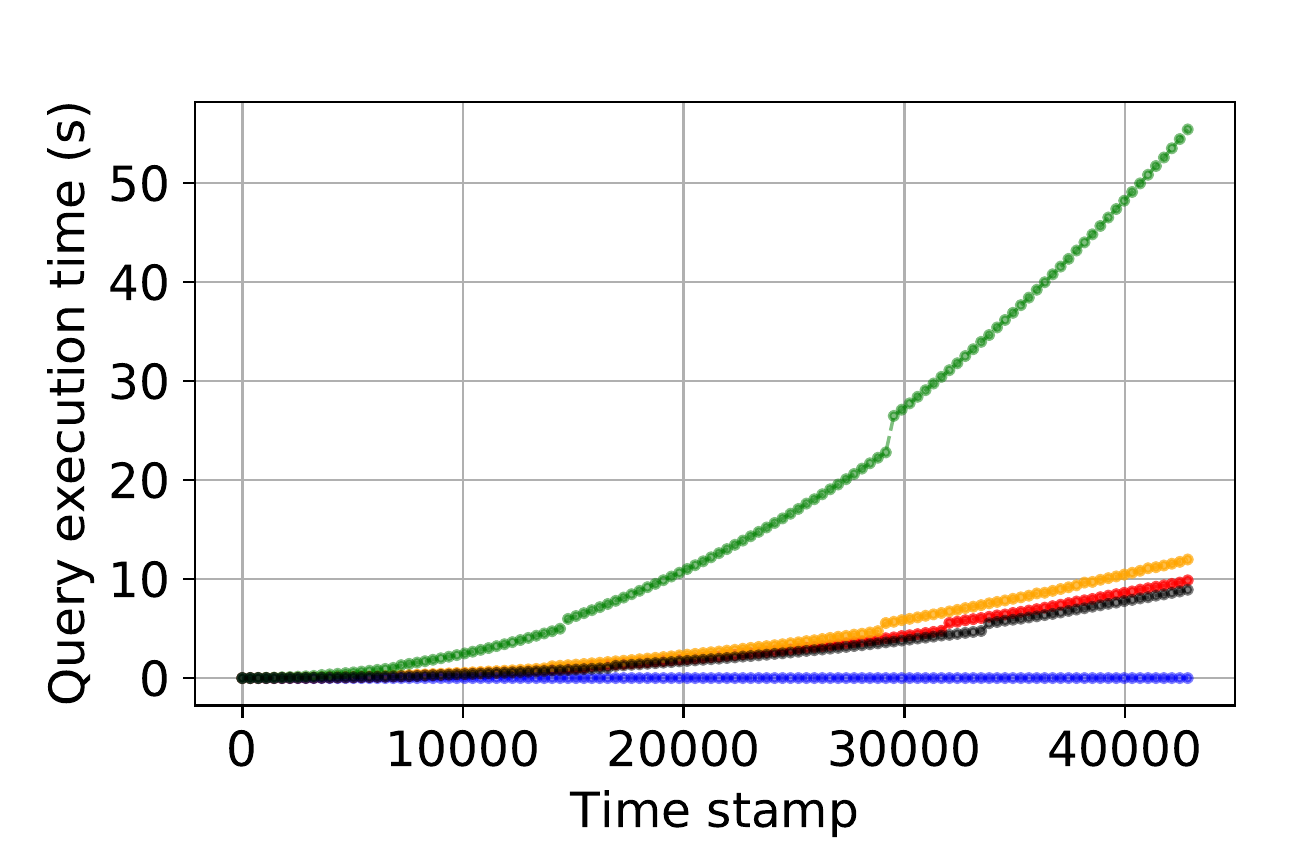}  
        \caption{ObliDB: Q3 time}
        \label{fig:q3et-ob}
    \end{subfigure}
    \vspace{-6mm}
   \caption{End-to-end comparison for synchronization strategies.} \vspace{-4mm}
   \label{fig:q-acc}
\end{figure*}
\vspace{-2mm}
\boldparagraph{Implementation and configuration.} To answer the above questions, we implement multiple instances of \appsystem, execute them with real-world datasets as inputs, and run queries on the deployed system to evaluate different metrics. We implement the \appsystem using two encrypted database schemes, ObliDB~\cite{eskandarian2017oblidb}, and Crypt$\epsilon$~\cite{chowdhury2019cryptc}, from L-0 group and L-DP group, respectively. All experiments are performed on IBM Bare metal servers with 3.8GHz Intel Xeon E-2174G CPU, 32Gb RAM and 64 bit Ubuntu 18.04.1. The ObliDB system is compiled with Intel SGX SDK version 2.9.1. We implement the client using Python 3.7, which takes as input a timestamped dataset, but consumes only one record per round.  The client simulates how a real-world client device would receive new records over time. In our experiment, we assume the time span between two consecutive time stamps is 1 minute. 
%kartik{SOLVED:do we ever go away from this assumption?}

\vspace{-2mm}
\boldparagraph{Data.} We evaluate the two systems using {\it June 2020 New York City Yellow Cab taxi trip record} and {\it June 2020 New York City Green Boro taxi trip record}. Both data sets can be obtained from the TLC Trip Record Project~\cite{taxi2020}. We multiplex the pickup time information of each data point as an indication of when the data \user received this record. We process the raw data with the following steps: (1) Delete invalid data points with incomplete or missing values;  (2) Eliminate duplicated records that occur in the same minute, keeping only one.\footnote{At most one record occurs at each timestamp.} The processed data contains 18,429 and 21,300 records for Yellow Cab and Green Taxi, respectively. (3) Since the monthly data for June 2020 should have 43,200 time units in total, for those time units without associated records, we input a null type record to simulate absence of received data.

\vspace{-2mm}
\boldparagraph{Testing query.}  We select three queries in our evaluation: a linear range query, an aggregation query and a join query.

Q1-Linear range query that counts the total number of records in Yellow Cab data with pickupID within 50-100: ``\texttt{SELECT COUNT(*) FROM YellowCab WHERE pickupID BETWEEN 50 AND 100}''
 
Q2-Aggregation query for Yellow Cab data that counts the number of pickups grouped by location:``\texttt{SELECT pickupID, COUNT(*) AS PickupCnt FROM YellowCab GROUP BY pickupID}''

Q3-Join query that counts how many times both providers have assigned trips: ``\texttt{SELECT COUNT(*) FROM YellowCab INNER JOIN GreenTaxi ON YellowCab.pickTime = GreenTaxi.pickTime}''.

\vspace{-2mm}
\boldparagraph{Default setting.}  Unless specified otherwise, we assume the following defaults. For both DP methods, we set the default privacy as $\epsilon=0.5$, and cache flush parameters as $f = 2000$ (flush interval) and $s = 15$ (flush size). For DP-Timer, the default $T$ is 30 and for DP-ANT the default $\theta=15$. %\kartik{for my understanding, is there a reason why we picked these values? Is this something that needs to be explained?}\nt{I pick those numbers just because I think sync every 30min and sync once get 15 records are reasonable for Taxi use case, should I explain this?} \kartik{do we have a sense of how much our results change if we change these aspects? just trying to ensure that our results are representative/robust.}\nt{For $T$ and $\theta$, we actually have an independent experiment evaluate the results when these values are different. For cache flush, the effect is very tiny. Like if we set cache flush interval from 500- 5000, the average query error changes around 0.1-0.5 for Q2. For flush size any value from 5-60 will get pretty similar accuracy results. } 
We set the ObliDB implementation as the default system and Q2 as the default testing query. 
%Unless further notice, we assume the following default setting for out experiments. \kartik{SOLVED:Unless specified otherwise, we assume the following defaults.}
\eat{
\\\textbf{Accuracy metrics.}\footnote{Note that, unless further explained, all metrics obtained are calculated by the average of 10 replicated experiments} 
We calculate how much unsynchronized data is stored locally before each data synchronization within $N$ time spans. This metric can be interpreted as how much data in total is unavailable between two consecutive synchronizations or the size of local cache whenever \texttt{Sync()} signals update. The smaller value denotes better accuracy.
\\\textbf{Performance metrics.} We count the total number of encrypted records transmitted from \user to \cs after $N$ steps. We use the count as the performance measure, and it can be interpreted as traffic load or storage overhead as well. A smaller value denotes better performance.
}

\subsection{End-to-end Comparison}
In this section, we evaluate {\bf Question-1} by conducting a comparative analysis between the aforementioned DP strategies' empirical accuracy and performance metrics and that of the na\"ive methods. % We run \appsystem under the 5 synchronization strategies and make query requests every 360 time stamps \kartik{time units?} (6 hours). 
We run \appsystem under 5 synchronization strategies and for each group we send testing queries\footnote{Crypt$\epsilon$ does not support join operators, thus we only test Q1 and Q2 for Crypt$\epsilon$} every 360 time units (corresponding to 6 hours). In each group, we report the corresponding L1 error and query execution time (QET) for each testing query as well as the outsourced and dummy data size over time. In addition, we set the privacy budget (used to distort the query answer) of Crypt$\epsilon$ as 3, and we use the default setting for ObliDB with ORAM enabled. % In what follows, we describe our observations according to the obtained results.

\vspace{-1mm}
\boldparagraph{Observation 1. The query errors for both DP strategies are bounded, and such errors are much smaller than that of OTO.} Figure~\ref{fig:q-acc} shows the L1 error and QET for each testing query, the aggregated statistics, such as the mean L1 error and mean QET for all testing queries is reported in Table~\ref{tab:cmp}. First we can observe from Figure~\ref{fig:acc-q1-c}~and~\ref{fig:acc-q1-ob} that the L1 query error of Q1 for both DP strategies fluctuate in the range 0-15. There is no accumulation of query errors as time goes by. Similarly, Figure~\ref{fig:acc-q2-c},~\ref{fig:acc-q2-ob}, and~\ref{fig:acc-q3-ob} show that the errors for both Q2 and Q3 queries are limited to 0-50 under the DP strategies. Note that the query errors in the Crypt$\epsilon$ group are caused by both the unsynchronized records at each time as well as the DP noise injected when releasing the query answer, but the query errors under ObliDB group are caused entirely by unsynchronized records at each time step. This is why, under the Crypt$\epsilon$ group, the SET and SUR methods have non-zero L1 query errors even if these two methods guarantee no unsynchronized data at any time. For the OTO approach, since the user is completely offline after the initial phase, the outsourced database under OTO misses all records after $t=0$, resulting in unbounded query errors. According to Table~\ref{tab:cmp}, the average L1 errors under OTO method are 1929.47, 9214.47, and 3702.6, respectively for Q1, Q2, and Q3, which are at least 520x of that of the DP strategies. 

%Even in the best case, for instance, OTO's L1 query errors in Q1 and Q2 reached 1929.47 and 9464.14, respectively for Crypt$\epsilon$ group, while the corresponding mean L1 query error for DP-Timer method are only 3.69 and 16.69, respectively. DP methods introduce only small performance overhead compared with SUR, and achieve up to 17.6x performance improvement in contrast with SET method
\begin{figure}[ht]
\captionsetup[sub]{font=small,labelfont={bf,sf}}
    \begin{subfigure}[b]{0.5\linewidth}
        \centering
         \includegraphics[width=1\linewidth]{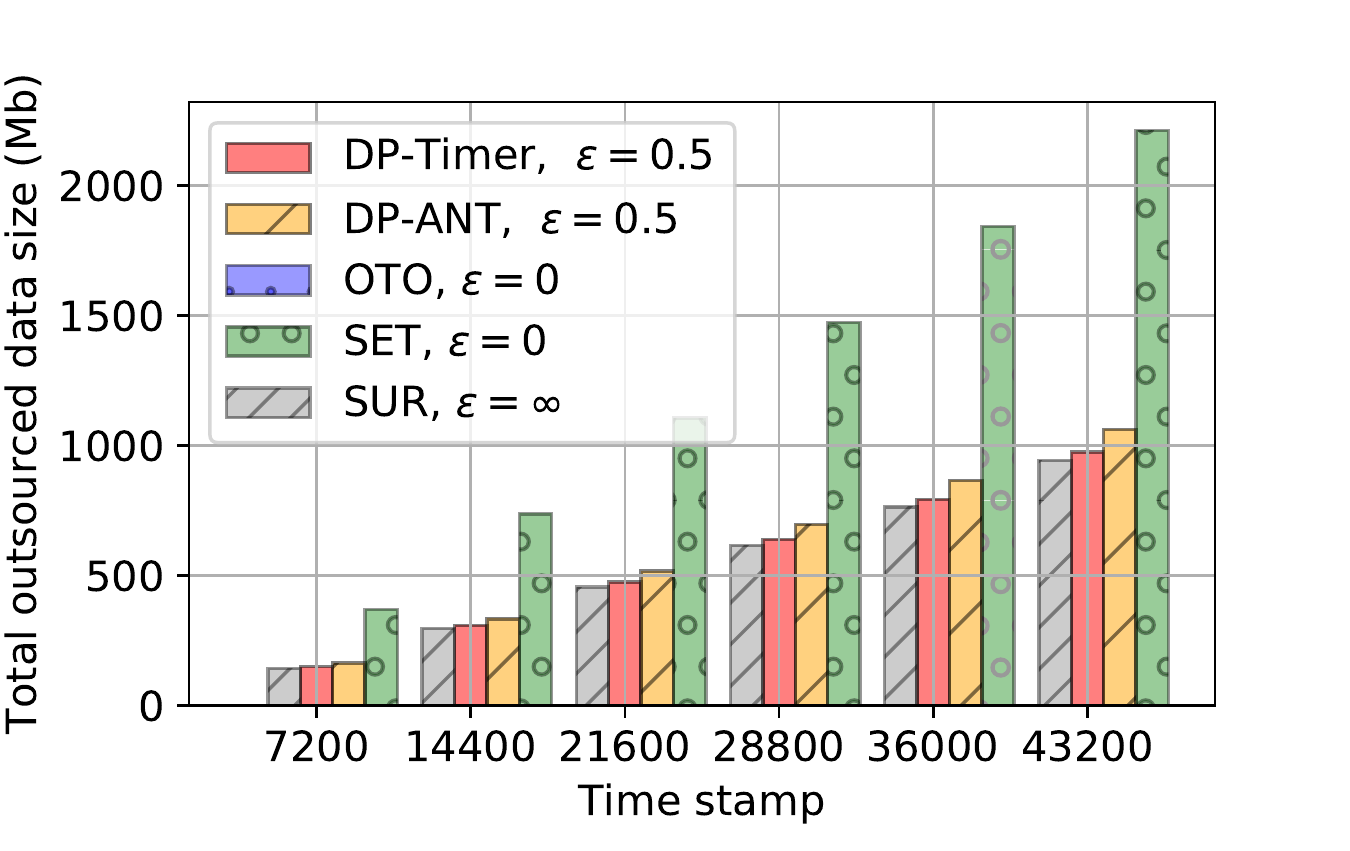}
        \caption{Crypt$\epsilon$: total data size}
        \label{fig:sz-c}
    \end{subfigure}%%
    \begin{subfigure}[b]{0.5\linewidth}
    \centering \includegraphics[width=1\linewidth]{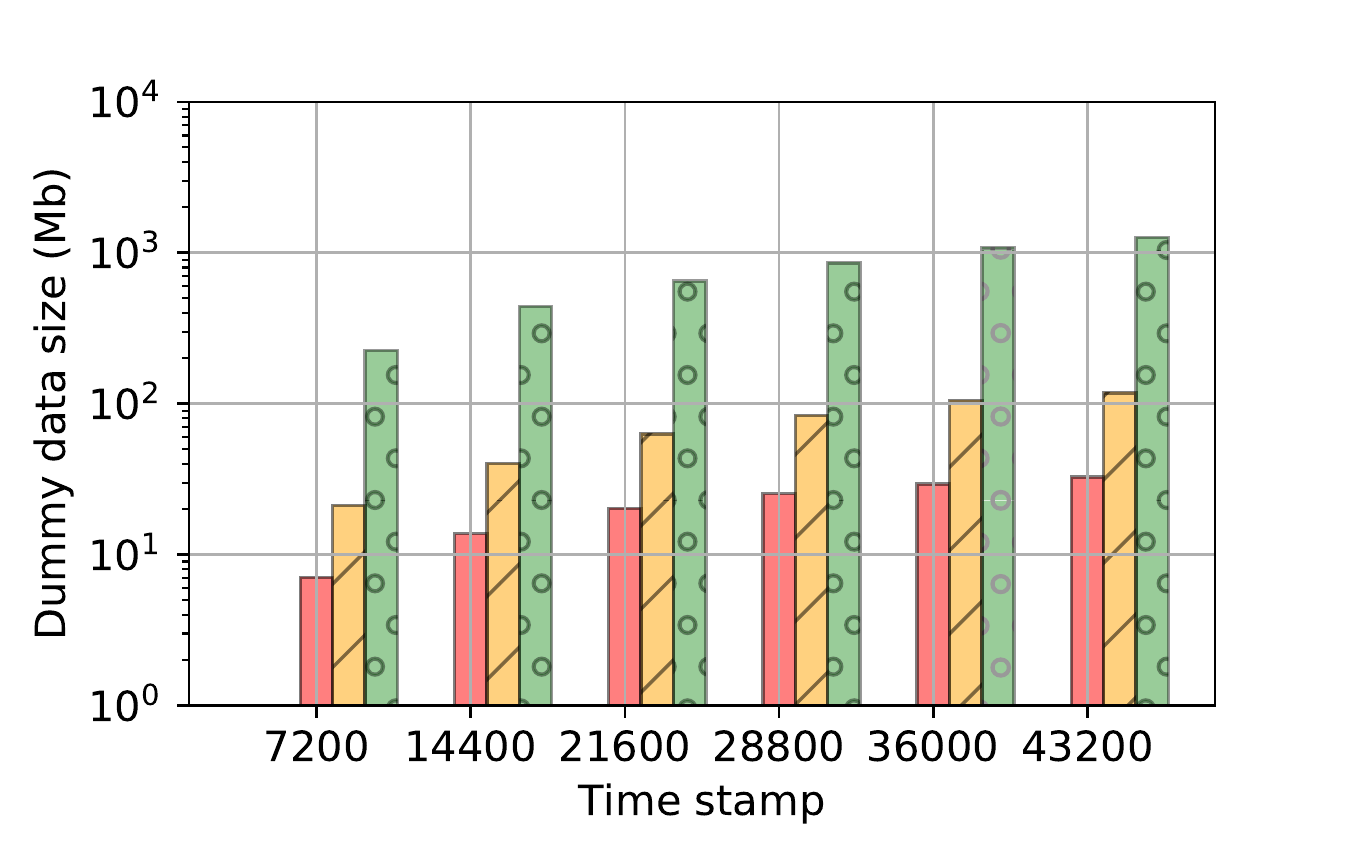}   
 \caption{Crypt$\epsilon$: dummy data size}
        \label{fig:dsz-c}\end{subfigure}%%
    \newline
    \begin{subfigure}[b]{0.5\linewidth}
    \centering    \includegraphics[width=1\linewidth]{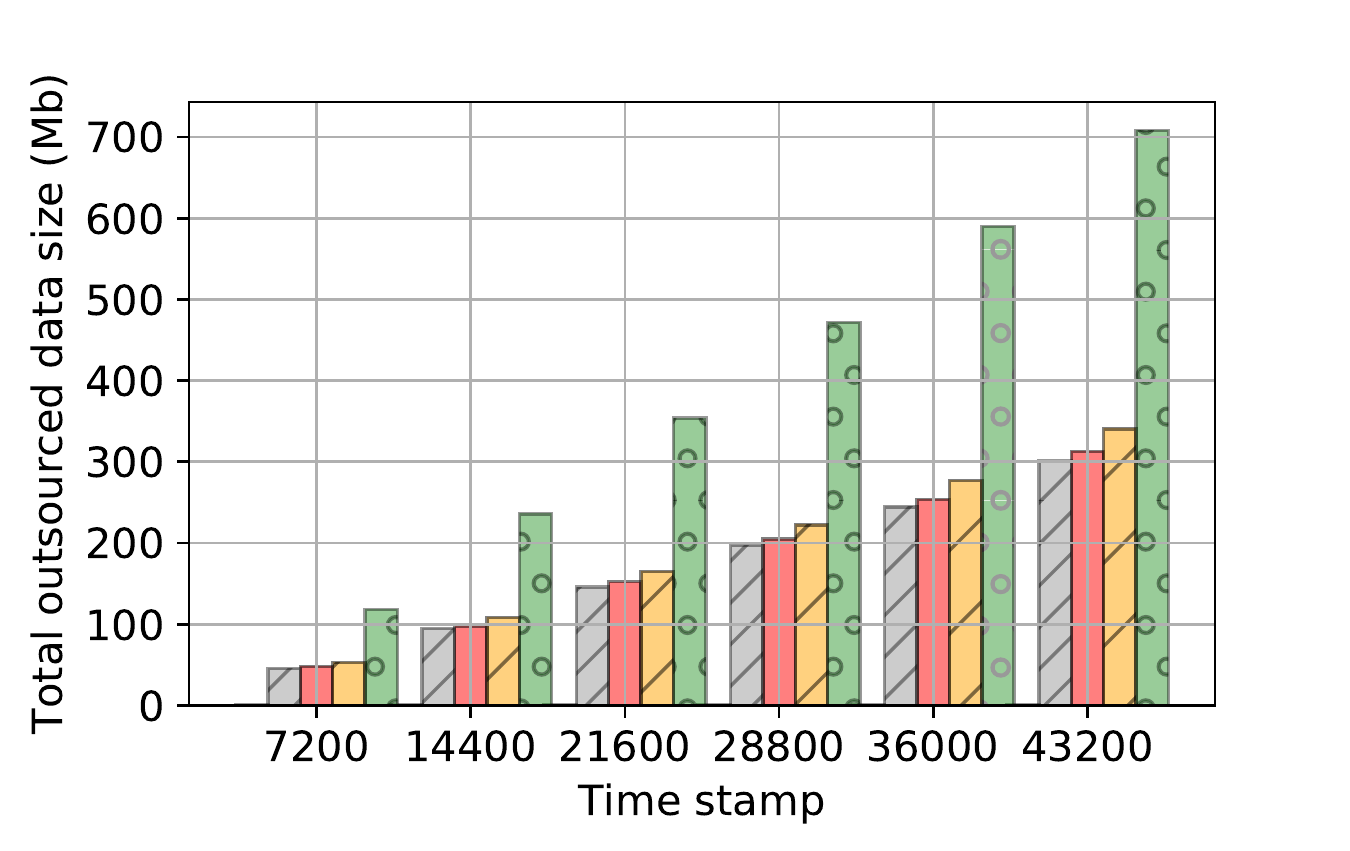}
        \caption{ObliDB: total data size}
        \label{fig:sz-o}\end{subfigure}%%
      \begin{subfigure}[b]{0.5\linewidth}
    \centering    \includegraphics[width=1\linewidth]{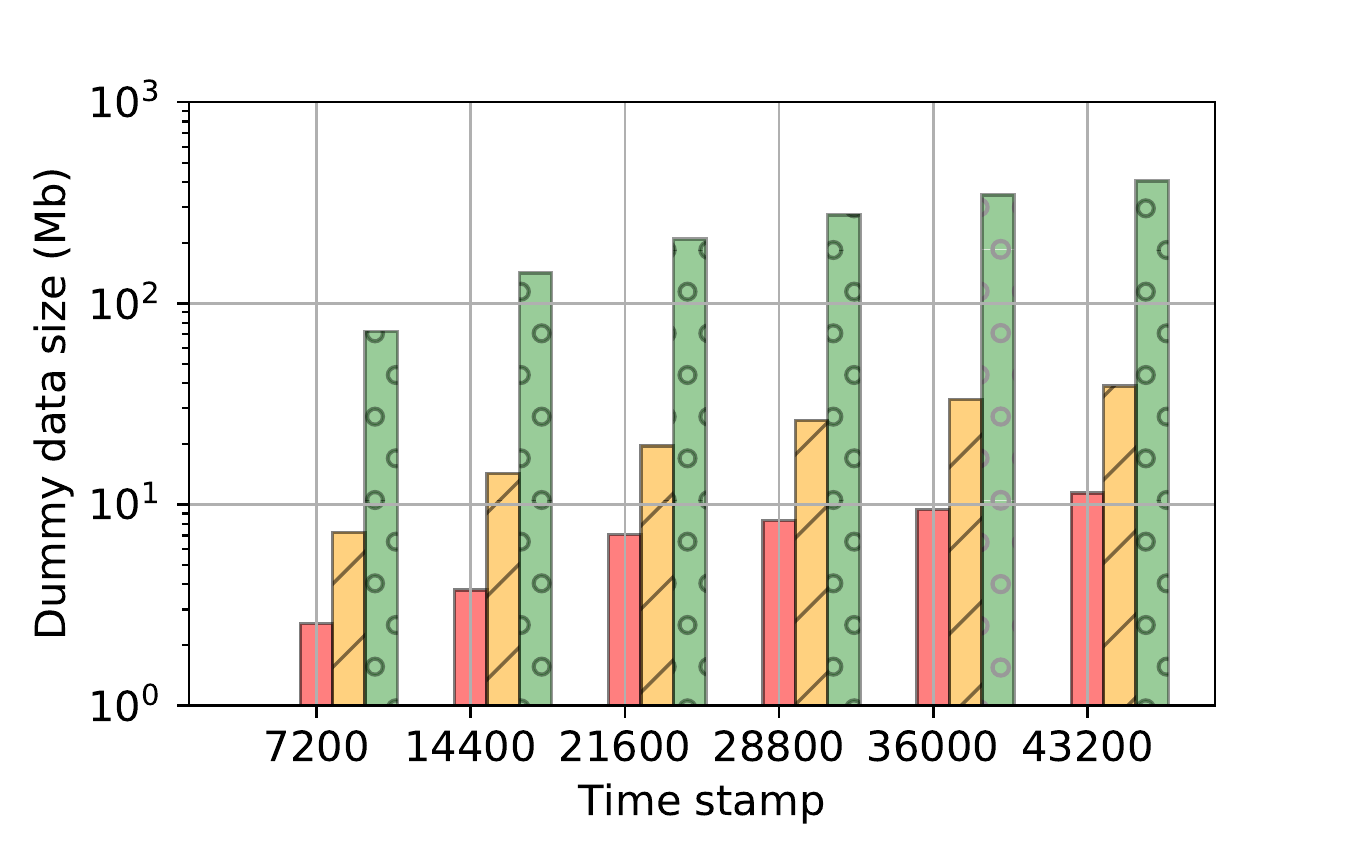}  
        \caption{ObliDB: dummy data size}
        \label{fig:dsz-o}
    \end{subfigure}
    \vspace{-8mm}
   \caption{Total and dummy data size.}%\kartik{increase font size for x-axis, y-axis, legend for all figures. fix overhead vs time. For the last set of figures, there are 5 colors on the legend but not all of them are represented as bars. we should explain that some of them are 0.}}
   \label{fig:perf}
\end{figure}
\vspace{-1mm}
\boldparagraph{Observation 2. The DP methods introduce a small performance overhead compared to SUR, and achieve performance gains up to 5.72x compared to the SET method.} We show the total and dummy data size under each method in Figure~\ref{fig:perf}. According to Figure~\ref{fig:sz-c}~and~\ref{fig:sz-o}, we find that at all time steps, the outsourced data size under both DP approaches are quite similar to that of SUR approach, with at most 6\% additional overhead. 
%Note that SUR updates records as soon as they are received by the client, so the amount of outsourced data under the SUR schema at any time is identical to the amount of data in the logical database. 
However, the SET method outsources at least twice as much data as the DP methods under all cases. In total (Table~\ref{tab:cmp}), SET outsources at least 2.24x and 2.10x more data than DP-Timer and DP-ANT, respectively. OTO always have fixed storage size (0.056 and 0.016 Mb for Crypt$\epsilon$ and ObliDB group) as it only outsources once. Note that the amount of outsourced data under the SUR schema at any time is identical to the amount of data in the logical database. Thus, any oversize of outsourcing data in contrast to SUR is due to the inclusion of dummy data. According to Figure~\ref{fig:dsz-c},~\ref{fig:dsz-o}, and Table~\ref{tab:cmp}, SET introduces at least 11.5x, and can achieve up to 35.6x, more dummy records than DP approaches. Adding dummy data not only inflates the storage, but also results in degraded query response performance. As DP approaches much fewer dummy records, they exhibit little degradation in query performance compared to the SUR method. The SET method, however, uploads many dummy records, thus its query performance drops sharply. According to Figure~\ref{fig:q1et-c},~\ref{fig:q1et-ob},~\ref{fig:q2et-c},~\ref{fig:q2et-ob},~\ref{fig:q3et-ob}, at almost all time steps, the \cs takes twice as much time to run Q1 and Q2 under the SET method than under DP strategies and take at least 4x more time to run Q3. Based on Table~\ref{tab:cmp}, the average QET for Q1 and Q2 under SET are at least 2.17x and 2.3x of that under the DP methods. It's important to point out that both Q1 and Q2 have complexity in $O(N)$, where $N$ is the number of outsourced data. Thus for queries with complexity of $O(N^2)$, such as Q3, the performance gap between the DP strategies and the SET is magnified, in this case boosted to 5.72x. Furthermore, the number of records that SET outsources at any time $t$ is fixed, $|\mathcal{D}_0| +t$. Thus, if the growing database $\mathcal{D} = \{\mathcal{D}_0, U\}$ is sparse (most of the logical updates $u_i \in U$ are $\emptyset$), the performance gap in terms of QET between SET and DP strategies will be further amplified. The the ratio of $(|\mathcal{D}_0| +t) / |\mathcal{D}_t|$ is relatively large if $\mathcal{D}$ is sparse.

%\kartik{do we need to mention that these numbers are sensitive to the dataset? what happens if the data records were sparse? or dense?}

\eat{For example, we provide the least case (ObliDB group, with time at time 7,200), where DP-Timer outsources 48.11Mb data and the DP-ANT outsources total 52.75Mb data while SET outsources 117.96Mb data. }

\begin{figure}[ht]
\captionsetup[sub]{font=small,labelfont={bf,sf}}
    \begin{subfigure}[b]{0.4\linewidth}
        \centering
         \includegraphics[width=1\linewidth]{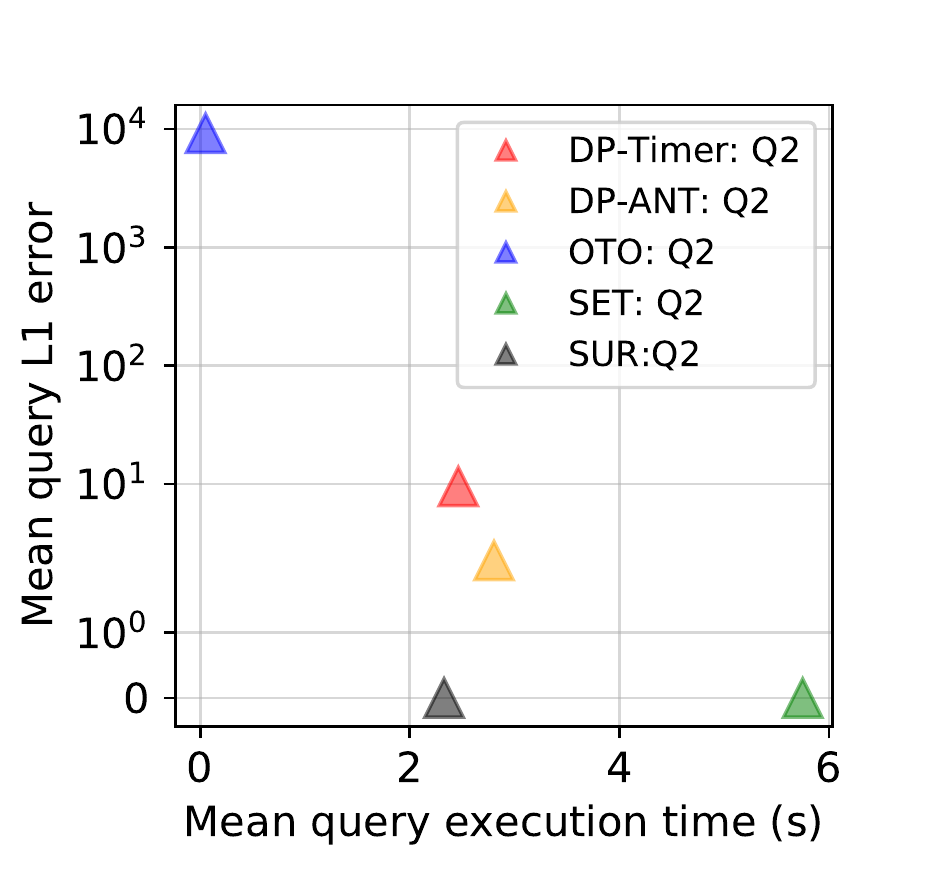}
        \caption{ObliDB group}
        \label{fig:cmp-o}
    \end{subfigure}%%
    \begin{subfigure}[b]{0.4\linewidth}
    \centering \includegraphics[width=1\linewidth]{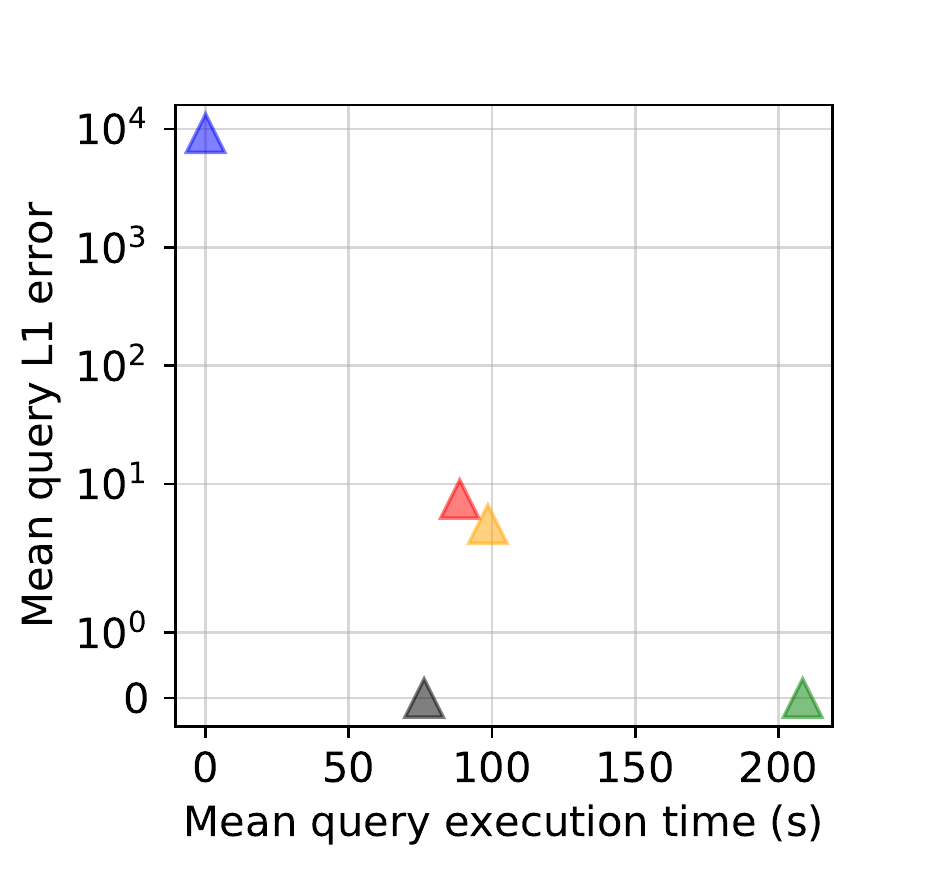}   
 \caption{Crypt$\epsilon$ group}
        \label{fig:cmp-c}\end{subfigure}%%
    \vspace{-3mm}
   \caption{QET v.s. L1 error}
   \label{fig:cross-cmp}
   \vspace{-1mm}
\end{figure}

\vspace{-1mm}
\boldparagraph{Observation 3. DP strategies are optimized for the dual objectives of accuracy and performance.} To better understand the advantage of DP strategies, we compare the default query (Q2) results with respect to DP strategies and nai\"ve methods in Figure~\ref{fig:cross-cmp}, where the x-axis is the performance metric (mean query QET for all queries posted over time), and the y-axis is the accuracy metric (mean query L1 error). Though it seems that SUR is ideal (least query error and no performance overhead), it has no privacy guarantee. Both SET and OTO provide complete privacy. We observe that, the data points of SET fall in the lower right corner of each figure, indicating that the SET method completely sacrifices performance in exchange for a better accuracy guarantee. Thus SET is a private synchronization method that is optimized solely for accuracy. Another extreme case is the OTO method, where the corresponding data points fall in the upper left corner. This means that OTO completely sacrifices accuracy for performance, thus it is optimized for performance only. DP strategies provide privacy guarantees bounded by $\epsilon$, and we observe that the corresponding data points fall in the lower left corner of the figure (close to SUR), indicating that the DP strategies provide considerable accuracy guarantees (or bounded error) at a small sacrifice in performance. This is further evidence that DP strategies are optimized for the dual objectives of accuracy and performance. 

%Though, it seems SUR is ideal but it has no privacy guarantee, 

%\kartik{mention that SUR seems ideal but it has no privacy guarantee. On the other hand, the other two have privacy better than differentially private methods.}

%According to Table~\ref{tab:cmp}, We know that, on average, DP-ANT has a smaller query error than the DP-Timer method in all cases. The reason is we have added relatively large noise ($\epsilon_1 = 0.5/2$) to the threshold as well as the count each time for DP-ANT, thus the synchronization condition (Algorithm~\ref{algo:cbuffer}:11) hits more frequently, resulting in a faster update frequency. However, more frequent synchronization means that more dummy data will be injected into the outsourcing database, so based on Figure~\ref{fig:dsz-c} and~\ref{fig:dsz-o}, we observe that at any time, DP-ANT raises more dummy data than the DP-Timer method, (around 2.3x on average over time denoted by Table~\ref{tab:cmp}). Similar performance divergence can also be observed in Figure~\ref{fig:q1et-c},~\ref{fig:q2et-c},~\ref{fig:q1et-o}, and~\ref{fig:q2et-o}. For almost all cases, the query execution time of DP-Timer method is slightly smaller than that obtained under DP-ANT strategy.

\eat{
\noindent\textbf{Experiment design.} We mainly consider two groups of experimental studies in our evaluation:  (1) Comparison with na\"ive methods: In this group we fix the parameters for DP strategies and compare the corresponding performance and accuracy with na\"ive methods. For DP-Timer, we set $K=20, \epsilon = 0.5, f = 50$, and $s = 15$. For DP-ANT, we set $\theta = 30, \epsilon = 0.5, f = 200$, and $s = 15$;~ (2) DP strategies with changing privacy: In this experiment group, we fix all other parameters for DP strategies ($K=40, f = 50, s = 15$ for DP-Timer, and $\theta = 20, f = 2000, s = 15$ for DP-ANT) except $\epsilon$, we then change $\epsilon$ from 0.1 to 2.0 and evaluate the \appsystem's accuracy and performance under each $\epsilon$. 
}

% Please add the following required packages to your document preamble:
% \usepackage[table,xcdraw]{xcolor}
% If you use beamer only pass "xcolor=table" option, i.e. \documentclass[xcolor=table]{beamer}

\begin{table}[]
\scalebox{0.95}{\small
\begin{tabular}{|llccccc|}
\hline
\multicolumn{2}{|l|}{\textbf{\begin{tabular}[c]{@{}l@{}}Comparison \\ Categories\end{tabular}}} & \multicolumn{1}{c|}{\textbf{SUR}} & \multicolumn{1}{c|}{\textbf{SET}}   & \multicolumn{1}{c|}{\textbf{OTO}} & \multicolumn{1}{c|}{\textbf{\begin{tabular}[c]{@{}c@{}}DP\\ Timer\end{tabular}}} & \textbf{\begin{tabular}[c]{@{}c@{}}DP\\ ANT\end{tabular}} \\ \hline
                                            &                                                     & \multicolumn{1}{l}{}              & \multicolumn{1}{l}{\textbf{Crypt$\epsilon$}} & \multicolumn{1}{l}{}              & \multicolumn{1}{l}{}                                                             & \multicolumn{1}{l|}{}                                     \\ \hline
\multicolumn{1}{|l|}{}                      & \multicolumn{1}{l|}{\textbf{Mean L1 Err}}           & \multicolumn{1}{c|}{0.75}         & \multicolumn{1}{c|}{0.71}           & \multicolumn{1}{c|}{1929.47}      & \multicolumn{1}{c|}{3.04}                                                        & 0.99                                                      \\
\multicolumn{1}{|l|}{\textbf{Q1}}           & \multicolumn{1}{l|}{\textbf{Max L1 Err}}            & \multicolumn{1}{c|}{2}            & \multicolumn{1}{c|}{3}              & \multicolumn{1}{c|}{8079}         & \multicolumn{1}{c|}{11}                                                          & 5                                                         \\
\multicolumn{1}{|l|}{}                      & \multicolumn{1}{l|}{\textbf{Mean QET}}              & \multicolumn{1}{c|}{20.94}        & \multicolumn{1}{c|}{41.70}          & \multicolumn{1}{c|}{0.33}         & \multicolumn{1}{c|}{22.51}                                                       & 23.54                                                     \\ \hline
\multicolumn{1}{|l|}{}                      & \multicolumn{1}{l|}{\textbf{Mean L1 Err}}           & \multicolumn{1}{c|}{3.36}         & \multicolumn{1}{c|}{3.39}           & \multicolumn{1}{c|}{9464.14}      & \multicolumn{1}{c|}{7.45}                                                        & 4.56                                                      \\
\multicolumn{1}{|l|}{\textbf{Q2}}           & \multicolumn{1}{l|}{\textbf{Max L1 Err}}            & \multicolumn{1}{c|}{13}           & \multicolumn{1}{c|}{15}             & \multicolumn{1}{c|}{18446}        & \multicolumn{1}{c|}{27}                                                          & 21                                                        \\
\multicolumn{1}{|l|}{}                      & \multicolumn{1}{l|}{\textbf{Mean QET}}              & \multicolumn{1}{c|}{76.34}        & \multicolumn{1}{c|}{208.47}         & \multicolumn{1}{c|}{0.72}         & \multicolumn{1}{c|}{88.75}                                                       & 98.58                                                     \\ \hline
\multicolumn{2}{|l|}{\textbf{Mean logical gap}}                                                            & \multicolumn{1}{c|}{0}            & \multicolumn{1}{c|}{0}              & \multicolumn{1}{c|}{9214.5}       & \multicolumn{1}{c|}{10.91}                                                       & 4.8                                                       \\ \hline
\multicolumn{2}{|l|}{\textbf{Total data (Mb)}}                                                    & \multicolumn{1}{c|}{943.5}        & \multicolumn{1}{c|}{2211.79}        & \multicolumn{1}{c|}{0.052}        & \multicolumn{1}{c|}{979.10}                                                      & 1027.31                                                   \\
\multicolumn{2}{|l|}{\textbf{Dummy data (Mb)}}                                                    & \multicolumn{1}{c|}{N/A}          & \multicolumn{1}{c|}{1268.29}        & \multicolumn{1}{c|}{N/A}          & \multicolumn{1}{c|}{35.6}                                                        & 83.8                                                      \\ \hline
                                            &                                                     & \multicolumn{1}{l}{}              & \multicolumn{1}{l}{\textbf{ObliDB}} & \multicolumn{1}{l}{}              & \multicolumn{1}{l}{}                                                             & \multicolumn{1}{l|}{}                                     \\ \hline
\multicolumn{1}{|l|}{}                      & \multicolumn{1}{l|}{\textbf{Mean L1 Err}}           & \multicolumn{1}{c|}{0}            & \multicolumn{1}{c|}{0}              & \multicolumn{1}{c|}{1929.47}      & \multicolumn{1}{c|}{2.95}                                                        & 0.91                                                      \\
\multicolumn{1}{|l|}{\textbf{Q1}}           & \multicolumn{1}{l|}{\textbf{Max L1 Err}}            & \multicolumn{1}{c|}{0}            & \multicolumn{1}{c|}{0}              & \multicolumn{1}{c|}{8801}         & \multicolumn{1}{c|}{10}                                                          & 5                                                         \\
\multicolumn{1}{|l|}{}                      & \multicolumn{1}{l|}{\textbf{Mean QET}}              & \multicolumn{1}{c|}{5.39}         & \multicolumn{1}{c|}{14.18}          & \multicolumn{1}{c|}{0.041}        & \multicolumn{1}{c|}{5.69}                                                        & 6.48                                                      \\ \hline
\multicolumn{1}{|l|}{\textbf{}}             & \multicolumn{1}{l|}{\textbf{Mean L1 Err}}           & \multicolumn{1}{c|}{0}            & \multicolumn{1}{c|}{0}              & \multicolumn{1}{c|}{9214.51}      & \multicolumn{1}{c|}{9.25}                                                        & 2.25                                                      \\
\multicolumn{1}{|l|}{\textbf{Q2}}           & \multicolumn{1}{l|}{\textbf{Max L1 Err}}            & \multicolumn{1}{c|}{0}            & \multicolumn{1}{c|}{0}              & \multicolumn{1}{c|}{18429}        & \multicolumn{1}{c|}{44}                                                          & 8                                                         \\
\multicolumn{1}{|l|}{\textbf{}}             & \multicolumn{1}{l|}{\textbf{Mean QET}}              & \multicolumn{1}{c|}{2.32}         & \multicolumn{1}{c|}{5.76}           & \multicolumn{1}{c|}{0.071}        & \multicolumn{1}{c|}{2.46}                                                        & 2.80                                                      \\ \hline
\multicolumn{1}{|l|}{\textbf{}}             & \multicolumn{1}{l|}{\textbf{Mean L1 Err}}           & \multicolumn{1}{c|}{0}            & \multicolumn{1}{c|}{0}              & \multicolumn{1}{c|}{3702.6}       & \multicolumn{1}{c|}{4.93}                                                        & 1.43                                                      \\
\multicolumn{1}{|l|}{\textbf{Q3}}           & \multicolumn{1}{l|}{\textbf{Max L1 Err}}            & \multicolumn{1}{c|}{0}            & \multicolumn{1}{c|}{0}              & \multicolumn{1}{c|}{7407}         & \multicolumn{1}{c|}{15}                                                          & 10                                                        \\
\multicolumn{1}{|l|}{\textbf{}}             & \multicolumn{1}{l|}{\textbf{Mean QET}}              & \multicolumn{1}{c|}{2.77}         & \multicolumn{1}{c|}{17.86}          & \multicolumn{1}{c|}{0.095}        & \multicolumn{1}{c|}{3.12}                                                        & 3.86                                                      \\ \hline
\multicolumn{2}{|l|}{\textbf{Mean Logical gap}}                                                   & \multicolumn{1}{c|}{0}            & \multicolumn{1}{c|}{0}              & \multicolumn{1}{c|}{9214.5}       & \multicolumn{1}{c|}{10.73}                                                       & 2.96                                                      \\ \hline
\multicolumn{2}{|l|}{\textbf{Total data (Mb)}}                                                    & \multicolumn{1}{c|}{301.90}       & \multicolumn{1}{c|}{707.79}         & \multicolumn{1}{c|}{0.016}        & \multicolumn{1}{c|}{316.68}                                                      & 337.09                                                    \\
\multicolumn{2}{|l|}{\textbf{Dummy data (Mb)}}                                                    & \multicolumn{1}{c|}{N/A}          & \multicolumn{1}{c|}{405.89}         & \multicolumn{1}{c|}{N/A}          & \multicolumn{1}{c|}{14.78}                                                       & 35.19                                                     \\ \hline
\end{tabular}
}
\caption{Aggregated statistics for comparison experiment}\vspace{-3mm}
\label{tab:cmp}
\vspace{-2mm}
\end{table}

\vspace{-3mm}
\subsection{Trade-off with Changing Privacy Level}
We address {\bf Question-2} by evaluating the DP policies with different $\epsilon$ ranging from 0.001 to 10. For other parameters associated with DP strategies, we apply the default setting and evaluate them with the default testing query Q2 on the default system (ObliDB based implementation). For each $\epsilon$, we report the mean query error and QET. We summarize our observations as follows.

\begin{figure}[ht]
\captionsetup[sub]{font=small,labelfont={bf,sf}}
    \begin{subfigure}[b]{0.49\linewidth}
        \centering
         \includegraphics[width=1\linewidth]{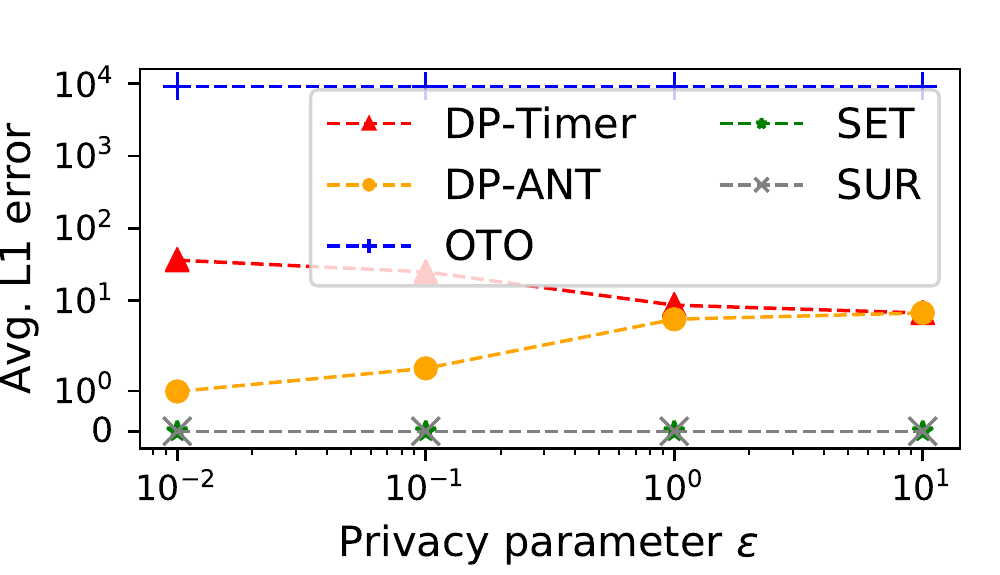}
        \caption{Avg. L1 error v.s. Privacy}
        \label{fig:eps-acc}
    \end{subfigure}%%
    \begin{subfigure}[b]{0.49\linewidth}
    \centering \includegraphics[width=1\linewidth]{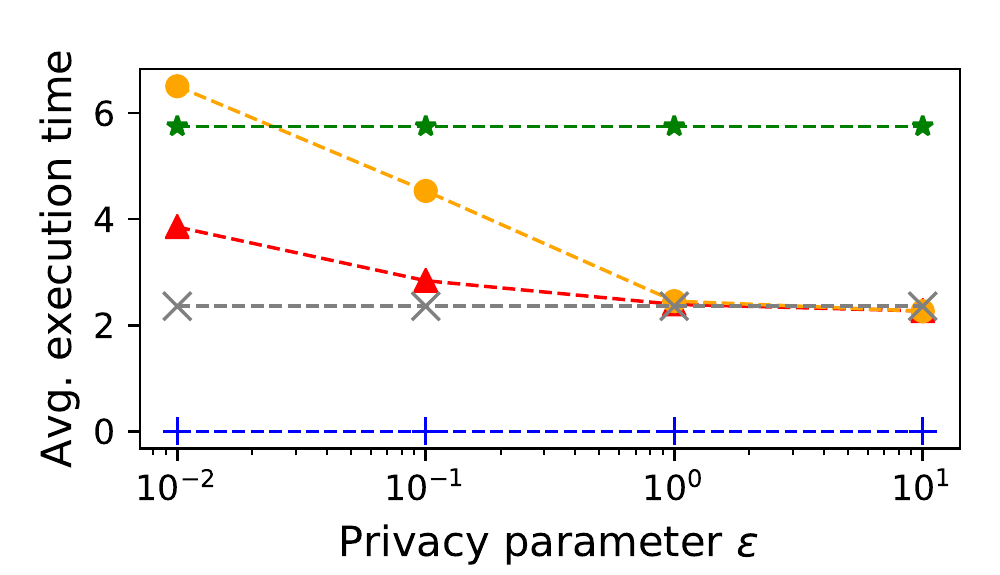}   
 \caption{Avg. execution time v.s. Privacy}
        \label{fig:eps-perf}\end{subfigure}%%
    \vspace{-4mm}
   \caption{Trade-off with changing privacy.}
   \label{fig:trade-off-privacy}
   \vspace{-1mm}
\end{figure}

\vspace{-1mm}
\boldparagraph{Observation 4. DP-Timer and DP-ANT exhibit different trends in accuracy when $\epsilon$ changes.} Figure~\ref{fig:eps-acc} illustrates the evaluation results for privacy versus accuracy. In general, we observe that, as $\epsilon$ increases from 0.01 to 1, the mean query error of DP-ANT increases while the error of DP-Timer decreases. Both errors change slightly from $\epsilon=1$ to $\epsilon=10$. Recall that DP-Timer's logic gap consists of the number of records received since the last update,$c_{t*}^{t}$, and the data delayed by the previous synchronization operation (bounded by $O(2\sqrt{k}/\epsilon)$). Since the update frequency of the DP-Timer is fixed, $c_{t*}^{t}$ is not affected when $\epsilon$ changes. However, when the $\epsilon$ is smaller, the number of delayed records increases, which further leads to higher query errors. For the DP-ANT though, when the $\epsilon$ is very small, the delayed records increases as well (bouned by $O(16\log{t}/\epsilon)$). However, smaller $\epsilon$  (large noise) will result in more frequent updates for the DP-ANT. This is because the large noise will cause the DP-ANT to trigger the upload condition early before it receives enough data. As a result, the number of records received since last update, $c_{t*}^{t}$, will be reduced, which essentially produces smaller query errors. In summary, for DP strategies, we observe that there is a trade-off between privacy and accuracy guarantee.
%Recall that DP-ANT updates the database if the \user has received enough new data. However, when $\epsilon$ is small (noise is large), then the \user is more likely to perform updates without receiving sufficient data, which increases the synchronization frequency. In such a case, with a small $\epsilon$, the accuracy guarantee for DP-ANT approaches that of SET and SUR. As $\epsilon$ increases, the update frequency of DP-ANT decreases, and as a result, query errors increase. 

%For DP-Timer, when $\epsilon$ is relatively small, as the noise injected to distort the update record size is large, the chance of delaying large amounts of data per synchronization increases, which causes large query errors.\kartik{Not sure I understand the previous statement. What is this chance over?} However, as $\epsilon$ increases, less data is delayed thus the query error decreases accordingly. In a summary, the experiments show that, DP strategies can be tuned to obtain different accuracy by adjusting the privacy level.
 
%When the noise is reduced to a certain level (i.e. $\epsilon\geq1$), the update frequency remains stable, thus the query error keeps. 

\vspace{-2mm}
\boldparagraph{Observation 5. Both DP strategies show decreasing performance overhead when $\epsilon$ increases.}
 Both DP methods show similar tendencies in terms of the performance metrics (Figure~\ref{fig:eps-perf}). When $\epsilon$ increases, the QET decreases. This can be explained by Theorem~\ref{sz:timer} and ~\ref{sz:ant}. That is, with a relatively large $\epsilon$, the dummy records injected at each update will be reduced substantially. As a result, less overhead will be introduced and the query response performance is then increased. Similarly, for DP strategies, there is a trade-off between privacy and performance. 
 %The Crypt$\epsilon$ group shows similar results, for space concern, we move the corresponding results to Appendixxxx. 
\vspace{-2mm}
\subsection{Trade-off with Fixed Privacy Level}
We address {\bf Question-3} by evaluating the DP policies with default $\epsilon$ but changing $T$ and $\theta$ for DP-Timer and DP-ANT, respectively. 

\begin{figure}[ht]
\captionsetup[sub]{font=small,labelfont={bf,sf}}
    \begin{subfigure}[b]{0.49\linewidth}
        \centering
         \includegraphics[width=1\linewidth]{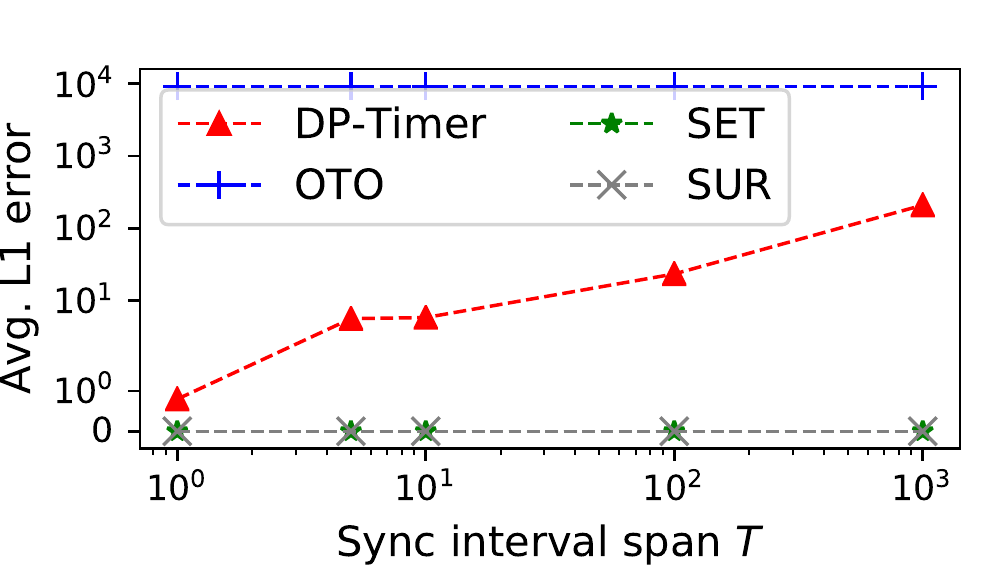}
        \caption{Avg. L1 error v.s. $T$}
        \label{fig:td-acc-timer}
    \end{subfigure}%%
    \begin{subfigure}[b]{0.49\linewidth}
    \centering \includegraphics[width=1\linewidth]{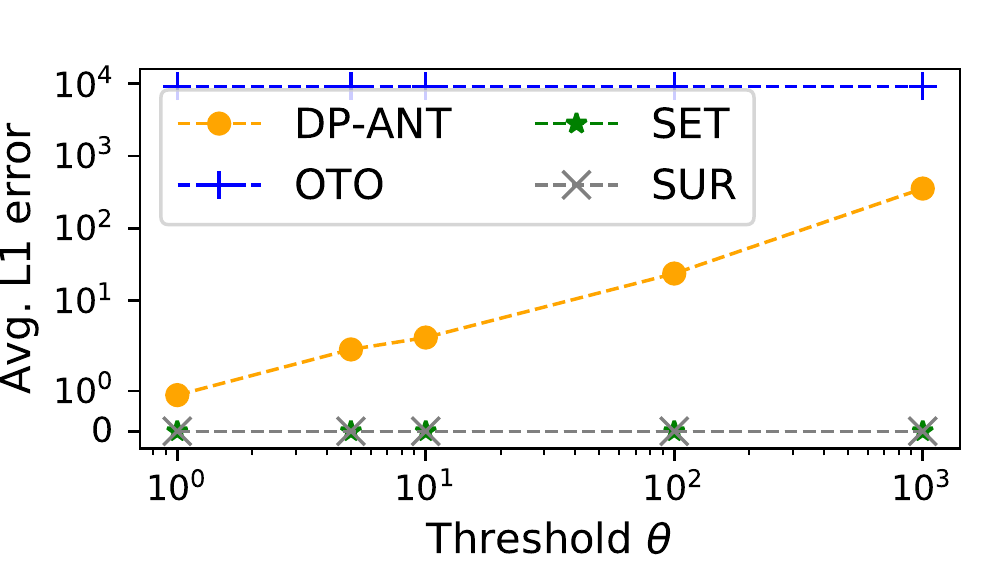}   
 \caption{Avg. L1 error v.s. $\theta$}
        \label{fig:td-acc-sparse}\end{subfigure}%%
        \newline
    \begin{subfigure}[b]{0.49\linewidth}
        \centering
         \includegraphics[width=1\linewidth]{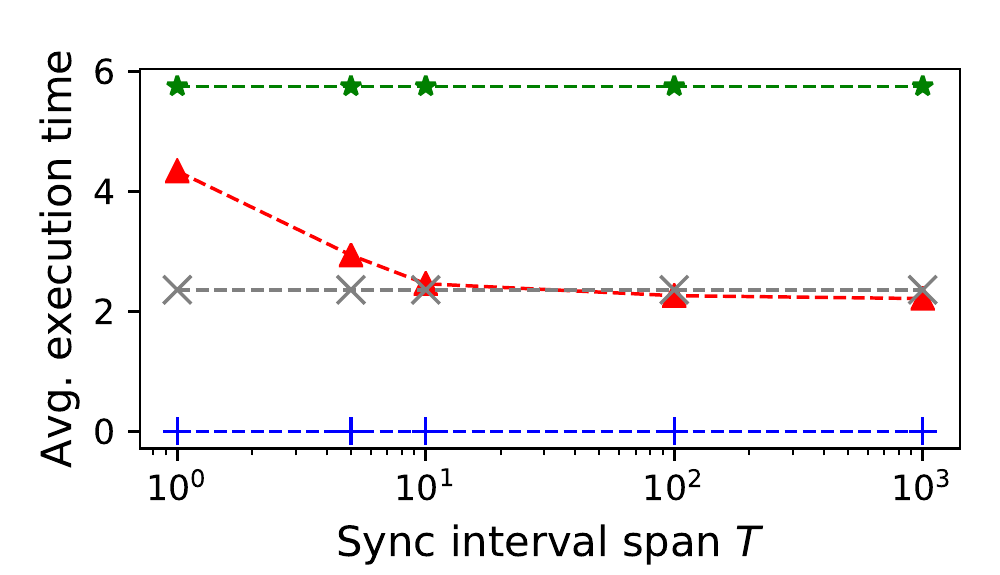}
        \caption{Avg. execution time v.s. $T$}
        \label{fig:td-perf-timer}
    \end{subfigure}%%
    \begin{subfigure}[b]{0.49\linewidth}
    \centering \includegraphics[width=1\linewidth]{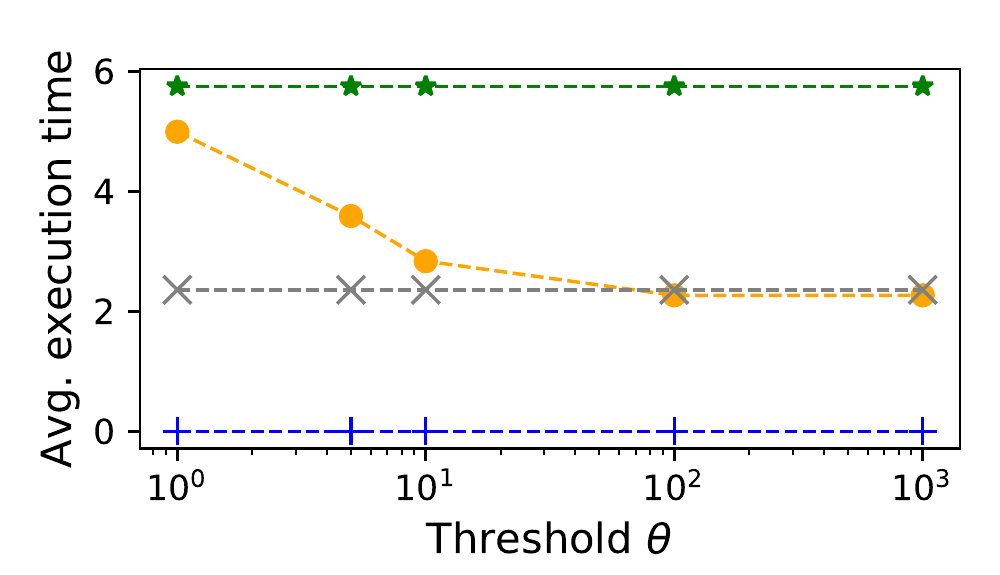}   
 \caption{Avg. execution time v.s. $\theta$}
        \label{fig:td-perf-sparse}\end{subfigure}%%
    \vspace{-3mm}
   \caption{Trade-off with non-privacy parameters}
   \label{fig:trade-np}
   \vspace{-2mm}
\end{figure}

\boldparagraph{Observation 6. Even with fixed privacy, the DP strategies can still be tuned to obtain different performance or accuracy by adjusting non-privacy parameters.} From Figure~\ref{fig:td-acc-timer} and~\ref{fig:td-acc-sparse}, we observe that the mean query errors for both methods increase when $T$ or $\theta$ increases. This is because once $T$ or $\theta$ is increased, the \user waits longer before making an update, which increases the logical gap. Also Figure~\ref{fig:td-perf-timer} and~\ref{fig:td-perf-sparse} shows that the performance metric decreases as $T$ or $\theta$ increases. This is because as $T$ or $\theta$ increases, the \user updates less frequently, which reduces the number of dummy records that could be injected into the outsourced database.  

\eat{
\begin{figure}[ht]
    \begin{subfigure}[b]{0.49\linewidth}
        \centering
         \includegraphics[width=1\linewidth]{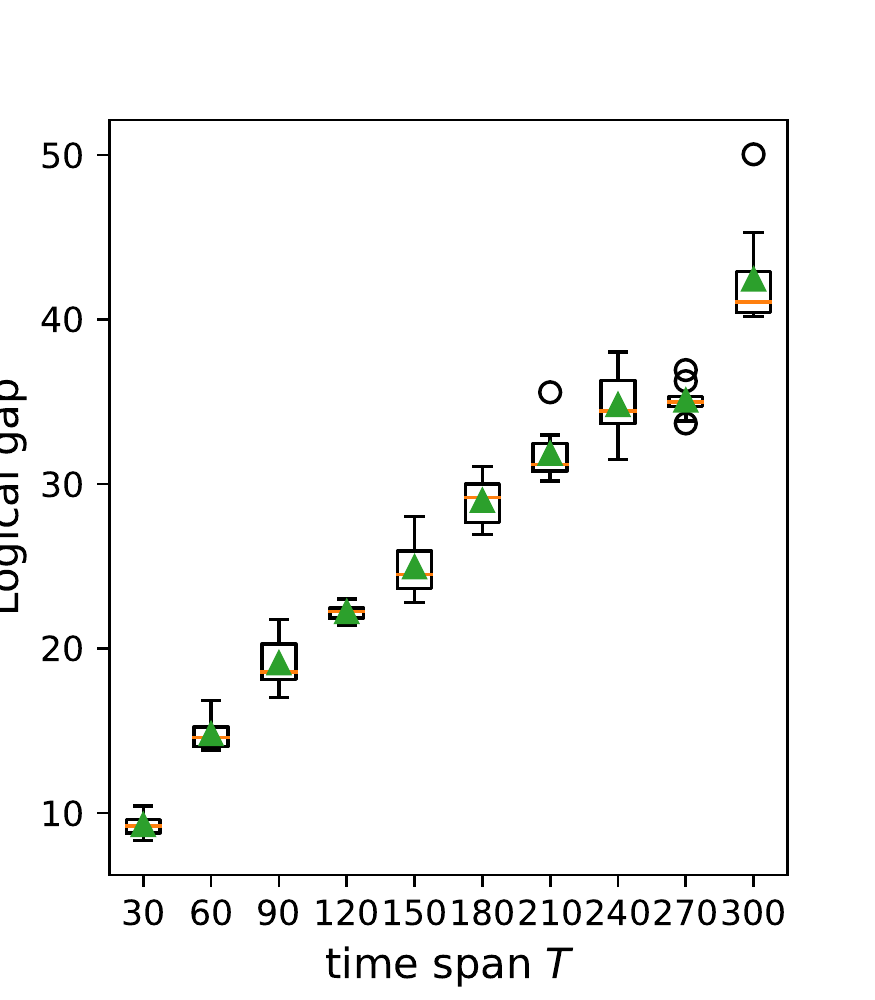}
        \caption{DP-Timer: logical gap }
        \label{fig:td-acc-timer}
    \end{subfigure}%%
    \begin{subfigure}[b]{0.46\linewidth}
    \centering \includegraphics[width=1\linewidth]{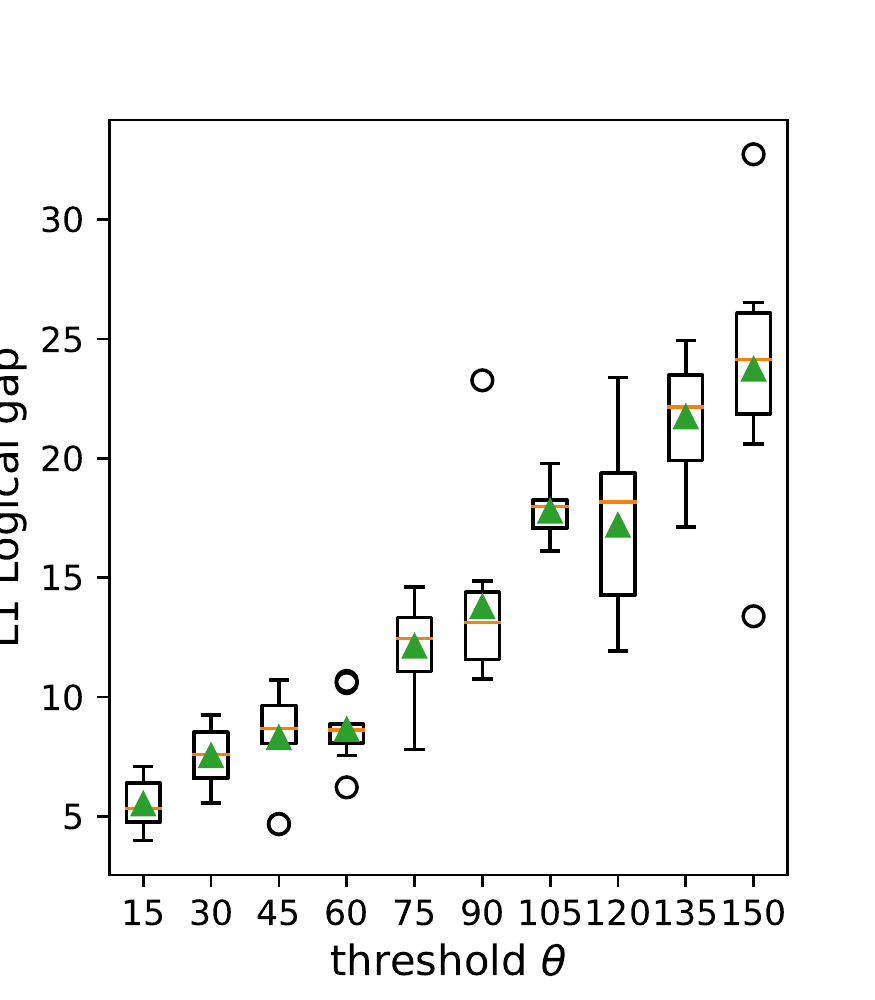}   
 \caption{DP-ANT: logical gap }
    \label{fig:td-acc-sparse}\end{subfigure}%%
    \newline
    \begin{subfigure}[b]{0.46\linewidth}
    \centering    \includegraphics[width=1\linewidth]{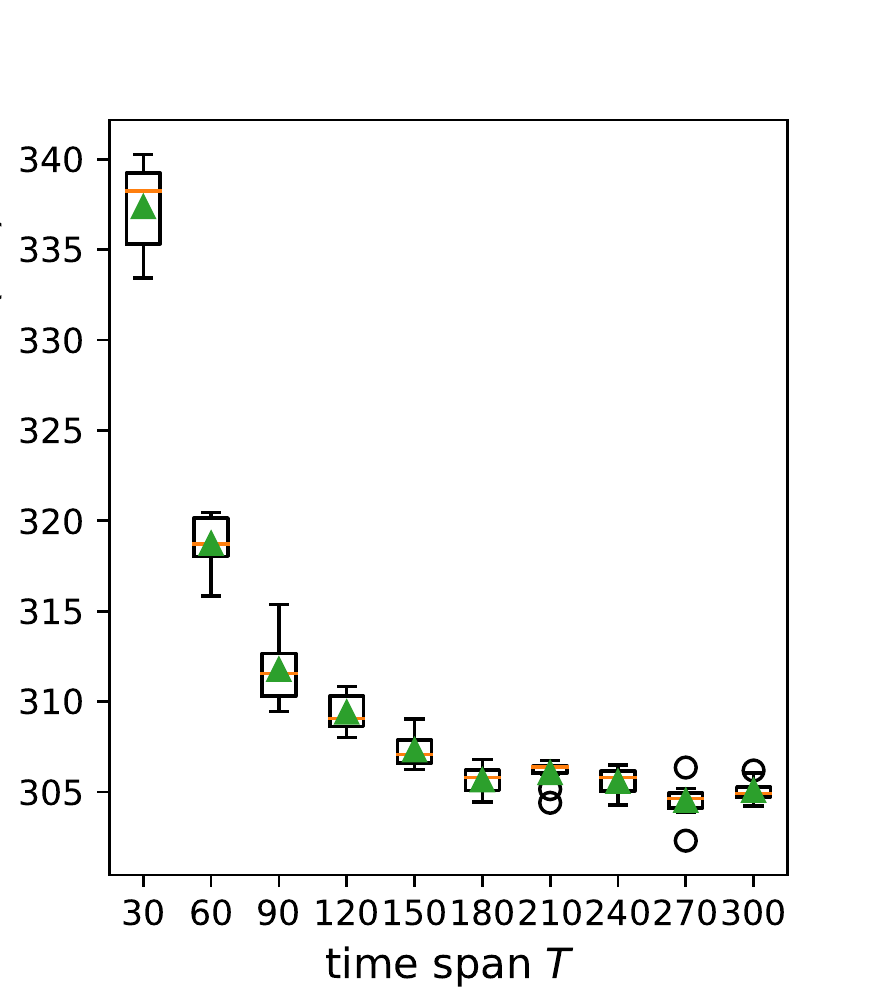}
        \caption{DP-Timer: outsourced size }
        \label{fig:td-perf-timer}\end{subfigure}%%
      \begin{subfigure}[b]{0.46\linewidth}
    \centering    \includegraphics[width=1\linewidth]{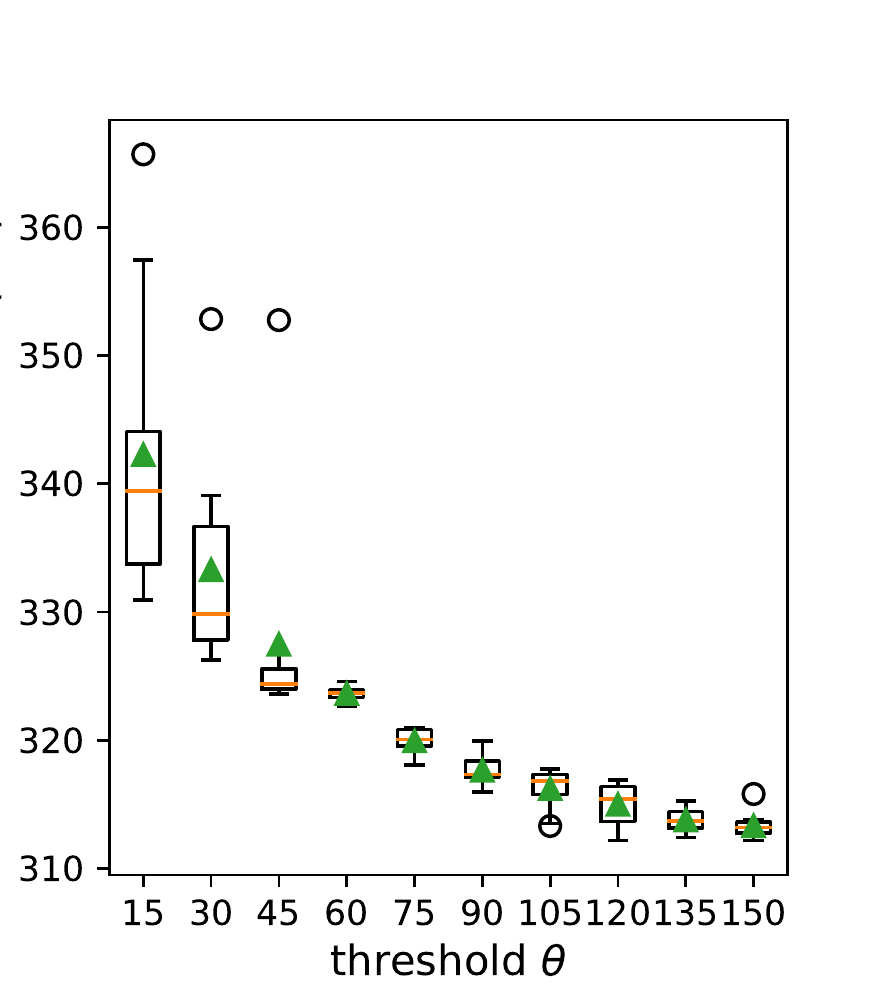}  
        \caption{DP-ANT: outsourced size}
        \label{fig:td-perf-sparse}
    \end{subfigure}
    \vspace{-4mm}
   \caption{Experiment results with changing $T$ and $\theta$}
   \label{trade-off-np}
\end{figure}
}

\vspace{-1mm}
\section{Related Work}
\label{sec:related}
\noindent{\bf Encrypted databases and their leakage.} Encrypted databases is a broadly studied research topic. Existing solutions utilize techniques such as bucketization~\cite{hacigumucs2002executing, hore2012secure, hore2004privacy}, predicate encryption~\cite{shi2007multi,lu2012privacy}, oblivious RAM~\cite{bater2017smcql, crooks2018obladi, demertzis2020seal, naveed2014dynamic, ishai2016private}, structural encryption and symmetric searchable encryption (SSE)~\cite{curtmola2011searchable, stefanov2014practical, cash2014dynamic, kamara2012dynamic, kamara2018sql, kamara2019computationally, patel2019mitigating, ghareh2018new, amjad2019forward},  functional encryption~\cite{boneh2004public, shen2009predicate}, property-preserving encryption~\cite{agrawal2004order,bellare2007deterministic,boldyreva2009order, pandey2012property},  order-preserving encryption~\cite{agrawal2004order, boldyreva2011order}, trusted execution environments~\cite{priebe2018enclavedb, eskandarian2017oblidb, vinayagamurthy2019stealthdb} and homomorphic encryption~\cite{gentry2009fully, boneh2005evaluating, chowdhury2019cryptc, samanthula2014privacy}. Recent work has revealed that these methods may be subject to information leakage through query patterns~\cite{blackstone2019revisiting, cash2015leakage}, identifier patterns~\cite{blackstone2019revisiting}, access patterns~\cite{cash2015leakage, kellaris2016generic, dautrich2013compromising} and query response volume~\cite{kellaris2016generic, blackstone2019revisiting, grubbs2019learning, grubbs2018pump, gui2019encrypted}. In contrast, our work analyzes information leakage for encrypted databases through update patterns. Recent work on backward private SSE~\cite{bost2017forward, ghareh2018new, sun2018practical, amjad2019forward}, which proposes search (query) protocols that guarantee limits on information revealed through data update history, shares some similarity with our work. However, this approach is distinct from our work as they hide the update history from the query protocol. Moreover, backward private SSE permits insertion pattern leakage, revealing how many and when records have been inserted. In contrast, our work hides insertion pattern leakage through DP guarantees. Similar to our work, Obladi~\cite{crooks2018obladi} supports updates on top of outsourced encrypted databases. However, it focuses on ACID properties for OLTP workloads and provides no accuracy guarantees for the analytics queries. 

%\kartik{there are a couple of works by Roxana Geambasu and Obladi that may have some relationship to update pattern leakage}
%\nt{I found these papers about backward private SSE, not sure if I found the correct work to cite.}\kartik{one is this: https://arxiv.org/pdf/1909.01502.pdf}
%\kartik{not all of the following are ``solutions'' for encrypted databases. Perhaps say something like, existing encrypted database \cite{encrypted database solutions} solutions use the techniques to reduce leakage.}
\vspace{-1mm}
\boldparagraph{Differentially-private leakage}. The concept of DP leakage for encrypted databases was first introduced by Kellaris et al.~\cite{kellaris2017accessing}. Interesting work has been done on DP access patterns~\cite{bater2018shrinkwrap, mazloom2018secure, chen2018differentially, wagh2018differentially}, DP query volume~\cite{patel2019mitigating} and DP query answering on encrypted data~\cite{chowdhury2019cryptc, agarwal2019encrypted, lecuyer2019sage}. However, most of this work focuses on the static database setting. \re{Agarwal et al.~\cite{agarwal2019encrypted} consider the problem of answering differentially-private queries over encrypted databases with updates. However, their work focuses mainly on safeguarding the query results from revealing sensitive information, rather than protecting the update leakage. L\'{e}cuyer et al.~\cite{lecuyer2019sage} investigate the method to privately update an ML model with growing training data. Their work ensures the adversary can not obtain useful information against the newly added training data by continually observing the model outputs. However, they do not consider how to prevent update pattern leakage. Kellaris et al.~\cite{kellaris2017accessing} mention distorting update record size by adding dummy records, but their approach always overcounts the number of records in each update, which incorporates large number of dummy records. Moreover, their main contribution is to protect the access pattern of encrypted databases rather than hiding the update patterns. In addition, none of these approaches formally defined the update pattern as well as it's corresponding privacy, and none of them have considered designing private synchronization strategies.
}

%\nt{SOLVED:So basically, these methods use truncated Laplace mechanism to calculate how much dummy records to be added, but it turns out that when use truncated LP mechanism, it achieves $(\epsilon,\delta)-DP$, we use normal Laplace mechanism and ensures $\epsilon-DP$. Not sure if we should mention that, may need to check with Ashwin.} \johes{I'd say go ahead and add it. }
\vspace{-1mm}
\boldparagraph{Differential privacy under continual observation}. The problem of differential privacy under continual observation was first introduced by Dwork et al. in \cite{dwork2010differential}, and has been studied in many recent works~\cite{dwork2010newdifferential, chen2017pegasus, bolot2013private, zhang2019statistical, cummings2018differential}. These approaches focus on designing DP streaming algorithms and are not specific to outsourced databases. In particular, although ~\cite{cummings2018differential} analyzes privacy for growing databases, unlike our work, their model assumes that the \cs has full access to all outsourced data. 

%These works mostly focus on the design of differential privacy algorithms applicable to continuous data or data streams, but pose less interest on processing or updating outsourced data.\kartik{do not understand previous statement} ~\cite{cummings2018differential} considered the privacy for growing database, while in their model, the \cs has access to all data.
\vspace{-3mm}
\section{Conclusion}
In this paper, we have introduced a new type of leakage associated with modern encrypted databases called update pattern leakage. We formalize the definition and security model of \system with DP update patterns. We also proposed the framework \appsystem, which extends existing encrypted database schemes to \system with DP update patterns. \appsystem guarantees that the entire data update history over the outsourced data structure is protected by differential privacy. This is achieved by imposing differentially-private strategies that dictate the \user's synchronization of local~data. 
%\re{The proposed strategies can be easily extended to multiple-table cases if all tables are considered uncorrelated (inserting one record in a table does not affect other tables). The data owner can insert records to each table independently, and for each table, maintain an independent synchronization algorithm. With parallel composition, the entire mechanism still satisfies $\epsilon$-DP. Supporting multi-table updates that have constraints like foreign keys would require secure protocols for computing sensitivity~\cite{johnson2018towards} and data truncation~\cite{kotsogiannis2019privatesql}. We consider this as an interesting direction for future work. }

\re{Note that \appsystem currently only supports single table schema. Supporting multi-relational table databases require additional security models, data truncation techniques~\cite{kotsogiannis2019privatesql} and secure protocols to compute the sensitivity~\cite{johnson2018towards} over multiple tables. We leave the design of these techniques for future work.} 

%Schemas where an update alters a subset of tables require additional obfuscation, such as dummy records, such that seeing which tables are altered does not reveal information about the update. 

%\nt{So shall we just say we support "single table" database but not "multi-relational table" database? It is a little bit confusing to say the schema where an update alters a subset of tables.} \nt{In addition to obfuscation techniques, I believe to support multi-relational table database, it also requires an independent secure protocol to compute the sensitivity~\cite{johnson2018towards} when cover multiple tables as well as some data truncation techniques~\cite{kotsogiannis2019privatesql}}

%\kartik{as mentioned in one of my previous comments, currently we do not mention what our leakage is.} 

%%
%% The acknowledgments section is defined using the "acks" environment
%% (and NOT an unnumbered section). This ensures the proper
%% identification of the section in the article metadata, and the
%% consistent spelling of the heading.
\vspace{-3mm}
\begin{acks}
This work was supported by the National Science Foundation under grants 2016393, 2029853; and by DARPA and SPAWAR under contract N66001-15-C-4067.
\end{acks}

%%
%% The next two lines define the bibliography style to be used, and
%% the bibliography file.
\bibliographystyle{ACM-Reference-Format}
\bibliography{ref}

%%
%% If your work has an appendix, this is the place to put it.

\appendix
%\subsection{Query Processing \& Answering}
 %In our design, the $\cs$ supports a rich set of secure queries programs expressed in terms of a small collection of secure operators. We discuss the implementations of these operators and illustrate how to translate the user specified queries by taking the use of these operators. Let us consider a local database $\mathcal{D}$ with schema $<A_1, A_2, ..., A_l>$, then the corresponding outsourced data structure $\mathcal{DS}$ is an encrypted table where each attribute is represented as it's encrypted on-hot-encoding form. 
 
\section{Security Model Continued}\label{sec:security-continued}
We describe the detailed security model in this section. Recall that we consider the security and privacy of the \user against a semi-honest \cs. To keep the definitions general, we extend the adaptive security definition in~\cite{curtmola2011searchable}. We now describe the following experiments:
%\johes{Marked.differential privacy is not defined yet}.
%\kartik{SOLVED:say analyst is trusted?}

%\am{you can drop the itemize}
\begin{itemize}
    \item {\bf Real}$_{\mathcal{A}}^{\Sigma}(\lambda)$: The environment $\mathcal{Z}$ samples a growing database via $\mathcal{D} \gets \mathsf{DBGen}$ and sends the challenger a ``{\it setup}'' message with $\mathcal{D}_0$. The challenger then runs the $\setup$ protocol with the adversary $\mathcal{A}$. Next, until $\mathcal{A}$ halts and outputs a bit, repeat: In each time step, $\mathcal{Z}$ sends a logical update  value to the challenger according to $\mathcal{D}$. The challenger decides whether to update based on its $\sync$ algorithm. If yes, the challenger evaluates $\update$ jointly with $\mathcal{A}$. Either after $\update$ terminates, or the challenger decides not to update, $\mathcal{A}$ adaptively chooses a polynomial number of queries $q$ and evaluate $\query$ accordingly, over the most recent outsourced structure revealed to $\mathcal{A}$. Finally, the adversary outputs a bit at the end of the experiment.\\
    \item {\bf Ideal}$_{\mathcal{A, S}}^{\Sigma}(\lambda)$: The environment $\mathcal{Z}$ samples a growing database via $\mathcal{D} \gets \mathsf{DBGen}$ and sends the simulator $\mathcal{S}$ with leakage $\lsetup$. The simulator $\mathcal{S}$ then simulates and reveals an output to $\mathcal{A}$. Next, until $\mathcal{A}$ halts and outputs a bit, repeat: In each time step, $\mathcal{S}$ is given the leakage $\lupdate$ from $\mathcal{Z}$, and will decide whether to simulate a result to $\mathcal{A}$ or do nothing based on $\update$. Either after $\mathcal{A}$ receives a simulated result or $\mathcal{S}$ decides to do nothing, %\kartik{SOLVED:previous statement is unclear.} 
    $\mathcal{A}$ adaptively chooses and sends a polynomial number of queries $q$ to $\mathcal{Z}$, $\mathcal{Z}$ then notifies $\mathcal{S}$ with leakage profile $\lquery(q)$. The simulator $\mathcal{S}$ simulates the outputs to $\mathcal{A}$ with input of $\lquery(q)$. Finally, the adversary outputs a bit at the end of the experiment.
\end{itemize}

\begin{definition}\label{def:sogdb}
Let $\Sigma$ = (\sync, $\setup$, $\update$, $\query$), given a security parameter $\lambda$, a stateful PPT adversary $\mathcal{A}$, a stateful simulator $\mathcal{S}$, and stateful leakage functions $\mathcal{L}=(\lsetup, \lupdate, \lquery)$. We say that $\Sigma$ is $\mathcal{L}$-adaptively-secure if there exists a PPT simulator $\mathcal{S}$ such that for all PPT adversary $\mathcal{A}$, if the advantage of $\mathcal{A}$ satisfies:
$$\mid\textup{Pr}[{\bf Real}_{\mathcal{A}}^{\Sigma}(\lambda) = 1] - \textup{Pr}[{\bf Ideal}_{\mathcal{A, S}}^{\Sigma}(\lambda) = 1]\mid \leq \textup{negl}(\lambda)$$

\end{definition}

Definition~\ref{def:sogdb} says that if $\Sigma$ is $\mathcal{L}$-adaptively-secure, it does not leak anything beyond leakage profile $\mathcal{L}=(\lsetup, \lupdate, \lquery)$. In what follows, we introduce the definitions of $\mathcal{L}$-adaptively-secure \system with DP update patterns. 

\begin{definition}[$\mathcal{L}$-adaptively-secure \appsystem ~/w DP update pattern]\label{def:lsecdp} Given a $\mathcal{L}$-adaptively-secure system $\Sigma$, and let $\lupdate$ to be the stateful update leakage for $\Sigma$. The \system $\Sigma$ is said to have differentially-private (DP) update pattern if $\lupdate$ can be written as:
$$\lupdate(\mathcal{D}) = \mathcal{L}'\left(\upatt(\Sigma, \mathcal{D})\right)$$
where $\mathcal{L}'$ is stateless, and for any two neighboring growing databases $\mathcal{D}$ and $\mathcal{D}'$, and any $O\subseteq\mathcal{O}$, where $\mathcal{O}$ is the range of all possible update pattern, it satisfy that:
$$\textup{Pr}\left[\lupdate(\mathcal{D})\in O \right] \leq e^{\epsilon}\cdot\textup{Pr}\left[\lupdate(\mathcal{D}') \in O\right]$$
\end{definition}
%\kartik{is defn 4 trying to talk about the entire system or only updates? If we never talk about leakage for setup and querying, should this talk about the security of $\Sigma$?}
%\nt{def 4 only says that the update pattern is in DP, but the rest leakage may not bounded in DP, such as query response volume, so it's actually only talks about the update pattern.}
%\kartik{what about setup? it seems you perturb and then invoke setup protocol?} \nt{Yes, the initial outsource also provide DP guarantee. Can discuss how to formulate it.}

%\kartik{Definition~\ref{def:sogdb} says that if $\Sigma$ is $(\lsetup, \lupdate, \lquery)$-secure, it does not leak anything beyond leakage profile $(\lsetup, \lupdate, \lquery)$. } 

In what follows, we provide the complete security analysis that shows the \appsystem we proposed satisfies Definition~\ref{def:lsecdp}. Recall that \appsystem have the constraints the underlying encrypted database to have update leakage that is a stateless function of $\upatt$. 

\begin{theorem}
Given \appsystem $ = (\sync, \sigma, \mathsf{edb})$, where $\mathsf{edb}=$ $(\mathsf{Setup}$, $\mathsf{Update}$, $\mathsf{Query})$ is an encrypted database scheme with stateful leakages profiles $\mathcal{L}_{s}^{\mathsf{edb}}, \mathcal{L}_{u}^{\mathsf{edb}}$ and $\mathcal{L}_{q}^{\mathsf{edb}}$. If $\mathsf{edb}$ is ($\mathcal{L}_{s}^{\mathsf{edb}}, \mathcal{L}_{u}^{\mathsf{edb}}$, $\mathcal{L}_{q}^{\mathsf{edb}}$ )-adaptively-secure~\cite{curtmola2011searchable}, and  $\mathcal{L}_{u}^{\mathsf{edb}} = \mathcal{L}'(\upatt(\mathsf{edb}, \cdot))$, where $\mathcal{L}'$ is a function. Let $\lsetup$, $\lupdate$, and $\lquery$ corresponds to the leakage profiles associated with the given \appsystem, and let $\lsetup = \mathcal{L}_{s}^{\mathsf{edb}}$, $\lupdate = \upatt_{t}$, and $\lquery = \mathcal{L}_{q}^{\mathsf{edb}}$,  then \appsystem is ($\lsetup$, $\lupdate$, $\lquery$)-secure (Definition~\ref{def:sogdb}). 
%Then \appsystem satisfies Definition~\ref{def:sogdb}. \kartik{I did not find CQA security in [47]. Also, when you say satisfies Definition 2, you should mention the corresponding leakage tuple of the resulting system. Also, we only perform appends/inserts, right? So, perhaps we don't need op.}
\end{theorem}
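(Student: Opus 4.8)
The plan is to give a simulation-based proof in the style of Definition~\ref{def:sogdb}, reducing the security of \appsystem to the assumed $(\mathcal{L}_{s}^{\mathsf{edb}}, \mathcal{L}_{u}^{\mathsf{edb}}, \mathcal{L}_{q}^{\mathsf{edb}})$-adaptive security of the underlying $\mathsf{edb}$. The starting observation is that \appsystem touches none of the internals of $\mathsf{edb}$: it only constrains \emph{when} the \user calls $\mathsf{edb.Update}$ and \emph{how many} (real plus dummy) records $\gamma_t$ it feeds in, and each such record is encrypted atomically by $\mathsf{edb}$. Hence the transcript seen by the semi-honest \cs in ${\bf Real}_{\mathcal{A}}^{\appsystem}$ is exactly one $\mathsf{Setup}$ execution followed by an adaptively interleaved stream of $\mathsf{Update}$ and $\mathsf{Query}$ executions, whose \user-side inputs are determined by the \sync state machine and the local cache $\sigma$. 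First I would therefore invoke the $\mathsf{edb}$ simulator $\mathcal{S}_{\mathsf{edb}}$ guaranteed by the hypothesis and build from it an \appsystem simulator $\mathcal{S}$.

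Concretely, $\mathcal{S}$ would behave as follows. On the ``setup'' phase it receives $\lsetup = \mathcal{L}_{s}^{\mathsf{edb}}$ and forwards it to $\mathcal{S}_{\mathsf{edb}}$ to produce a simulated initial outsourced structure for $\mathcal{A}$. Whenever the environment signals an update (i.e., whenever \sync would signal the \user), $\mathcal{S}$ is given $\lupdate = \upatt_{t} = (t, |\gamma_{t}|)$; since the hypothesis states $\mathcal{L}_{u}^{\mathsf{edb}} = \mathcal{L}'(\upatt(\mathsf{edb},\cdot))$, $\mathcal{S}$ can compute $\mathcal{L}'(\upatt_{t})$ and hand it to $\mathcal{S}_{\mathsf{edb}}$ to simulate the $\mathsf{Update}$ transcript. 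For each adaptively chosen query $q$, $\mathcal{S}$ receives $\lquery(q) = \mathcal{L}_{q}^{\mathsf{edb}}(q)$ and forwards it to $\mathcal{S}_{\mathsf{edb}}$ to simulate the $\mathsf{Query}$ transcript. I would stress here that \sync and the \ptb/\readc cache operations run entirely on the \user side and emit no message visible to \cs beyond their effect on the physical update stream, which $\upatt$ captures in full — in particular the dummy records inserted by \ptb are invisible precisely because $|\gamma_t|$, dummies included, is exactly what $\upatt_t$ reports. So $\mathcal{S}$ never needs anything beyond $(\lsetup,\lupdate,\lquery)$.

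For the indistinguishability step I would run a reduction: any PPT $\mathcal{A}$ separating ${\bf Real}_{\mathcal{A}}^{\appsystem}$ from ${\bf Ideal}_{\mathcal{A},\mathcal{S}}^{\appsystem}$ yields a PPT adversary against the adaptive security of $\mathsf{edb}$ with at least the same advantage. The $\mathsf{edb}$-adversary internally samples $\mathcal{D}\gets\mathsf{DBGen}$, runs the \sync state machine (including the cache-flush sub-routine) on the logical update stream to decide the update times and the cache-read sizes $|\gamma_t|$, and relays the resulting $\mathsf{Setup}/\mathsf{Update}/\mathsf{Query}$ calls to its own $\mathsf{edb}$ challenger, passing that challenger's responses back to $\mathcal{A}$. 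If the $\mathsf{edb}$ challenger is in real mode, $\mathcal{A}$'s view is exactly ${\bf Real}^{\appsystem}$; if it is in ideal mode (driven by $\mathcal{L}_{s}^{\mathsf{edb}}$, $\mathcal{L}_{u}^{\mathsf{edb}}=\mathcal{L}'(\upatt_t)$, $\mathcal{L}_{q}^{\mathsf{edb}}$), $\mathcal{A}$'s view is exactly ${\bf Ideal}^{\appsystem}$ with the $\mathcal{S}$ above. Thus the \appsystem advantage is $\mathsf{negl}(\lambda)$, which is Definition~\ref{def:sogdb} with leakage profiles $(\lsetup,\lupdate,\lquery) = (\mathcal{L}_{s}^{\mathsf{edb}}, \upatt_{t}, \mathcal{L}_{q}^{\mathsf{edb}})$.

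I expect the main obstacle to be the bookkeeping around the \emph{adaptive, interleaved} schedule of updates and queries: one must verify that $\mathcal{S}_{\mathsf{edb}}$ (and the $\mathsf{edb}$ challenger in the reduction) tolerates exactly the operation order \sync induces, that the reduction faithfully emulates \sync's probabilistic internal state, and — a small but necessary point — that the cache-flush sub-routine only ever emits the non-data-dependent pattern $(j\cdot f, s)$, so it too is subsumed by $\upatt$. A secondary remark worth making explicit is that this theorem only establishes the $\mathcal{L}$-adaptive-security reduction; combined with Theorem~\ref{tm:dptimer} and its ANT analogue (which show $\lupdate = \upatt$ meets the DP condition of Definition~\ref{def:dpsogdb}), it yields that the concrete DP-Timer and DP-ANT instantiations are $\mathcal{L}$-adaptively-secure \emph{with} a DP update pattern in the sense of Definition~\ref{def:lsecdp}.
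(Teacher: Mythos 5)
Your proposal is correct and follows essentially the same route as the paper: build the \appsystem simulator by forwarding $\lsetup$ and $\lquery$ to the assumed $\mathsf{edb}$ simulator and, on each update, using the hypothesis $\mathcal{L}_{u}^{\mathsf{edb}}=\mathcal{L}'(\upatt(\mathsf{edb},\cdot))$ to compute the $\mathsf{edb}$ update leakage from $\upatt_t=(t,|\gamma_t|)$, with indistinguishability inherited from the adaptive security of $\mathsf{edb}$. Your explicit reduction to an $\mathsf{edb}$ adversary only spells out what the paper asserts in one line, so the arguments coincide in substance.
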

\begin{proof} We describe a polynomial time simulator $\mathcal{S}$ such that the advantage of any p.p.t. adversary $\mathcal{A}$ to distinguish the output between ${\bf Ideal}^{\appsystem}$ and ${\bf Real}^{\appsystem}$ is negligible. Since $\mathsf{edb}$ is ($\mathcal{L}_{s}^{\mathsf{edb}}, \mathcal{L}_{u}^{\mathsf{edb}}$, $\mathcal{L}_{q}^{\mathsf{edb}}$ )-adaptively-secure, there must exist a p.p.t simulator $\mathcal{S}^{\mathsf{edb}}$ s.t. the outputs of $\mathcal{S}^{\mathsf{edb}}$ with inputs $\mathcal{L}_{s}^{\mathsf{edb}}, \mathcal{L}_{u}^{\mathsf{edb}}$, $\mathcal{L}_{q}^{\mathsf{edb}}$ are computationally indistinguishable from the results produced from real protocols $\mathsf{Setup}$, $\mathsf{Update}$, $\mathsf{Query}$, respectively. We define the leakage of \appsystem as  $\lsetup = \mathcal{L}_{s}^{\mathsf{edb}}$, $\lquery = \mathcal{L}_{q}^{\mathsf{edb}}$, $\lupdate = \upatt_{t}$,  and we build a simulator $\mathcal{S}$ in ${\bf Ideal}^{\appsystem}$ as follows: If $\mathcal{S}$ receives $\lsetup$ or $\lquery$ from the environment, then it outputs the results of $\mathcal{S}^{\mathsf{edb}}(\lsetup)$ or $\mathcal{S}^{\mathsf{edb}}(\lquery)$, respectively. If the simulator receives $\lupdate$ at time $t$, then it first checks if $(t, |\gamma_t|) \in \upatt$. If yes, then it outputs $\mathcal{S}^{\mathsf{edb}}(\mathcal{L}_{u}^{\mathsf{edb}}(|\gamma_t|))$, otherwise it does nothing.
According to the above description, the indistinguishability of the simulated outputs and the real protocols outputs follow the adaptively-secure property of $\mathsf{edb}$. Therefore, the probability of any p.p.t. adversary $\mathcal{A}$ to distinguish between the real and the ideal experiment with the aforementioned $\mathcal{S}$, is negligible.
%\kartik{our setup can have a differentially-private leakage, right?} %\kartik{SOLVED:The reason why you do not mention $\mathcal{L}_{u}^{\mathsf{edb}}$ is because $\upatt_{t}$ already includes it?} 
\end{proof}
Next, we prove that \appsystem that implemented with proposed DP strategies satisfies Definition~\ref{def:lsecdp}. To capture the update pattern leakage, we rewrite the DP algorithms to output the total number of synchronized records at each update, instead of signaling the update protocol. The rewritten mechanisms simulate the update pattern when applying the DP strategies.

\begin{table}[]
\scalebox{0.98}{\small
\begin{tabular}{|rl|}
\hline
                                 & \multicolumn{1}{c|}{{\bf $\mathcal{M}_{\mathsf{timer}}(\mathcal{D},  \epsilon, f, s, T)$}}\\
$\mathcal{M}_{\mathsf{setup}}$:  & {\bf output} $\left(0, |\mathcal{D}_0| + \lap(\frac{1}{\epsilon}) \right)$ \\
$\mathcal{M}_{\mathsf{update}}$: & $\forall i \in \mathbb{N}^{+}$, run $\mathcal{M}_{\mathsf{unit}}(U[i\cdot T, (i+1)T], \epsilon, T)$\\
                                 & $\mathcal{M}_{\mathsf{unit}}$: {\bf output} $\left(i\cdot T, ~\lap(\frac{1}{\epsilon}) + \sum_{k = i\cdot T + 1}^{(i+1)T} 1 \mid u_{k}\neq\emptyset \right)$\\
$\mathcal{M}_{\mathsf{flush}}$:  & $\forall j \in \mathbb{N}^{+}$, {\bf output} $\left(j\cdot f, ~s\right)$ \\\hline
\end{tabular}
}
\caption{Mechanisms to simulate the update pattern}
\label{tab:lmech-1}
\vspace{-4mm}
\end{table}

\begin{lemma}[Sequential Composition]\label{lem:seq} $\mathcal{M}_1$ is an $\epsilon_1$-DP mechanism and $\mathcal{M}_2$ is an $\epsilon_2$-DP mechanism. The combined mechanism $\mathcal{M}_{1,2}$ with the mapping of  $\mathcal{M}_{1,2}(x) = (\mathcal{M}_1(x), \mathcal{M}_2(x))$ satisfies $(\epsilon_1 + \epsilon_2)-DP$.
\end{lemma}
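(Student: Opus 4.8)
The plan is to prove this by the standard pointwise-multiplication argument, exploiting that $\mathcal{M}_1$ and $\mathcal{M}_2$ draw their internal randomness independently. Fix an arbitrary pair of neighboring databases $D$ and $D'$ (differing by the addition or removal of a single record) and let $\mathcal{O}_1,\mathcal{O}_2$ denote the output ranges of $\mathcal{M}_1,\mathcal{M}_2$. First I would reduce the claim about an arbitrary measurable set $O\subseteq\mathcal{O}_1\times\mathcal{O}_2$ to a pointwise statement about the joint output distributions.

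Concretely, write $p_i(\cdot\mid D)$ for the density (or probability mass) of $\mathcal{M}_i(D)$ with respect to a fixed base measure. Because the two mechanisms use independent coins, the joint density of $\mathcal{M}_{1,2}(D)=(\mathcal{M}_1(D),\mathcal{M}_2(D))$ at a point $(o_1,o_2)$ factors as $p_1(o_1\mid D)\,p_2(o_2\mid D)$. Applying the $\epsilon_1$-DP guarantee of $\mathcal{M}_1$ at the coordinate $o_1$ and the $\epsilon_2$-DP guarantee of $\mathcal{M}_2$ at $o_2$ gives
\[
p_1(o_1\mid D)\,p_2(o_2\mid D)\;\le\; e^{\epsilon_1}p_1(o_1\mid D')\cdot e^{\epsilon_2}p_2(o_2\mid D')\;=\;e^{\epsilon_1+\epsilon_2}\,p_1(o_1\mid D')\,p_2(o_2\mid D').
\]
Integrating (summing, in the discrete case) this inequality over $(o_1,o_2)\in O$ yields $\textup{Pr}[\mathcal{M}_{1,2}(D)\in O]\le e^{\epsilon_1+\epsilon_2}\,\textup{Pr}[\mathcal{M}_{1,2}(D')\in O]$, which is precisely $(\epsilon_1+\epsilon_2)$-DP in the sense of Definition~\ref{def:dp}. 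Since $D,D'$ and $O$ were arbitrary, the lemma follows.

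The only real subtlety, and hence the closest thing to an obstacle, is justifying the pointwise step when the ranges are continuous: one must fix a common dominating measure so that the densities $p_i$ exist, and note that the set-based DP inequality for each $\mathcal{M}_i$ implies the corresponding inequality on densities almost everywhere (a routine measure-theoretic fact). In the discrete setting this disappears entirely, since one works directly with sums of probabilities. It is also worth stating explicitly that the argument relies on the independence of the internal randomness of $\mathcal{M}_1$ and $\mathcal{M}_2$, which is exactly what lets the joint density factor; this non-adaptive form is all that is needed for the uses of Lemma~\ref{lem:seq} elsewhere in the paper (e.g., composing $\mathcal{M}_{\mathsf{setup}}$ with $\mathcal{M}_{\mathsf{update}}$, or the split of the \sync budget into $\epsilon_1,\epsilon_2$).
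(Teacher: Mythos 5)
Your proof is correct and follows essentially the same route as the paper's: factor the joint output probability using the independence of the two mechanisms' randomness, bound each factor's ratio by $e^{\epsilon_1}$ and $e^{\epsilon_2}$ pointwise, and conclude the $(\epsilon_1+\epsilon_2)$ bound. The only difference is that you additionally spell out the integration over the output set $O$ and the measure-theoretic justification for working with densities, which the paper leaves implicit by arguing directly at a single output pair $(o_1,o_2)$.
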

\begin{proof}
Let $x$, and $y$ be two neighboring database such that $||x - y||\leq 1$, then for any output pair $(o_1, o_2)$, we have:
\begin{equation}
    \begin{split}
        \frac{\textup{Pr}\left[\mathcal{M}_{1,2}(x) = (o_1, o_2)\right]}{\textup{Pr}\left[\mathcal{M}_{1,2}(y) = (o_1, o_2)\right]} = & ~\frac{\textup{Pr}\left[\mathcal{M}_1(x) = o_1\right]\textup{Pr}\left[\mathcal{M}_2(x) = o_2\right]}{\textup{Pr}\left[\mathcal{M}_{1}(y) =  o_1\right]\textup{Pr}\left[\mathcal{M}_{2}(y) =  o_2\right]}\\
        = &~\frac{\textup{Pr}\left[\mathcal{M}_1(x) = o_1\right]}{\textup{Pr}\left[\mathcal{M}_{1}(y) =  o_1\right]}\times\frac{\textup{Pr}\left[\mathcal{M}_2(x) = o_1\right]}{\textup{Pr}\left[\mathcal{M}_{2}(y) =  o_1\right]}\\
        \leq &~ e^{\epsilon_1 + \epsilon_2}
    \end{split}
\end{equation}
\end{proof}
\begin{lemma}[Parallel Composition]\label{lem:par}$\mathcal{M}_1$ is an $\epsilon_1$-DP mechanism and $\mathcal{M}_2$ is an $\epsilon_2$-DP mechanism. The combined mechanism $\mathcal{M}_{1,2}$ with the mapping of  $\mathcal{M}_{1,2}(x,y) = (\mathcal{M}_1(x), \mathcal{M}_2(y))$ satisfies $\max(\epsilon_1, \epsilon_2)-DP$, where $x \cap y =\emptyset$
\end{lemma}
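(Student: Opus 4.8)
The plan is to mirror the proof of the sequential composition lemma (Lemma~\ref{lem:seq}) but to exploit the disjointness hypothesis $x \cap y = \emptyset$. First I would set up the composite input: the relevant notion of neighboring databases here is on the concatenated pair, so let $(x,y)$ and $(x',y')$ be two inputs to $\mathcal{M}_{1,2}$ whose combined contents $x \cup y$ and $x' \cup y'$ differ by the addition or removal of a single record. The crucial observation — and the only place disjointness is used — is that since $x \cap y = \emptyset$ (and likewise $x' \cap y' = \emptyset$), a single-record change to $x \cup y$ cannot touch both parts: it lies entirely within one of them. Without loss of generality, assume the change is confined to the first part, so $x$ and $x'$ are neighbors in the ordinary sense while $y = y'$.

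Next, for any singleton output pair $(o_1, o_2) \in \mathcal{O}_1 \times \mathcal{O}_2$, I would factor the joint probability using the independence of the internal randomness of $\mathcal{M}_1$ and $\mathcal{M}_2$:
\[
\frac{\textup{Pr}\left[\mathcal{M}_{1,2}(x,y) = (o_1,o_2)\right]}{\textup{Pr}\left[\mathcal{M}_{1,2}(x',y') = (o_1,o_2)\right]}
= \frac{\textup{Pr}\left[\mathcal{M}_1(x)=o_1\right]}{\textup{Pr}\left[\mathcal{M}_1(x')=o_1\right]} \cdot \frac{\textup{Pr}\left[\mathcal{M}_2(y)=o_2\right]}{\textup{Pr}\left[\mathcal{M}_2(y')=o_2\right]}.
\]
Because $y = y'$, the second factor is exactly $1$; because $x$ and $x'$ are neighbors, the first factor is at most $e^{\epsilon_1}$ by the $\epsilon_1$-DP guarantee of $\mathcal{M}_1$. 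Hence the ratio is at most $e^{\epsilon_1} \le e^{\max(\epsilon_1,\epsilon_2)}$. The symmetric case, where the differing record lies in the second part, gives the bound $e^{\epsilon_2} \le e^{\max(\epsilon_1,\epsilon_2)}$ by the same argument with the roles swapped. Extending from singleton outputs to an arbitrary event $O \subseteq \mathcal{O}_1 \times \mathcal{O}_2$ is routine: sum (or integrate) the numerator and denominator over $O$ and apply the pointwise bound, which establishes $\max(\epsilon_1,\epsilon_2)$-DP.

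I do not expect a serious obstacle here — the whole argument is one line of bounding once the setup is right. The only point demanding care is the first step: making the neighboring relation on the composite input precise and arguing that disjointness forces a single-record change to affect only one of $x, y$; the factorization step must also be careful to invoke the independence of the two mechanisms' coins. I would additionally note that an immediate induction extends this to a partition of the data into $k$ pairwise-disjoint parts, yielding a $\max_i \epsilon_i$-DP guarantee for $(\mathcal{M}_1(x_1),\dots,\mathcal{M}_k(x_k))$; this $k$-fold form is precisely what is invoked in the security proofs when $\mathcal{M}_{\mathsf{unit}}$ (respectively $\mathcal{M}_{\mathsf{sparse}}$) is run repeatedly over disjoint time windows.
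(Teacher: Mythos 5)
Your proof is correct and follows the same idea as the paper's own (one-sentence) argument: disjointness forces a neighboring change to lie entirely in one part, independence of the mechanisms' randomness factors the joint probability, and the unaffected factor cancels, leaving a bound of $e^{\max(\epsilon_1,\epsilon_2)}$. You have simply written out in full the routine details that the paper leaves implicit, including the case split and the extension to general output events, which is a faithful elaboration rather than a different route.
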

\begin{proof}
According to the parallel composition, the two mechanisms are applied on disjoint data, which means that the mechanisms are applied independently and thus privacy is constrained by the maximum privacy budget.
\end{proof}

\begin{theorem}\label{ftm:dptimer}
\appsystem implemented with the DP-Timer strategy satisfies Definition~\ref{def:lsecdp}. 
\end{theorem}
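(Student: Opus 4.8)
The plan is to derive Theorem~\ref{ftm:dptimer} from two ingredients: the preceding $\mathcal{L}$-adaptive security theorem, and a fresh differential-privacy analysis of the rewritten timer mechanism $\mathcal{M}_{\mathsf{timer}}$ of Table~\ref{tab:lmech-1}, this time phrased through the composition Lemmas~\ref{lem:seq} and~\ref{lem:par}. Definition~\ref{def:lsecdp} really asks for two things: (i) that the instantiated \appsystem is $\mathcal{L}$-adaptively-secure with $\lupdate = \mathcal{L}'(\upatt(\Sigma,\cdot))$ for a \emph{stateless} $\mathcal{L}'$; and (ii) that $\lupdate$ is $\epsilon$-differentially private over neighboring growing databases (Definition~\ref{def:ngdb}). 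Part~(i) is exactly the conclusion of the preceding theorem: whenever the underlying $\mathsf{edb}$ is adaptively secure with update leakage $\mathcal{L}_u^{\mathsf{edb}} = \mathcal{L}'(\upatt(\mathsf{edb},\cdot))$, \appsystem is $(\lsetup,\lupdate,\lquery)$-secure with $\lupdate = \upatt_t$, and this $\mathcal{L}'$ is plainly stateless. So the only real work is part~(ii), which is essentially Theorem~\ref{tm:dptimer} re-derived in the composition-lemma language.

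For part~(ii) I would first record that the rewriting is faithful: the distribution of $\upatt(\Sigma,\mathcal{D})$ obtained by running DP-Timer (Algorithm~\ref{algo:timer}) together with the cache-flush mechanism equals the output distribution of $\mathcal{M}_{\mathsf{timer}}(\mathcal{D},\epsilon,f,s,T)$, since $\mathcal{M}_{\mathsf{timer}}$ merely emits, in place of signalling an update, the pair $(\text{time},\ \text{noisy count})$ that the algorithm would have used. Hence it suffices to show $\mathcal{M}_{\mathsf{timer}}$ is $\epsilon$-DP. I would decompose it into its three sub-mechanisms. $\mathcal{M}_{\mathsf{setup}}$ is a Laplace mechanism on $|\mathcal{D}_0|$ with scale $1/\epsilon$ and sensitivity $1$, hence $\epsilon$-DP (in fact $0$-DP, since $\mathcal{D}_0 = \mathcal{D}_0'$ for any neighboring pair, but the weaker bound suffices). $\mathcal{M}_{\mathsf{flush}}$ outputs the fixed, data-independent sequence $\{(j\cdot f,\ s)\}_j$, so it is $0$-DP. $\mathcal{M}_{\mathsf{update}}$ repeatedly runs $\mathcal{M}_{\mathsf{unit}}$ on the disjoint windows $U[i\cdot T,(i+1)T]$; each $\mathcal{M}_{\mathsf{unit}}$ releases $\mathrm{Lap}(1/\epsilon)$ added to $\sum_k \mathbf{1}[u_k \neq \emptyset]$ over its window, a query of $\ell_1$-sensitivity at most $1$ (at most one logical update differs, and flipping it between empty and non-empty changes the count by one), so each $\mathcal{M}_{\mathsf{unit}}$ is $\epsilon$-DP.

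The key observation is then that any two neighboring growing databases $\mathcal{D},\mathcal{D}'$ differ in $U$ at a single time step $t_0 > \tau \ge 0$, which lies in exactly one of the disjoint windows; on every other window the inputs to $\mathcal{M}_{\mathsf{unit}}$ coincide and the output distributions are identical. Parallel composition (Lemma~\ref{lem:par}) therefore gives that $\mathcal{M}_{\mathsf{update}}$ is $\max_i \epsilon = \epsilon$-DP. Since $\mathcal{M}_{\mathsf{setup}}$ touches only $\mathcal{D}_0$ while $\mathcal{M}_{\mathsf{update}}$ touches only $U$ (disjoint data), a further application of Lemma~\ref{lem:par} shows their joint release is $\epsilon$-DP; composing this sequentially with the $0$-DP $\mathcal{M}_{\mathsf{flush}}$ via Lemma~\ref{lem:seq} yields $\epsilon + 0 = \epsilon$-DP for $\mathcal{M}_{\mathsf{timer}}$, hence for $\lupdate$. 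Together with part~(i) this establishes Definition~\ref{def:lsecdp}.

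The step I expect to require the most care is the faithful-rewriting claim together with the confinement-to-one-window argument: one must be precise that Definition~\ref{def:ngdb}'s $\tau \ge 0$ forces $\mathcal{D}_0 = \mathcal{D}_0'$ (so $\mathcal{M}_{\mathsf{setup}}$ contributes nothing to the privacy loss), that the single differing logical update falls into exactly one window so the "parallel" structure demanded by Lemma~\ref{lem:par} is genuinely present, and that the cache-flush outputs, being data-independent, truly compose for free regardless of whether their timestamps collide with update timestamps. Everything after that is routine Laplace-mechanism and composition bookkeeping.
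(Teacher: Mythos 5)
Your proposal is correct and follows essentially the same route as the paper's proof: decompose $\mathcal{M}_{\mathsf{timer}}$ into $\mathcal{M}_{\mathsf{setup}}$ ($\epsilon$-DP Laplace), $\mathcal{M}_{\mathsf{unit}}$ ($\epsilon$-DP via a sensitivity-$1$ count), and $\mathcal{M}_{\mathsf{flush}}$ ($0$-DP), then combine via parallel composition (Lemma~\ref{lem:par}) over the disjoint windows and across setup/update, and sequential composition (Lemma~\ref{lem:seq}) with the flush, yielding $\epsilon$-DP, with the $\mathcal{L}$-adaptive-security part inherited from the preceding theorem. Your added remarks on the faithful rewriting and on $\mathcal{D}_0=\mathcal{D}'_0$ under Definition~\ref{def:ngdb} only make explicit what the paper leaves implicit.
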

\vspace{-2mm}
\begin{proof}\label{fpf:dp-timer}
Since we have constrained that the update leakage of the given DP-Sync is a function only related to the update pattern (provided in Table~\ref{tab:lmech-1}). Thus we prove this theorem by illustrating that the composed privacy guarantee of $\mathcal{M}_{\mathsf{timer}}$ satisfies $\epsilon$-DP. The mechanism $\mathcal{M}_{\mathsf{timer}}$ is a composition of several separated mechanisms. We now analysis the privacy guarantees of each. \\

\DeclareRobustCommand{\rchi}{{\mathpalette\irchi\relax}}
\newcommand{\irchi}[2]{\raisebox{\depth}{$#1\chi$}} % inner command, used by \rchi

\noindent (1) $\mathcal{M}_{\mathsf{setup}}$.\\
Let $\rchi$ be the collection of all possible initial database, and let $\mathcal{D}^x_0 \in \rchi$, and $\mathcal{D}^y_0 \in \rchi$ are two neighboring databases that differ by addition and removal of only 1 record. We use $n_x$, $n_y$ denotes the number of records in $\mathcal{D}^x_0$ and $\mathcal{D}^y_0$, respectively. Let $z$ denote of the size of an arbitrary initial database, and let $p_x$, $p_y$ denote the output distribution of $\mathcal{M}_{\mathsf{setup}}(\mathcal{D}^x_0, \epsilon)$, and $\mathcal{M}_{\mathsf{setup}}(\mathcal{D}^y_0, \epsilon)$, respectively. We compare the two terms under arbitrary $z$:
\begin{equation}
\begin{split}
    \frac{p_x(z)}{p_y(z)} = & ~\frac{\frac{1}{2b}e^{\frac{-|n_x - z|}{b}}}{\frac{1}{2b}e^{\frac{-|n_y - z|}{b}}} = e^{\frac{|n_y - z| - |n_x - z|}{b}} \leq e^{\frac{|n_y - n_x|}{b}}
\end{split}
\end{equation}
Note that, we set $b = \frac{1}{\epsilon}$, and since $\mathcal{D}^x_0 $, and $\mathcal{D}^y_0$ are neighboiring databases, thus $|n_y - n_x| \leq 1$, therefore.
\begin{equation}
\begin{split}
 e^{\frac{|n_y - n_x|}{b}} \leq e^{\frac{1}{\frac{1}{\epsilon}}} = e^{\epsilon} \rightarrow \frac{p_x(z)}{p_y(z)} \leq e^{\epsilon}.
\end{split}
\end{equation}
Note that the ratio $\frac{p_x(z)}{p_y(z)} \geq e^{-\epsilon}$ follows by symmetry. Thus we can conclude that $\mathcal{M}_{\mathsf{setup}}$ satisfies $\epsilon$-DP.\\

\noindent(2) $\mathcal{M}_{\mathsf{unit}}$.\\
Let $\rchi'$ denote all possible logical updates within a period of time $T$. Let $U_x \in \rchi'$, and $U_y \in \rchi$, denotes two neighboring updates (differ by addition or removal of 1 logical update). We define $f = \sum_{\forall u_t \in U} 1 | u_t \neq \emptyset$, and:
\begin{equation}
    \begin{split}
        \Delta f = & \max_{\forall U_x,U_y\in\rchi' \wedge  ||U_x - U_y||_1 \leq 1}| f(U_x) - f(U_y)|
    \end{split}
\end{equation}

According to the definition, $f$ is a counting function that counts how many logical updates happened within a given $U$, and we can conclude that $\Delta f = 1$. Then, let $p_x$, $p_y$ denote the density function of $\mathcal{M}_{\mathsf{unit}}(U_x, \epsilon)$, and $\mathcal{M}_{\mathsf{unit}}(U_y, \epsilon)$, respectively. We compare the two terms under arbitrary point $z$:
\begin{equation}
\begin{split}
    \frac{p_x(z)}{p_y(z)} = & ~\frac{\frac{1}{2b}e^{\frac{-|f(U_x) - z|}{b}}}{\frac{1}{2b}e^{\frac{-|f(U_y) - z|}{b}}} = e^{\frac{|f(U_y) - z| - |f(U_x) - z|}{b}} \leq e^{\frac{|f(U_y) - f(U_x)|}{b}}
\end{split}
\end{equation}

\eat{
\begin{equation}
\begin{split}
    \frac{p_x(z)}{p_y(z)} = & ~\Pi_{i=1}^{k}\left(\frac{e^{-\epsilon(\frac{| f(U_x) - z_i|}{\Delta f})}}{e^{-\epsilon(\frac{| f(U_y) - z_i|}{\Delta f})}}\right) \\
    = & ~\Pi_{i=1}^{k}e^{\left( \epsilon(\frac{|f(U_y) - z_i| - |f(U_x) - z_i|}{\Delta f})\right)}\\
    \leq & ~\Pi_{i=1}^{k}e^{\left( \epsilon(\frac{|f(U_y) - f(U_x)|}{\Delta f})\right)} = e^{\left( \epsilon(\frac{||f(U_y) - f(U_x)||_1}{\Delta f})\right)} \leq e^{\epsilon}
\end{split}
\end{equation}}
Note that, we set $b = \frac{1}{\epsilon}$, and we know that $\Delta f = 1$, therefore.
\begin{equation}
\begin{split}
 e^{\frac{|f(U_y) - f(U_x)|}{b}} \leq e^{\frac{\Delta f}{\frac{1}{\epsilon}}} = e^{\epsilon} \rightarrow \frac{p_x(z)}{p_y(z)} \leq e^{\epsilon}.
\end{split}
\end{equation}
Note that the ratio $\frac{p_x(z)}{p_y(z)} \geq e^{-\epsilon}$ follows by symmetry. According the this, $\mathcal{M}_{\mathsf{unit}}$ satisfies $\epsilon$-DP.\\

\noindent(3) $\mathcal{M}_{\mathsf{flush}}$.\\
$\mathcal{M}_{\mathsf{flush}}$ releases a fixed number every $T$ times, thus it satisfies 0-DP.\\

We have analyzed the privacy guarantees of the three basic mechanisms, we now illustrate the analysis of the composed privacy guarantees. $\mathcal{M}_{\mathsf{update}}$ is a mechanism that repeatedly calls  $\mathcal{M}_{\mathsf{unit}}$ and applies it over disjoint data, the privacy guarantee of $\mathcal{M}_{\mathsf{unit}}$ follows Lemma~\ref{lem:par}, thus satisfying $\epsilon$-DP. The composition of $\mathcal{M}_{\mathsf{setup}}$ and $\mathcal{M}_{\mathsf{update}}$ also follows Lemma~\ref{lem:par} and the composition of $\mathcal{M}_{\mathsf{flush}}$ follows Lemma~\ref{lem:seq}. Thus the entire algorithm $\mathcal{M}_{\mathsf{timer}}$ satisfies $\left(\max(\epsilon, \epsilon) + 0 \right)$ -DP, which is $\epsilon$-DP.
\end{proof}

\begin{table}[]
\scalebox{0.98}{\small
\begin{tabular}{|rl|}
\hline
                                 & \multicolumn{1}{c|}{{\bf $\mathcal{M}_{\mathsf{ANT}}(\mathcal{D},  \epsilon, f, s, \theta)$}} \\
$\mathcal{M}_{\mathsf{setup}}$:  & {\bf output} $\left(0, |\mathcal{D}_0| + \lap(\frac{1}{\epsilon}) \right)$ \\
$\mathcal{M}_{\mathsf{update}}$: & $\epsilon_1 = \epsilon_2 = \frac{\epsilon}{2}$, repeatedly run $\mathcal{M}_{\mathsf{sparse}(\epsilon_1, \epsilon_2, \theta)}$.                                                                                                                       \\
                                 & $\mathcal{M}_{\mathsf{sparse}}$:                                                                                                                                                                                                                      \\
                                 & $\tilde{\theta} = \theta + \lap(\frac{2}{\epsilon_1})$, $t^*\gets$ last time $\mathcal{M}_{\mathsf{sparse}}$'s output $\neq \perp$.                                                                                                                   \\
                                 & $ \forall i\in\mathbb{N}^{+}$, {\bf output} $\begin{cases}    \left(t^{*}+i, c_i + \lap(\frac{1}{\epsilon_2})\right)  &  \text{if}~ v_i + c_i \geq  \tilde{\theta},\\    \perp               &  \text{otherwise}.\end{cases}$ \\
                                 & where $c_i = \sum_{k=t^*}^{t^*+i} 1 \mid u_k\neq \emptyset$, and $v_i = \lap(\frac{4}{\epsilon_1})$.                                                                                                                                                  \\
                                 & {\bf abort} the first time when output $\neq \perp$.                                                                                                                                                                                 \\
$\mathcal{M}_{\mathsf{flush}}$:  & $\forall j \in \mathbb{N}^{+}$, {\bf output} $\left(j\cdot f, ~s\right)$  \\ \hline
\end{tabular}
}
\caption{Mechanisms to simulate the update pattern}
\label{tab:lmech-2}
\vspace{-4mm}
\end{table}

\begin{theorem}
The \appsystem system implemented with the ANT strategy satisfies Definition~\ref{def:lsecdp}. 
\end{theorem}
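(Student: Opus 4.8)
The plan is to mirror the structure of the DP-Timer proof (Theorem~\ref{ftm:dptimer}). Recall that \appsystem constrains the underlying encrypted database so that its update leakage is a stateless function of the update pattern, and that the update pattern produced by the ANT strategy is exactly the output distribution of the mechanism $\mathcal{M}_{\mathsf{ANT}}$ in Table~\ref{tab:lmech-2}. Hence it suffices to prove that $\mathcal{M}_{\mathsf{ANT}}$ satisfies $\epsilon$-DP with respect to the neighboring-growing-database relation of Definition~\ref{def:ngdb}; the remaining $\mathcal{L}$-adaptive-indistinguishability part of Definition~\ref{def:lsecdp} then follows verbatim from the simulator construction already used for the generic \appsystem theorem, since ANT only affects $\lupdate$.

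First I would decompose $\mathcal{M}_{\mathsf{ANT}}$ into $(\mathcal{M}_{\mathsf{setup}}, \mathcal{M}_{\mathsf{update}}, \mathcal{M}_{\mathsf{flush}})$ and bound each piece. $\mathcal{M}_{\mathsf{setup}}$ releases $|\mathcal{D}_0| + \lap(1/\epsilon)$; since adding or removing one initial record changes $|\mathcal{D}_0|$ by at most one, the standard Laplace calculation (identical to case (1) in the proof of Theorem~\ref{ftm:dptimer}) gives $\epsilon$-DP. $\mathcal{M}_{\mathsf{flush}}$ outputs the pair $(j\cdot f, s)$ for every $j$, a data-independent transcript, hence $0$-DP. $\mathcal{M}_{\mathsf{update}}$ repeatedly invokes $\mathcal{M}_{\mathsf{sparse}}$ over disjoint windows of the logical-update stream $U$ (each window starting right after the previous trigger), so by parallel composition (Lemma~\ref{lem:par}) the privacy of $\mathcal{M}_{\mathsf{update}}$ equals that of a single $\mathcal{M}_{\mathsf{sparse}}$ invocation.

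The main obstacle is bounding the privacy of $\mathcal{M}_{\mathsf{sparse}}$; this is the sparse-vector / AboveThreshold analysis. I would split $\mathcal{M}_{\mathsf{sparse}}$ (run with $\epsilon_1 = \epsilon_2 = \epsilon/2$) into two pieces: (a) the halting-decision mechanism $\mathcal{M}'_{\mathsf{sparse}}$ that outputs the string $\bot\bot\cdots\bot\top$ marking the first step at which $v_i + c_i \ge \tilde\theta$, where $\tilde\theta = \theta + \lap(2/\epsilon_1)$ and $v_i = \lap(4/\epsilon_1)$; and (b) the terminal release $c_m + \lap(1/\epsilon_2)$ of the trigger-time count. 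Piece (b) is a Laplace mechanism on a count of sensitivity one (adding or removing a single record at one timestep moves $c_m$ by at most one), hence $\epsilon_2$-DP. For piece (a) I would reproduce the integral argument already spelled out in the proof of the DP-ANT security theorem in Section~\ref{sec:sp}: condition on the value $x$ of the threshold noise, write $\textup{Pr}[\mathcal{M}'_{\mathsf{sparse}}(U) = O]$ as $\int \textup{Pr}[\tilde\theta = x]\bigl(\prod_{i<m}\textup{Pr}[\tilde c_i < x]\bigr)\textup{Pr}[\tilde c_m \ge x]\,dx$, use the monotonicity fact $\textup{Pr}[\tilde c_i < x] \le \textup{Pr}[\tilde c'_i < x+1]$ (valid because the counting sensitivity is one) together with the shift $x \mapsto x+1$ inside $\tilde\theta$ and the Laplace-density bound on $v_m$, and apportion $\epsilon_1/2$ to the threshold-density ratio and $\epsilon_1/2$ to the final comparison; this yields $\epsilon_1$-DP for $\mathcal{M}'_{\mathsf{sparse}}$. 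Sequential composition (Lemma~\ref{lem:seq}) of (a) and (b) then gives $\mathcal{M}_{\mathsf{sparse}}$ an $(\epsilon_1 + \epsilon_2) = \epsilon$-DP guarantee.

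Finally I would stitch the bounds together. $\mathcal{M}_{\mathsf{update}}$ is $\epsilon$-DP by parallel composition over disjoint windows; $\mathcal{M}_{\mathsf{setup}}$ and $\mathcal{M}_{\mathsf{update}}$ act on disjoint portions of $\mathcal{D}$ (namely $\mathcal{D}_0$ versus the stream $U$ — recall neighboring growing databases agree up to time $\tau \ge 0$, so any differing record lies in $U$), so parallel composition (Lemma~\ref{lem:par}) bounds their combination by $\max(\epsilon,\epsilon) = \epsilon$-DP; adjoining the $0$-DP mechanism $\mathcal{M}_{\mathsf{flush}}$ via sequential composition (Lemma~\ref{lem:seq}) yields $\epsilon + 0 = \epsilon$-DP for all of $\mathcal{M}_{\mathsf{ANT}}$. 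Since $\lupdate = \mathcal{L}'(\upatt(\Sigma,\cdot))$ with $\upatt$ distributed as $\mathcal{M}_{\mathsf{ANT}}$, the update leakage satisfies the $e^{\epsilon}$ bound of Definition~\ref{def:lsecdp}, and combined with the already-established $\mathcal{L}$-adaptive security this proves the theorem.
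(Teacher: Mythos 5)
Your proposal is correct and follows essentially the same route as the paper's own proof: reduce to showing $\mathcal{M}_{\mathsf{ANT}}$ is $\epsilon$-DP, handle $\mathcal{M}_{\mathsf{setup}}$ and $\mathcal{M}_{\mathsf{flush}}$ as Laplace and data-independent mechanisms, split $\mathcal{M}_{\mathsf{sparse}}$ into the AboveThreshold halting decision (analyzed with the same conditioning-on-$\tilde\theta$ integral and sensitivity-one shift) plus a terminal Laplace release, and stitch everything together with sequential and parallel composition. Your explicit apportionment ($\epsilon_1$ for the halting decision, $\epsilon_2$ for the release, $\epsilon_1=\epsilon_2=\epsilon/2$) is, if anything, a slightly cleaner statement of the same accounting the paper performs.
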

\begin{proof}
We first provide $\mathcal{M}_{ANT}$ (Table~\ref{tab:lmech-2}) that simulates the update pattern of ANT strategy. We prove this theorem by illustrating the composed privacy guarantee of $\mathcal{M}_{\mathsf{ANT}}$ satisfies $\epsilon$-DP. 

The mechanism $\mathcal{M}_{\mathsf{ANT}}$ is a composition of several separated mechanisms. We have demonstrated $\mathcal{M}_{\mathsf{setup}}$ and $\mathcal{M}_{\mathsf{flush}}$ satisfy $\epsilon$-DP and 0-DP, respectively. We abstract the $\mathcal{M}_{\mathsf{update}}$ as a composite mechanism that repeatedly spawns $\mathcal{M}_{\mathsf{sparse}}$ on disjoint data. Hence, in what follows we show that $\mathcal{M}_{\mathsf{sparse}}$, and thus also $\mathcal{M}_{\mathsf{update}}$ (repeatedly call $\mathcal{M}_{\mathsf{sparse}}$), satisfies $\epsilon$-DP guarantee.

Assume a modified version of $\mathcal{M}_{\mathsf{sparse}}$, say $\mathcal{M'}_{\mathsf{sparse}}$, where it outputs $\top$ once the condition $v_i + c_i >\tilde{\theta}$ is satisfied,  and outputs $\bot$ for all other cases. Then the output of $\mathcal{M'}_{\mathsf{sparse}}$ can be written as $O = \{o_1, o_2,...,o_m \}$, where $\forall~ 1 \leq i < m$, $o_i = \bot$, and $o_m = \top$. Suppose that $U$ and $U'$ are the logical updates of two neighboring growing databases and we know that for all $i$, $\textup{Pr}\left[\tilde{c}_i < x \right] \leq \textup{Pr}\left[\tilde{c}'_i < x+1 \right]$ is satisfied, where $\tilde{c}_i$ and $\tilde{c}'_i$ denotes the $i^{th}$ noisy count when applying  $\mathcal{M'}_{\mathsf{sparse}}$ over $U$ and $U'$ respectively, such that:
\begin{equation}
\begin{split}
  & \textup{Pr}\left[~ \mathcal{M'}_{\mathsf{sparse}}(U) = O \right] \\
  =  \int_{-\infty}^{\infty} & \textup{Pr}\left[\tilde{\theta} = x \right]\left( \prod_{1 \leq i < m}\textup{Pr}\left[\tilde{c}_i < x\right] \right) \textup{Pr}\left[\tilde{c}_m \geq x \right]dx\\
\leq \int_{-\infty}^{\infty} & e^{\epsilon/2}\textup{Pr}\left[\tilde{\theta} = x + 1 \right]\left( \prod_{1 \leq i < m}\textup{Pr}\left[\tilde{c}'_i < x + 1\right] \right) \textup{Pr}\left[ v_m \geq x - {c}_m \right]dx\\
\leq \int_{-\infty}^{\infty} & e^{\epsilon/2}\textup{Pr}\left[\tilde{\theta} = x + 1 \right]\left( \prod_{1 \leq i < m}\textup{Pr}\left[\tilde{c}'_i < x + 1\right] \right) \\
  \times & e^{\epsilon/2} \textup{Pr}\left[ v_m + c'_m \geq x + 1 \right]dx\\
  = \int_{-\infty}^{\infty} & e^{\epsilon}\textup{Pr}\left[\tilde{\theta} = x + 1 \right]\left( \prod_{1 \leq i < m}\textup{Pr}\left[\tilde{c}'_i < x + 1\right] \right) \textup{Pr}\left[ \tilde{c}'_m \geq x + 1 \right]dx\\
  = e^{\epsilon}\textup{Pr} & [\mathcal{M'}_{\mathsf{sparse}}(U') = O]
\end{split}
\end{equation}
Thus $\mathcal{M'}_{\mathsf{sparse}}$ satisfies $\epsilon$-DP, and  $\mathcal{M}_{\mathsf{sparse}}$ is essentially a composition of a $\mathcal{M'}_{\mathsf{sparse}}$ satisfying $\frac{1}{2}\epsilon$-DP together with a Laplace mechanism with privacy parameter equal to $\frac{1}{2}\epsilon$. Hence by applying Lemma~\ref{lem:seq}, we see that $\mathcal{M}_{\mathsf{sparse}}$ satisfies $(\frac{1}{2}\epsilon + \frac{1}{2}\epsilon)-DP$. Knowing that $\mathcal{M}_{\mathsf{update}}$ runs $\mathcal{M}_{\mathsf{sparse}}$ repeatedly on disjoint data, with Lemma~\ref{lem:par}, the $\mathcal{M}_{\mathsf{update}}$ then satisfies $\epsilon$-DP. Finally, combined with $\mathcal{M}_{\mathsf{setup}}$ and $\mathcal{M}_{\mathsf{flush}}$, we conclude that $\mathcal{M}_{\mathsf{ANT}}$ satisfies $\epsilon$-DP, thus the theorem holds.
\end{proof}

\section{Query Rewriting}\label{sec:rewrite}
We discuss in this section how to use query rewriting to allow certain secure outsourced database scheme to ignore dummy records when computes the query results on relational tables. We assume that such database scheme supports fully oblivious query processing and reveals nothing about the size pattern. The query rewriting is not applicable to those databases that leaks the size pattern (i.e. how many encrypted records that matches a given query). We consider the following operators:

\noindent{\bf Filter.} $\phi({\bf T}, p)$: This operator filters the rows in ${\bf T}$ where the respectively attributes satisfy the predicate $p$. To ignore dummy records, we need to make sure that only real rows that satisfy predicate $p$ is returned. To achieve this, we rewrite the predicate $p$ as ``$p \wedge (isDummy = False)$''.

\noindent{\bf Project.} $\pi({\bf T}, A)$ This operator projects ${\bf T}$ on a subset of attributes defined by $A$. We rewrite the operator as $\pi(\phi({\bf T}, p), A)$, where the predicate $p$ is defined as ``$(isDummy = False)$''.

\noindent{\bf CrossProduct}. $\times({\bf T}, A_i, A_j)$: This operator transforms
the two attributes $A_i$ and $A_j$ in {\bf T} into a new attribute $A'$. The attribute domain of $A$ is the cross product of  $A_i$, and $A_j$. We rewrite the operator as  $\times(\phi({\bf T}, p), A_i, A_j)$, where $p$ denotes ``$(isDummy = False)$''.

\noindent{\bf GroupBy}. $\chi({\bf T}, A')$ This operator groups the rows in ${\bf T}$ into summary rows based on a set of attributes $A'$. In order to make this operator works correctly with dummy records, we need to ensure that dummy data will never get grouped with the real records. Thus we first group the entire relation into two groups based on attribute ``{\it isDummy}'', then apply $\chi({\bf T'}, A')$, where ${\bf T'}$ is the group of records where ``${isDummy=False}$''.

\noindent{\bf Join}. $\Join({\bf T}_1, {\bf T}_2, c)$: This operator combines columns from one or more relations by using specified values, $c$, that is common to each. We require that real data can not be joined with dummy ones, thus we rewrite the operator as $\Join(\phi({\bf T}_1, p), \phi({\bf T}_2, p), c)$, where $p$ denotes ``$(isDummy = False)$''.

\section{Theorem Proofs}\label{sec:proofs}
We provide in this section theoretical analysis and formal proofs with respect to the key theorems we provided in our paper. 

\subsection{Proof of Theorem~\ref{lg:timer}}
\begin{lemma}\label{lemma:sum}
Given $Y_1, Y_2,...,Y_k$ are $k$ independent and identically distributed Laplace random variables, with distribution $\textup{Lap}(b)$. Let $Y = \sum_i^k Y_i$, and $0 < \alpha \leq kb$, then 
$$\textup{Pr}\left[~ Y \geq \alpha \right] \leq e^{\left( \frac{-\alpha^2}{4kb^2} \right)}$$
\end{lemma}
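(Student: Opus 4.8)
\textbf{Proof proposal for Lemma~\ref{lemma:sum}.}

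The plan is to use a standard Chernoff-style (moment generating function) argument applied to the sum of i.i.d. Laplace variables, exploiting the known closed form of the Laplace MGF. First I would recall that for $Y_i \sim \mathrm{Lap}(b)$, we have $\mathbb{E}[e^{h Y_i}] = \frac{1}{1 - h^2 b^2}$ for $|h| < 1/b$. By independence, $\mathbb{E}[e^{h Y}] = \prod_{i=1}^k \mathbb{E}[e^{h Y_i}] = (1 - h^2 b^2)^{-k}$ for $|h| < 1/b$. Then for any such $h > 0$, Markov's inequality gives $\Pr[Y \geq \alpha] \leq e^{-h\alpha}(1 - h^2 b^2)^{-k}$.

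Next I would use the elementary inequality $1 - x \geq e^{-x - x^2}$ valid for, say, $0 \le x \le 1/2$ (or an analogous bound), applied with $x = h^2 b^2$, to turn the factor $(1-h^2b^2)^{-k}$ into something of the form $e^{k(h^2 b^2 + (h^2b^2)^2)} \le e^{2k h^2 b^2}$ when $h^2 b^2$ is small enough. This yields a bound of the shape $\Pr[Y \ge \alpha] \le e^{-h\alpha + 2k h^2 b^2}$ (constants to be pinned down so that the final exponent matches $-\alpha^2/(4kb^2)$). Then I would optimize over $h$: the exponent $-h\alpha + (\text{const})\cdot k h^2 b^2$ is minimized at $h = \Theta(\alpha/(kb^2))$, and substituting back produces an exponent proportional to $-\alpha^2/(kb^2)$. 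The constraint $0 < \alpha \le kb$ is exactly what is needed to ensure the chosen $h$ satisfies $h < 1/b$ (equivalently $h^2b^2 \le$ the threshold where the $1-x \ge e^{-x-x^2}$-type bound is valid), so the MGF is finite and the approximation step is legitimate.

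The main obstacle I anticipate is bookkeeping the constants so that the final exponent is exactly $-\alpha^2/(4kb^2)$ rather than merely $-\Omega(\alpha^2/(kb^2))$: one must choose the right auxiliary inequality for $(1-x)^{-1}$ and the right value of $h$ (likely $h = \alpha/(2kb^2)$) so that the two terms in the exponent combine to give the stated constant $1/4$, all while keeping $h^2b^2 \le 1/4$ or whatever threshold the auxiliary bound demands — and this is precisely where the hypothesis $\alpha \le kb$ gets used. A cleaner alternative, which I would try first, is to invoke a ready-made concentration inequality for sums of Laplace random variables (e.g., the sub-exponential / Bernstein-type tail bound: for $0 < \lambda < \sqrt{2}kb$, $\Pr[Y \ge \lambda] \le \exp(-\lambda^2/(4kb^2))$, which appears in the differential privacy literature), in which case the lemma follows by direct substitution and the only task is to verify the range condition $\alpha \le kb \le \sqrt{2}kb$.
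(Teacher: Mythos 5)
You follow exactly the paper's route: the Laplace MGF $\mathbb{E}[e^{hY_i}]=(1-h^2b^2)^{-1}$, independence, a Markov/Chernoff step, an auxiliary inequality of the type $(1-x)^{-1}\le e^{2x}$ for small $x$, the choice $h=\alpha/(2kb^2)$, and the hypothesis $\alpha\le kb$ to keep $h$ in the admissible range. The gap is precisely the step you defer --- making the constants come out to $1/4$ --- and it cannot be closed. With $\mathbb{E}[e^{hY_i}]\le e^{2h^2b^2}$ the Chernoff exponent is $-h\alpha+2kh^2b^2$; at your proposed $h=\alpha/(2kb^2)$ this is exactly $0$ (a vacuous bound), and the optimal choice $h=\alpha/(4kb^2)$ yields only $\exp\left(-\alpha^2/(8kb^2)\right)$. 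Sharper bounds on $-\ln(1-h^2b^2)$ push the constant to about $1/6$, but no MGF argument reaches $1/4$ on the whole stated range: at $\alpha=kb$ the best per-variable Chernoff exponent is $\sup_{h}\{hb+\ln(1-h^2b^2)\}=(\sqrt{2}-1)+\ln(2\sqrt{2}-2)\approx 0.226<1/4$, and since $\ln(1-u)<-u$ one has the strict inequality $\sup_h\{hx+\ln(1-h^2b^2)\}<x^2/(4b^2)$ for every $x>0$; by the Cram\'er large-deviation lower bound the claimed inequality in fact fails for $\alpha$ near $kb$ once $k$ is large. For what it is worth, the paper's own proof "obtains" $1/4$ only through an algebra slip: after establishing $\mathbb{E}[e^{tY_i}]\le e^{2t^2b^2}$ it writes the product bound as $e^{kt^2b^2}$, silently dropping the factor $2$ --- so your proposal reproduces the paper's argument together with its unresolved defect, which you correctly sensed but did not resolve.

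Your fallback does not rescue this either: the ready-made Laplace-sum concentration from the DP literature (Chan--Shi--Song style) reads $\Pr[Y\ge\lambda]\le\exp\left(-\lambda^2/(8\nu^2)\right)$ with $\nu^2=\sum_i b_i^2=kb^2$ and $0<\lambda<2\sqrt{2}\,kb$; the constant there is $1/8$, not the $1/4$ you quote, so "direct substitution" proves a strictly weaker statement than the lemma. The honest output of your (and the paper's) method is $\Pr[Y\ge\alpha]\le\exp\left(-\alpha^2/(8kb^2)\right)$, which is what should be stated; downstream uses then need $\alpha=2\sqrt{2}\,b\sqrt{k\log(1/\beta)}$ in place of $2b\sqrt{k\log(1/\beta)}$ in Corollary~\ref{col:sum}, degrading the constants in Theorem~\ref{lg:timer} by a factor $\sqrt{2}$ but changing nothing qualitative.
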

\begin{proof}
The moment generating function of Laplace random variables $Y_i$ can be denoted as $\mathbb{E}\left[e^{(tY_i)}\right] = 1/(1 - t^2b^2)$. As for $0< x < \frac{1}{2}$, we have $(1-x)^{-1} \leq e^{2x}$. Thus $\mathbb{E}\left[e^{(tY_i)}\right] \leq e^{(2t^2b^2)}$, when $|t| < \frac{1}{2b}$. As $\alpha < kb$, let $t = \frac{\alpha}{2kb^2} < \frac{1}{2b}$ then:
\begin{equation}
\begin{split}
  \textup{Pr}\left[~ Y \geq \alpha \right] & = \textup{Pr}\left[~\mathbb{E}\left[e^{(tY)}\right]  \leq \mathbb{E}\left[e^{(t\alpha)}\right] \right]\\
 & \leq e^{(-t\alpha)}\mathbb{E}\left[e^{(tY)}\right] ~~ (Chernoff ~ bound)\\
 & = e^{(-t\alpha)}\prod_i^k\mathbb{E}\left[e^{(tY_i)}\right]\\
 & \leq e^{(-t\alpha ~+~ kt^2b^2)} = e^{(\frac{-\alpha^2}{4kb^2})}
\end{split}
\end{equation}
\end{proof}

\begin{corollary}\label{col:sum}
Given $Y_1, Y_2,...,Y_k$ be $k$ i.i.d. Laplace random variables with distribution $\textup{Lap}(b)$. Let $Y=\sum_{i=1}^k Y_i$, and $\beta \in (0,1)$, the following inequality holds
$$\textup{Pr}\left[~ Y \geq 2b\sqrt{k\log{\frac{1}{\beta}}} ~\right] \leq \beta $$
\end{corollary}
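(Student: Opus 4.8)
The plan is to obtain the corollary as an immediate consequence of Lemma~\ref{lemma:sum} by substituting the explicit value $\alpha = 2b\sqrt{k\log\frac{1}{\beta}}$. The only thing that needs checking before applying the lemma is that this choice of $\alpha$ lies in the admissible range $0 < \alpha \leq kb$ assumed there. Positivity is immediate: since $\beta \in (0,1)$ we have $\log\frac{1}{\beta} > 0$, hence $\alpha > 0$. For the upper bound $\alpha \leq kb$, note that $2b\sqrt{k\log\frac{1}{\beta}} \leq kb$ is equivalent, after dividing by $b$ and squaring, to $4k\log\frac{1}{\beta} \leq k^2$, i.e. to $k \geq 4\log\frac{1}{\beta}$. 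This is exactly the regime in which the corollary is invoked (cf. the hypothesis $k \geq 4\log\frac{1}{\beta}$ of Theorem~\ref{lg:timer}), so I would either carry that assumption explicitly into the corollary or note that it is in force wherever the corollary is used.

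With the hypotheses verified, the remainder is a one-line computation: Lemma~\ref{lemma:sum} gives
\[
\textup{Pr}\left[ Y \geq 2b\sqrt{k\log\tfrac{1}{\beta}} \right] \leq \exp\!\left( \frac{-\left(2b\sqrt{k\log\frac{1}{\beta}}\right)^2}{4kb^2} \right) = \exp\!\left( \frac{-4b^2 k \log\frac{1}{\beta}}{4kb^2} \right) = \exp\!\left( -\log\tfrac{1}{\beta} \right) = \beta,
\]
which is precisely the claimed bound.

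The main (and essentially only) obstacle is the bookkeeping around the range restriction $\alpha \leq kb$. Lemma~\ref{lemma:sum} is stated only for $\alpha$ up to $kb$ because its proof selects the Chernoff parameter $t = \frac{\alpha}{2kb^2}$ and needs $|t| < \frac{1}{2b}$ for the moment-generating-function estimate $(1-t^2b^2)^{-1} \leq e^{2t^2b^2}$ to be valid; choosing $\alpha$ larger than $kb$ would violate this. Consequently, this proof of the corollary is meaningful only when $k \geq 4\log\frac{1}{\beta}$, and I would flag this dependence so that the downstream logical-gap bounds (Theorems~\ref{lg:timer} and~\ref{lg:ant}) are applied only in the valid regime. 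No further estimates, union bounds, or independence arguments beyond those already packaged in Lemma~\ref{lemma:sum} are required.
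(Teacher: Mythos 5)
Your proof is correct and follows essentially the same route as the paper: substitute $\alpha = 2b\sqrt{k\log\frac{1}{\beta}}$ into Lemma~\ref{lemma:sum} and observe that the admissibility condition $\alpha \leq kb$ amounts to $k \geq 4\log\frac{1}{\beta}$, exactly the caveat the paper records with its ``when $k > 4\log\frac{1}{\beta}$ the corollary holds.'' Your explicit bookkeeping of that range restriction is a slightly cleaner presentation of the same argument, not a different one.
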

\begin{proof}
Continue with the proof of lemma 4.1. Let $e^{(\frac{-\alpha^2}{4kb^2})} = {\beta}$, then $\alpha = 2b\sqrt{k\log{\frac{1}{\beta}}}$, when $k > 4\log{\frac{1}{\beta}}$ the corollary holds.
\end{proof}

\begin{corollary}\label{col:sum-2}
Given $Y_1, Y_2,...,Y_k$ be $k$ i.i.d. Laplace random variables with distribution $\textup{Lap}(b)$. Let $S_j \gets \sum_{i=0}^j Y_j$, where $0<j\leq k$, and $\beta \in (0,1)$, the following inequality holds
$$\textup{Pr}\left[~ \max_{0<j\leq k} S_j \geq 2b\sqrt{k\log{\frac{1}{\beta}}} ~\right] \leq \beta $$
\end{corollary}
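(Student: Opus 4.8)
The plan is to upgrade the single-sum tail estimate of Corollary~\ref{col:sum} to a bound on the running maximum $\max_{0<j\le k} S_j$ by replacing the Markov/Chernoff step in the proof of Lemma~\ref{lemma:sum} with a maximal inequality for nonnegative submartingales (Doob's inequality). No new estimate on the Laplace moment generating function is needed; the work is entirely in passing from a fixed sum to the running maximum.

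First I would observe that, since $Y_1,\dots,Y_k$ are i.i.d.\ with $\mathbb{E}[Y_i]=0$, the partial sums $S_j=\sum_{i\le j}Y_i$ form a martingale with respect to the natural filtration $\mathcal{F}_j=\sigma(Y_1,\dots,Y_j)$. Fix $t>0$ with $t<\tfrac{1}{2b}$ (the same admissible range used in Lemma~\ref{lemma:sum}). Since $x\mapsto e^{tx}$ is convex and increasing, Jensen's inequality shows that $Z_j:=e^{tS_j}$ is a nonnegative submartingale, and the event $\{\max_{0<j\le k}S_j\ge\alpha\}$ coincides with $\{\max_{0<j\le k}Z_j\ge e^{t\alpha}\}$. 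Doob's submartingale maximal inequality then gives, for any $\alpha>0$,
\[
\textup{Pr}\Big[\max_{0<j\le k} S_j \ge \alpha\Big]
= \textup{Pr}\Big[\max_{0<j\le k} Z_j \ge e^{t\alpha}\Big]
\le e^{-t\alpha}\,\mathbb{E}[Z_k]
= e^{-t\alpha}\,\mathbb{E}\big[e^{tS_k}\big].
\]

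The right-hand side is exactly the quantity bounded in the proof of Lemma~\ref{lemma:sum}: using $\mathbb{E}[e^{tY_i}]\le e^{2t^2b^2}$ for $|t|<\tfrac{1}{2b}$ and independence, one gets $e^{-t\alpha}\mathbb{E}[e^{tS_k}]\le e^{-t\alpha+kt^2b^2}$, and choosing $t=\tfrac{\alpha}{2kb^2}$ (admissible whenever $0<\alpha\le kb$) yields $\textup{Pr}[\max_{0<j\le k}S_j\ge\alpha]\le e^{-\alpha^2/(4kb^2)}$ --- the same bound as in Lemma~\ref{lemma:sum}, but now for the maximum. Finally, setting $e^{-\alpha^2/(4kb^2)}=\beta$, i.e.\ $\alpha=2b\sqrt{k\log\tfrac{1}{\beta}}$ (valid once $k>4\log\tfrac{1}{\beta}$, so that $\alpha\le kb$ and the choice of $t$ is admissible, exactly as in Corollary~\ref{col:sum}), completes the proof. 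The only point that deserves care is verifying the admissibility condition $t<\tfrac{1}{2b}$ under the hypothesis $\alpha\le kb$; the corresponding statements for $|S_j|$ or $\min_j S_j$, if ever needed, follow by symmetry of the Laplace distribution.
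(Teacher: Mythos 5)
Your proof is correct (relative to the moment-generating-function bound and optimization already recorded in Lemma~\ref{lemma:sum} and Corollary~\ref{col:sum}, including the caveat $k \geq 4\log\frac{1}{\beta}$ that makes the choice $t=\frac{\alpha}{2kb^2}$ admissible), but it takes a genuinely different route from the paper. The paper's proof applies Corollary~\ref{col:sum} to each fixed partial sum $S_j$, observes that $\sqrt{j\log(1/\beta)} \leq \sqrt{k\log(1/\beta)}$ for $j \leq k$, and from the resulting pointwise bounds $\textup{Pr}\left[S_j \geq 2b\sqrt{k\log(1/\beta)}\right] \leq \beta$ passes directly to the statement about the running maximum. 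You instead attack the maximum head-on: $S_j$ is a mean-zero random walk, hence a martingale, $e^{tS_j}$ is a nonnegative submartingale, and Doob's maximal inequality transfers the single-sum Chernoff computation from $S_k$ to $\max_{0<j\leq k} S_j$ with no loss in the threshold. Your argument is in fact the stronger one: going from ``each $S_j$ individually satisfies the tail bound'' to ``the maximum satisfies it'' is not automatic---a naive union bound over $j$ would degrade $\beta$ to $k\beta$---and your maximal-inequality step is precisely what closes that gap while keeping the constant $2b\sqrt{k\log(1/\beta)}$ claimed in the corollary. (One inherited point of care: like the paper, you write $e^{-t\alpha}\mathbb{E}\left[e^{tS_k}\right] \leq e^{-t\alpha + kt^2b^2}$, although $\mathbb{E}\left[e^{tY_i}\right] \leq e^{2t^2b^2}$ actually yields $e^{-t\alpha + 2kt^2b^2}$; carrying the factor of $2$ through only changes the constant in $\alpha$ from $2b$ to $2\sqrt{2}\,b$, and this originates in Lemma~\ref{lemma:sum} rather than in your maximal-inequality step.)
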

\begin{proof}
According to Corollary~\ref{col:sum}, for any $S_j$, it satisfies that $\textup{Pr}\left[~ S_j \geq 2b\sqrt{j\log{{1}/{\beta}}} ~\right] \leq \beta $.  Since $0 < j \leq k$, then $\frac{\sqrt{j\log{{1}/{\beta}}}} {\sqrt{k\log{{1}/{\beta}}}} \leq 1$, and thus $\forall S_j$, $\textup{Pr}\left[~ S_j \geq 2b\sqrt{k\log{{1}/{\beta}}} ~\right] \leq \beta $ holds as well.
\end{proof}

\begin{proof}({\bf Theorem~\ref{lg:timer}}) According to Algorithm~\ref{algo:timer}, the local cache size direct reflects the logical gap. Thus we prove this theorem by providing the local cache boundaries. Let $W_k$ denote the local cache size after completing $k^{th}$ updates, and $W_0 = 0$. Let $C_k$ denote the number of records received between $k-1^{th}$ and $k^{th}$ updates, and $\tilde{C}_{k}$ as the number of records read from the local cache at $k^{th}$ update. Then we derive the following recursion forms of local cache size:
\begin{equation}\label{eq:lind}
 W_{k} \gets ( W_{k-1} + C_k - \tilde{C}_k )^{+} = ( W_{k-1} - Y_k )^{+}
\end{equation}
where $Y_k$ is the Laplace noise used to distort the true read count at $k^{th}$ update, and the term $(x)^+$ equals to $\max(0,x)$. Note that Equation~\ref{eq:lind} is a Lindley type recursion. Thus if we set $S_{j} \gets Y_1 + Y_2 + ... Y_j$, for all $0<j\leq k$, we have 
\begin{equation}\label{eq:lind-2}
W_k \overset{d}{=} \max_{0<j\leq k} (S_k - S_j) \overset{d}{=} \max_{0<j\leq k} S_j
\end{equation}
where $\overset{d}{=}$ means equality in distribution. Combining Equation~\ref{eq:lind-2} and Corollary~\ref{col:sum-2}, we obtain that with probability at most $\beta$
$$\textup{Pr}\left[~ W_k \overset{d}{=} \max_{0<j\leq k} S_j \geq 2b\sqrt{k\log{{1}/{\beta}}} ~\right]$$
the theorem thus holds.
\end{proof}

\subsection{Proof of Theorem~\ref{sz:timer}}
\begin{proof}
For each time $t$, the total number of outsourced data can be written as:
\begin{equation}
\begin{split}
  |\mathcal{DS}_t| & = \sum_{i=0,1,2...} c_{i+1}^{(i+1)T} + \sum_{j = 1}^{\floor*{t/T}} Y_j + s\floor*{t/f}\\
  & = |\mathcal{D}_t| + s\floor*{t/f} + \sum_{j = 1}^{k} Y_j
\end{split}
\end{equation}

where $Y_j$ is the Laplace noise drawn at each synchronization, and $k$ is the number of total updates been posted so far, thus by applying lemma~\ref{lemma:sum}, we conclude that for any time $t > 4T\log{\frac{1}{\beta}}$, it satisfies that with probability at least $1 - \beta$,  $|\mathcal{DS}_t|$ is bounded by $|\mathcal{D}_t| + s\floor*{t/f} + \frac{2}{\epsilon}\sqrt{k\log{\frac{1}{\beta}}} $, thus the theorem holds.
\end{proof}

\subsection{Proof of Theorem~\ref{lg:ant}}
\eat{
\begin{theorem}
Given timer $\eta$ and an update stream $U$ length of $T$ with $m$ records arrived after initialization, where $m \leq T$. For $\beta \in (0,1)$, when $\alpha \geq \frac{\Delta}{\epsilon}\sqrt{8(T/\eta)\ln(2/\beta)}$ the total number of records $P(U, \eta)$ uploaded after processing stream $U$ satisfy:

$$\textup{Pr}\left[~ (P(U,\eta) \geq m + \alpha ) \cup (P(U,\eta) \leq m - \alpha ) \right] \leq \beta$$

\end{theorem}

\begin{proof}
According to the timer algorithm, we use $S_i$ and $\tilde{S_i}$ denote the total number of records arrived between $i-1$-th and $i$-th synchronizations and the number of records uploaded at the $i$-th synchronization. Let $Y_i = \tilde{S_i} - S_i = \textup{Lap}(\frac{\Delta}{\epsilon})$, and $Y = \sum_{i=1}^{k} Y_i$, where $k = T/\eta$, then

\begin{equation}
\begin{split}
  & ~~\textup{Pr}\left[~ |Y| \geq b\sqrt{8k\ln(2/\beta)} ~\right] \leq \beta\\
  \Rightarrow & ~~\textup{Pr}\left[~ |\sum_i^{k}\tilde{S_i} - S_i| \geq \frac{\Delta}{\epsilon}\sqrt{8k\ln(2/\beta)} ~\right] \leq \beta\\
  \Rightarrow & ~~\textup{Pr}\left[ \left(\sum_i^{k}\tilde{S_i} \geq m + \frac{\Delta}{\epsilon}\sqrt{8k\ln(2/\beta)}\right) \cup \left(\sum_i^{k}\tilde{S_i} \leq m - \frac{\Delta}{\epsilon}\sqrt{8k\ln(2/\beta)}\right) ~\right] \leq \beta\\
\end{split}
\end{equation}

When setting $\alpha \geq \frac{\Delta}{\epsilon}\sqrt{8(T/\eta)\ln(2/\beta)}$,  then the claim holds.
\end{proof}

\begin{theorem}
\label{ag2:perf}
Given an update scheduler stream prefix $U=\{u_t\}_{0\leq t \leq T}$ with length $T$, the event log $E$ with $\Pr[\pi(u_t)=1] = p$, the privacy parameter $\epsilon$, and the threshold is $\theta$. Given $0 \leq \beta \leq 1$, when $\alpha \geq \frac{\Delta \ln(1/\beta)}{\epsilon} $ with probability at least $1-\beta$, the performance of timer method is bounded within the range $(\frac{T}{\eta}(\eta - \alpha), ~\frac{T}{\eta}(\eta + \alpha))$.
\end{theorem}
}

\begin{proof} 
Let $t$ denotes the current time, $c_t$ counts how many records received since last update, and $k$ equals to the total number of synchronizations have been posted until time $t$. We assume a set of timestamps $t' = \{t'_0, t'_1, t'_2, ..., t'_k\}$, where each $t'_i\in t'$ indicates one time unit that $\sync$ signals, and we set $t'_0 = 0$. Let $A =\{a_1, a_2, ..., a_t\}$ as the collection of DP-ANT's outputs, where $a_j \in A$ is either $\perp$ (no sync) or equals to $c_j + \lap(\frac{2}{\epsilon})$, and $\tilde{\theta}_1, \tilde{\theta}_2, ...\tilde{\theta}_k$ to be all previous obtained noisy thresholds until time $t$. Next, we proof this theorem by shwoing the existence of $\alpha > 0$ and $\beta \in (0,1)$, such that with probability at most $\beta$, for all $i \in t'$  it satisfies, $(a_i \neq \perp)\wedge (c_i \leq \theta + \alpha) \wedge (|c_i - a_i| \geq \alpha)$. And for all $i\notin t'$, $(a_i = \perp) \wedge (c_i \geq \theta - \alpha)$. 
In terms of the noise added to the threshold $\theta$, we know that  $\forall_{i = 1,2,...,k} \tilde{\theta}_i \sim \theta + \lap(\frac{4}{\epsilon})$. Then according the {\it Fact 3.7} in~\cite{dwork2014algorithmic}:
\begin{equation}
\begin{split}
& ~~\textup{Pr}\left[ \forall_{i=1,2,...k}|\tilde{\theta}_i - \theta| \geq x \times \frac{4}{\epsilon} \right] = e^{-x}\\
\Rightarrow & ~~\textup{Pr}\left[ \forall_{i=1,2,...k}|\tilde{\theta}_i - \theta| \geq \frac{\alpha}{4} \right] = e^{-\frac{\epsilon\alpha}{16}}\\
\Rightarrow & ~~\textup{Pr}\left[ \sum_{i=1,2,...k}|\tilde{\theta}_i - \theta| \geq \frac{\alpha}{4} \right] = k\times e^{-\frac{\epsilon\alpha}{16}}\\
%\rightarrow & ~~\textup{Pr}\left[ \sum_{i=1}^{k} |\tilde{\theta}_i - \theta| \geq \frac{\alpha}{4} \right] \leq k\times e^{-\frac{\epsilon\alpha}{16}}
\end{split}
\end{equation}
Let $k\times e^{-\frac{\epsilon\alpha}{16}}$ to be at most $\beta/4$, then $\alpha \geq \frac{16(\log{k} + \log{(4/\beta}))}{\epsilon}$.
Similarly, for each time $t$, we know that $\tilde{c}_t - c_t = \lap(\frac{8}{\epsilon})$, thus it satisfies:
\begin{equation}
\begin{split}
 %& ~~\textup{Pr}\left[ \forall_{j=1,2,...t}|\tilde{c}_j - c_j| \geq \frac{\alpha}{2} \right] \leq e^{-\frac{\epsilon\alpha}{16}}\\
 & ~~\textup{Pr}\left[ \forall_{0 < j \leq t} |\tilde{c}_j - c_j| \geq \frac{\alpha}{2} \right] \leq e^{-\frac{\epsilon\alpha}{16}}\\
\Rightarrow & ~~\textup{Pr}\left[ \forall_{j=1,2,...,k}\max_{j=t'_{i-1} + 1}^{t'_{i}} |\tilde{c}_j - c_j| \geq \frac{\alpha}{2} \right] \leq (t'_{i} - t'_{i-1})\times e^{-\frac{\epsilon\alpha}{16}}\\
\Rightarrow & ~~\textup{Pr}\left[ \sum_{j=1,2,...,k}\max_{j=t'_{i-1}+1}^{t'_{i}} |\tilde{c}_j - c_j| \geq \frac{\alpha}{2} \right] \leq (t'_{k} - t'_{0})\times e^{-\frac{\epsilon\alpha}{16}} \leq t\times e^{-\frac{\epsilon\alpha}{16}}
\end{split}
\end{equation}
Let  $t \times e^{-\frac{\epsilon\alpha}{16}}$ to be at most $\beta/2$, we have $\alpha \geq \frac{16(\log{t} + \log{(2/\beta}))}{\epsilon}$. 

Finally, we set the following conditions $\forall_{i\in t'} |a_i - c_i = \lap(\frac{2}{\epsilon})| \geq \alpha$ holds with probability at most $\beta/4$, we obtain $\alpha \geq \frac{4\log{(4/\beta)}}{\epsilon}$. By combining the above analysis, we can obtain if set $\alpha \geq \frac{16(\log{t} + \log{(2/\beta}))}{\epsilon}$ the following holds.
\begin{equation}
\begin{split}
 & ~~\textup{Pr}\left[ LG(t) = \max\left(\sum_{\forall_{j \in t'}} c_j - \sum_{\forall_{j \in t'}} a_j, 0 \right) \geq {\alpha} \right] \leq ~~\textup{Pr}\left[ \sum_{\forall i\in t'} |a_i - c_i|  \geq {\alpha}\right] \\
 \leq & ~~\textup{Pr}\left[ \left(\sum_{i=1}^{k} |\tilde{\theta}_i - \theta|  + \sum_{j=1,2,...,k}\max_{j=t'_{i-1}+1}^{t'_{i}} |\tilde{c}_j - c_j| + \sum_{\forall i\in t'} |a_i - c_i|  \right) \geq {\alpha}\right] \leq \beta. 
\end{split}
\end{equation}
Therefore, for any time $t$, the logical gap under DP-ANT method is greater than $\alpha \geq \frac{16(\log{t} + \log{(2/\beta}))}{\epsilon}$ with probability at most $\beta$.

\eat{

Suppose it requires $\kappa$ times of synchronizations to process update stream $U$, thus we denote $p^i$ as the $p$-cache size at $i$-th synchronization, and $m = \sum_i^{\kappa}p^i$. Let $\hat{p}^{\kappa} = \sum_i^{\kappa} p^i / \kappa$ is the average value of $p^i$ after $\kappa$ times of synchronizations, for $\alpha \geq \frac{9c(\ln k + \ln(4/\beta))}{\epsilon}$ and $\beta \in (0,1)$ then
\begin{equation}
\begin{split}
 & ~~ \textup{Pr}\left[~ \left( \hat{p}^{\kappa} \geq \theta + \alpha \right) \cup \left( \hat{p}^{\kappa} \leq \theta - \alpha \right) ~\right] \leq \beta\\
 \Rightarrow & ~~ \textup{Pr}\left[~ \left( \kappa\hat{p}^{\kappa} \geq \kappa(\theta + \alpha) \right) \cup \left( \kappa \hat{p}^{\kappa} \leq \kappa(\theta - \alpha) \right) ~\right] \leq \beta\\
  \Rightarrow & ~~ \textup{Pr}\left[~ \left( m \geq \kappa(\theta + \alpha) \right) \cup \left( m \leq \kappa(\theta - \alpha) \right) ~\right] \leq \beta\\
\Rightarrow & ~~ \textup{Pr}\left[~ \left( \frac{m}{\theta + \alpha} \geq \kappa \right) \cup \left( \frac{m}{\theta - \alpha}  \leq \kappa \right) ~\right] \leq \beta\\
\end{split}
\end{equation}
Above formulation provides the upper and lower bounds on the total number of synchronization under given condition. Given such bound on $\kappa$, let $a^i$ corresponds to the $|a_t|$ for $i$-th synchronization, and the total records been uploaded is then defined as $\sum_{i=1}^{\kappa} a^i $ and the average value of $a^i$ after $\kappa$ synchronizations is $\hat{a}$, then 
\begin{equation}
\begin{split}
 & ~~ \textup{Pr}\left[~ \left( \kappa\hat{a}^{\kappa} \geq \kappa(\theta + \alpha) \right) ~\right] \leq\frac{\beta}{2}\\
  \Rightarrow & ~\textup{Pr}\left[~ \left( \kappa\hat{a}^{\kappa} \geq \kappa(\theta + \alpha) \right)\cap  \left( \kappa \geq \frac{m}{\theta - \alpha} \right) ~\right] \leq\frac{\beta}{2}\\
\Rightarrow & ~~ \textup{Pr}\left[~ \left( \sum_{i=1}^{\kappa} a^i \geq \frac{m}{\theta - \alpha}(\theta + \alpha) \right) ~\right] \leq \frac{\beta}{2}\\
\Rightarrow & ~~ \textup{Pr}\left[~ \left( \sum_{i=1}^{\kappa} a^i \geq \frac{m}{\theta - \alpha}(\theta + \alpha) \right) \cup  \left( \sum_{i=1}^{\kappa} a^i \leq \frac{m}{\theta + \alpha}(\theta - \alpha) \right) ~\right] \leq \beta\\
\end{split}
\end{equation}
}

\end{proof}

\subsection{Proof of Theorem~\ref{sz:ant}}
\begin{proof}
The size of outsoured data at time $t$ can be written as: 
For each time $t$, let $\eta =  s\floor*{\frac{t}{f}} $ the total number of outsourced data can be written:
\begin{equation}
\begin{split}
  |\mathcal{DS}_t| & = \sum_{\forall_{j \in t'}} \tilde{c}_j - \sum_{\forall_{j \in t'}} c_j + \eta
\end{split}
\end{equation}

Follow the proof of theorem~\ref{lg:ant}:
\begin{equation}
\begin{split}
& \textup{Pr}\left[\sum_{\forall_{j \in t'}} \tilde{c}_j - \sum_{\forall_{j \in t'}} c_j \geq \frac{16(\log{t} + \log{(2/\beta}))}{\epsilon} \right] \leq \beta \\
\rightarrow & \textup{Pr}\left[|\mathcal{DS}_t| \geq |\mathcal{D}_t| + \eta + \frac{16(\log{t} + \log{(2/\beta}))}{\epsilon} \right] \leq \beta
\end{split}
\end{equation}
\end{proof}
%The theorem thus holds.

%\subsubsection{Discussion}

\eat{If $p$ is a very large value close to 1(very dense), then for accuracy the accuracy of ALG1 we have $Err\sim Tk\cdot\frac{\eta-1}{2}$, and if $\theta$ is small enough, $\frac{\eta-1}{2}>>\theta$, then ALG2 performance better for accuracy purpose. While $\frac{T}{\eta-1} << \frac{T}{\theta}$, then AGL2 need more frequent update. Conversely, if we have a very sparse event log stream, then Timer method will provide higher accuracy but cascade buffer will have better performance.

Given an update scheduler stream prefix $U=\{u_t\}_{0\leq t \leq T}$ with length $T$, the event log $E$ with $\Pr[\pi(u_t)=1] = p$, the privacy parameter $\epsilon$, and the expectation of total number of dummy records added each time $n_d$. The updating strategy can be divided in to the following 4 categories according to the number of total updates and the size of each update. The table \ref{tab:cat} shows the classification of all aforementioned algorithms.

\begin{table}[]
\centering
\label{tab:cat}
\caption{Strategy v.s. # of Updates}
\begin{tabular}{l|l|l|}
\cline{2-3}
                                                   & \multicolumn{2}{c|}{{\ul \textbf{\# of Records}}} \\ \hline
\multicolumn{1}{|l|}{{\ul \textbf{Update Scheduler (when)}}} & \textbf{Fixed }  & \textbf{Noisy }  \\ \hline
\multicolumn{1}{|l|}{\textbf{Fixed}}               &    UET(Epoch)                      &   OT, Timer Method                       \\ \hline
\multicolumn{1}{|l|}{\textbf{Noisy}}               &    Randomized Response                      &  Cascading Buffer                        \\ \hline
\end{tabular}
\end{table}

\section{Evaluation}
\subsection{Tradeoff with changing privacy}
\begin{figure}[ht]
\includegraphics[width=0.8\linewidth]{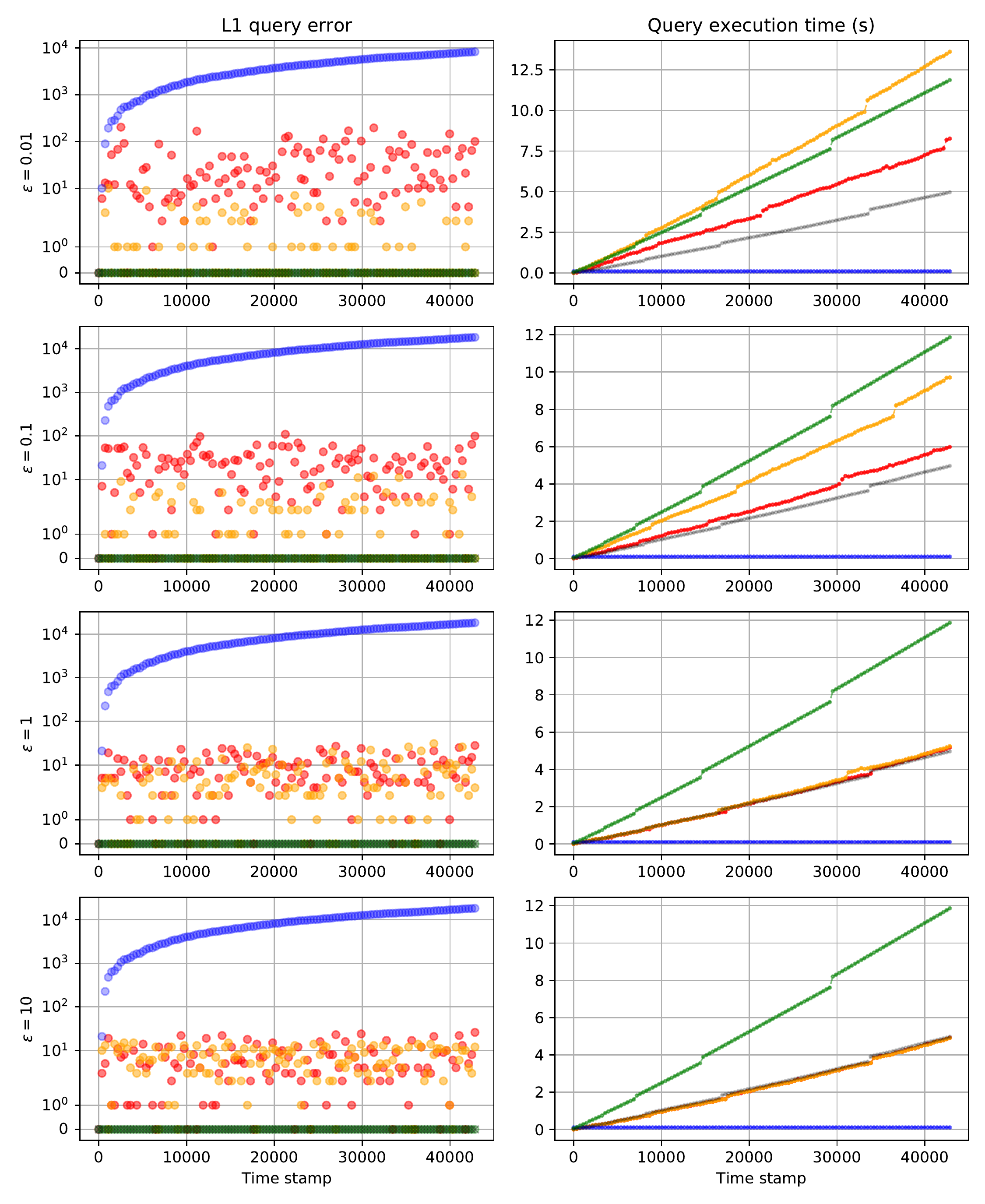}
   \caption{Experiment results with changing $\epsilon$\kartik{0.01, 0.1, 1, 10 on x-axis}}
   \label{trade-off}
\end{figure}

}

\end{document}